\newtheorem{theorem}{Theorem}[section]
\newtheorem{lemma}[theorem]{Lemma}
\newtheorem{proposition}[theorem]{Proposition}
\newtheorem{claim}[theorem]{Claim}
\newtheorem{observation}[theorem]{Observation}
\theoremstyle{definition}
\newtheorem{definition}[theorem]{Definition}
\newtheorem{condition}[theorem]{Condition}
\newtheorem{remark}[theorem]{Remark}
\newcommand{\poly}{\textrm{poly}}
\newcommand{\eps}{\ensuremath{\varepsilon}}
\newcommand{\dout}{d^{\textrm{out}}}
\newcommand{\Dout}{\Delta^{\textrm{out}}}
\newcommand{\sgn}{\text{sgn}}
\newcommand{\E}{\mathbb{E}}
\newcommand{\Var}{\mathbf{Var}}
\providecommand{\abs}[1]{\ensuremath{\left\lvert#1\right\rvert}}
\providecommand{\norm}[1]{\ensuremath{\lVert#1\rVert}}
\title{Sublinear-Time Algorithms for Diagonally Dominant Systems and Applications to the Friedkin-Johnsen Model
}
\newcommand*\samethanks[1][\value{footnote}]{\footnotemark[#1]}
\author{Weiming Feng \thanks{School of Computing and Data Science, The University of Hong Kong, China. Supported by the ECS grant 27202725 from Hong Kong RGC. Email: \tt{wfeng@hku.hk}.}\and Zelin Li \thanks{School of Computer Science and Technology, University of Science and Technology of China, China. Supported in part by NSFC grant 62272431 and Innovation Program for Quantum Science and Technology (Grant No. 2021ZD0302901). Emails: \tt{meguru@mail.ustc.edu.cn, ppeng@ustc.edu.cn}. }\and Pan Peng \samethanks}
\date{}
\begin{document}
\maketitle
\begin{abstract}
We study sublinear-time algorithms for solving linear systems $Sz = b$, where $S$ is a diagonally dominant matrix, i.e., $|S_{ii}| \geq \delta + \sum_{j \ne i} |S_{ij}|$ for all $i \in [n]$, for some $\delta \geq 0$. 
We present randomized algorithms that, for any $u \in [n]$, return an estimate $z_u$ of $z^*_u$ with additive error $\varepsilon$ or $\varepsilon \norm{z^*}_\infty$, where $z^*$ is some solution to $Sz^* = b$, and the algorithm only needs to read a small portion of the input $S$ and $b$. For example, when the additive error is $\varepsilon$ and assuming $\delta>0$, we give an algorithm that runs in time $O\left( \frac{\|b\|_\infty^2 S_{\max}}{\delta^3 \varepsilon^2} \log \frac{\| b \|_\infty}{\delta \varepsilon} \right)$, where $S_{\max} = \max_{i \in [n]} |S_{ii}|$. We also prove a matching lower bound, showing that the linear dependence on $S_{\max}$ is optimal. Unlike previous sublinear-time algorithms, which apply only to symmetric diagonally dominant matrices with non-negative diagonal entries, our algorithm works for general strictly diagonally dominant matrices ($\delta > 0$) and a broader class of non-strictly diagonally dominant matrices $(\delta = 0)$. Our approach is based on analyzing a simple probabilistic recurrence satisfied by the solution. As an application, we obtain an improved sublinear-time algorithm for opinion estimation in the Friedkin--Johnsen model.
\end{abstract}

\section{Introduction}
Solving linear systems is a fundamental computational problem with numerous applications. Given a matrix $ A \in \mathbb{R}^{n \times n} $ and a vector $ b \in \mathbb{R}^n $, the goal is to find a solution $ z \in \mathbb{R}^n $ satisfying $ Az = b $. Numerous polynomial-time algorithms have been developed for this task. For example, researchers have established that, under exact arithmetic (RealRAM model), linear systems can be solved in $O(n^{\omega})$ time, where $ \omega < 2.371339 $ is the matrix multiplication exponent \cite{alman2025more}, through a reduction to fast matrix multiplication. When $ A $ is sparse, the conjugate gradient method applied to the normal equations $ A^\top A z = A^\top b $ solves the system in $ O(\mathrm{nnz}(A) \cdot n) $ time (see, for example, \cite{spielman2010algorithms}). Special classes of linear systems admit even faster algorithms. For instance, \emph{symmetric diagonally dominant (SDD)} matrices---defined by the condition $ a_{ii} \geq \sum_{j \neq i} |a_{ij}| $ for all $ i $---allow nearly linear-time solvers. This class includes the Laplacian matrix $ L_G $ of an undirected graph $ G $. Beginning with the seminal work of Spielman and Teng \cite{spielman2004nearly}, a series of results have demonstrated that systems like $ L_G z = b $ can be solved in nearly linear time (e.g., \cite{koutis2014approaching, cohen2014solving}). The current state-of-the-art runs in $ \tilde{O}(\mathrm{nnz}(A) \log (1/\varepsilon)) $ time, where $ \varepsilon $ is an accuracy parameter and $ \tilde{O}(\cdot) $ suppresses polylogarithmic factors. Extensions to directed graphs have also yielded nearly linear-time solvers for certain Laplacian systems (e.g., \cite{cohen2018solving}). 
%
%
%
%
Solving linear systems where the coefficient matrix is diagonally dominant (DD) also arises naturally in numerous practical settings, such as PageRank or Personalized PageRank (PPR), computing the effective resistance, and machine learning on graphs including semi-supervised learning and label propagation~\cite{10.1007/978-3-031-31975-4_19,DBLP:conf/nips/HoltzCWCM24}.

Motivated by advances in fast linear-system solvers under alternative computational models and the challenges posed by modern large-scale datasets, Andoni, Krauthgamer, and Pogrow~\cite{andoni2019solving} pioneered \emph{sublinear-time algorithms} for solving diagonally dominant linear systems. These algorithms inspect only a sparse subset of the input (the matrix $ A $ and vector $ b $) while preserving rigorous guarantees on output quality.

The work in~\cite{andoni2019solving} addresses settings where only one or a few coordinates of the solution $ z^* \in \mathbb{R}^n $ are sought. Focusing on SDD matrices $S$ with non-negative diagonal entries, their algorithm assumes query access to both $S$ and $b$, which corresponds to vertex queries and neighbor queries in the underlying graph. For any index $ u \in [n] $, it computes an estimate $ \hat{z}_u $ approximating the $ u $-th coordinate of a  solution $ z^* $ to the system $ Sz = b $. The runtime scales polynomially in $ \kappa $, an upper bound on the condition number of $ S $, and becomes sublinear in the input size when $ \kappa $ is small (see more discussions on the runtime below). This framework is applicable to a wide range of applications. Specifically, as noted in \cite{andoni2019solving}, the problem of estimating effective resistances between pairs of vertices in graphs can be reduced to estimating a few coordinates of the solution to a Laplacian linear system. The corresponding algorithm was subsequently improved by Peng et al. \cite{peng2021local}. Additionally, Neumann et al. \cite{neumann2024sublinear} observed that such sublinear-time algorithms can be applied to determine the opinions of specific nodes in social networks within the Friedkin-Johnsen (FJ) model (see further discussions below). Furthermore, the development of sublinear-time linear system solvers has been influenced by recent advancements in quantum algorithms for linear systems \cite{harrow2009quantum,ambainis2012variable,childs2017quantum,dervovic2018quantum} (see \Cref{sec:relatedwork}).

While powerful, the algorithm in \cite{andoni2019solving} is limited to SDD matrices with non-negative diagonal entries, which is a special class of positive semi-definite (PSD) matrices. Their approach involves treating the matrix $ S $ as the Laplacian matrix of an undirected graph $ G $ and performing short-length random walks on the corresponding graph. The estimates are then defined based on the outcomes of these random walks. This method inherently works only for $ S $ being symmetric and having non-negative diagonal entries.  

In this paper, we present new sublinear-time algorithms for solving linear systems as long as the matrix $ S $ is diagonally dominant, i.e., $ |S_{ii}| \geq \sum_{j \neq i} |S_{ij}| $ for all $ i $. \emph{Note that $ S $ is not necessarily symmetric or constrained to have non-negative diagonal entries. In fact, such $ S $ is not necessarily PSD.} 
We also provide a lower bound to demonstrate that our algorithm is nearly optimal. Finally, we apply our algorithm to obtain improved sublinear algorithms for opinion estimation in the Friedkin-Johnsen  model. 

\subsection{Main results}\label{sec:results}
\paragraph{The problem and computational model} 
We study the problem of estimating individual entries of the solution $z^*$ to the linear system $Sz = b$
in sublinear time, where $S \in \mathbb{R}^{n \times n}$ is a $\delta$-diagonally dominant ($\delta$-DD) matrix and $b \in \mathbb{R}^{n}$. The definition of $\delta$-DD matrix is given below.
\begin{definition}[$\delta$-Diagonally Dominant ($\delta$-DD) Matrix]
Let $\delta \geq 0$ be a non-negative parameter. A matrix $S \in \mathbb{R}^{n \times n}$ is called \emph{$\delta$-diagonally dominant} (or \emph{$\delta$-DD}) if
\begin{align*}
    \forall i \in [n], \quad |S_{ii}| \geq \delta + \sum_{j: j\neq i}
    |S_{ij}|.
\end{align*}
If $\delta > 0$, we say that $S$ is \emph{strictly DD}. If $\delta = 0$, we say $S$ is \emph{non-strictly DD}.
\end{definition}


Note that if $S$ is strictly DD, then it is non-singular, so the solution $z^*$ is unique.


Now we specify the query model: the algorithm accesses the information of $S$ and $b$ through queries to an oracle.
To define the oracle, we first model $S \in \mathbb{R}^{n \times n}$ as a directed graph $ G = G(S) = (V, E) $, where $V = [n]$.
For any pair $(u,v)$ with $u \neq v$ and $S_{uv} \neq 0$, there exists a  directed edge $ (u, v) \in E $ with weight $ S_{uv} $. For each vertex $ u \in [n] $, define the out-degree $ \Dout_u $ as the number of out-edges from vertex $ u $, and define the out-weighted degree as $\dout_u = \sum_{v \neq u} |S_{uv}|$. 
Note that $ \Dout_u = \dout_u $ when $ S_{uv} \in \{-1, 0, 1\} $. For $b \in \mathbb{R}^n$, the value $ b_u $ is stored at vertex $ u $.

\begin{definition}[$(S,b)$-oracle]\label{def:oracle}
Let $ S \in \mathbb{R}^{n \times n} $, $ b \in \mathbb{R}^n $, and let $ G = ([n], E) $ be the directed graph defined from $ S $. The $(S,b)$-oracle supports the following queries, each with unit query cost: 
\begin{enumerate}
    \item \textbf{Vertex query:} Given $ u \in [n] $, return $ \Dout_u $, $ \dout_u $, $ S_{uu} $, and $ b_u $. 
    \item \textbf{Neighbor query:} Given $ u \in [n] $ and $ i \in [\Dout_u] $, return the $ i $-th out-neighbor $ v $ of $ u $ (under a fixed but arbitrary ordering).
    \item \textbf{Random walk query:} Given $ u \in [n] $, return a random out-neighbor $ v \neq u $ with probability proportional to $ |S_{uv}| $, along with the edge weight $ S_{uv} $. If $u$ has no out-neighbors, then return a special symbol $\perp$. 
\end{enumerate}
\end{definition}

The vertex query and neighbor query can be viewed as a natural extension of local queries on unweighted undirected graphs to the setting of weighted directed graphs \cite{Goldreich2010}. The random walk query has been considered in more recent works \cite{conf/colt/JinKMSS24,neumann2024sublinear}.

We consider the problem of solving the linear system $Sz = b$ in the query model. Given the access to the $(S,b)$-oracle, an input vertex $u \in [n]$, an error bound $\varepsilon > 0$, and possible other additional information such as $\delta$ and $\Vert b \Vert_\infty$, the algorithm is required to compute a value $z_u$ in sublinear time such that $|z_u-z^*_u|<\varepsilon$, where $z^*$ is the unique solution satisfying $Sz^*=b$.

\subsubsection{Algorithmic results}

We first present our result on strictly diagonally dominant matrices ($\delta$-DD for $\delta > 0$).  We define $S_{\max}=\max_{i\in[n]}|S_{ii}|$.

\begin{theorem} \label{thm:algo}
Let $\delta > 0$, $S \in \mathbb{R}^{n \times n}$ be a $\delta$-diagonally dominant matrix  and  $b \in \mathbb{R}^n$.
There exists an algorithm with access to the $(S,b)$-oracle such that given the value of $\delta$, the value of $\Vert b \Vert_{\infty}$, an error bound $\varepsilon > 0$ and a vertex $u \in [n]$, the algorithm returns a random value $z_u$ with query complexity $O(\frac{\lVert b\rVert_{\infty}^2S_{\max}}{\delta^3\varepsilon^2}\log \frac{\lVert b\rVert_{\infty}}{\delta\varepsilon})$ such that with probability at least $2/3$, $z_u$ satisfies $|z_u-z^*_u|<\varepsilon$, where $z^*$ is the unique solution satisfying $Sz^*=b$.
\end{theorem}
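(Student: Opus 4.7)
The plan is to represent $z^*_u$ as the expectation of a random variable supported on short \emph{signed} random walks on the underlying directed graph $G(S)$. Decomposing $S = D + N$ into its diagonal and off-diagonal parts and setting $M := -D^{-1}N$, the identity $z^* = \sum_{k \geq 0} M^k D^{-1}b$ holds because the $\delta$-dominance gives the row-sum bound $\sum_{j\neq i}|M_{ij}| = \sum_{j\neq i}|S_{ij}|/|S_{ii}| \leq 1 - \delta/|S_{ii}| \leq 1 - \delta/S_{\max}$, so $\|M\|_\infty < 1$ and the Neumann series converges geometrically with ratio $1-\delta/S_{\max}$.

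Probabilistically, at any state $i$ let $p_{ij} := |S_{ij}|/|S_{ii}|$ with accompanying sign $\sigma_{ij} := -\sgn(S_{ij}/S_{ii})$; the residual mass $q_i := 1 - \sum_{j\neq i} p_{ij}$ is at least $\delta/|S_{ii}| \geq \delta/S_{\max}$. I would then consider the walk that, starting at $u$, at the current vertex $i$ stops with probability $q_i$ and otherwise moves to $j$ with probability $p_{ij}$, multiplying an accumulated sign by $\sigma_{ij}$. One vertex query plus (conditionally) one random-walk query per step suffices to simulate a step: $q_i$ is computable from the returned $S_{ii}$ and $\dout_i$, and the random-walk query samples $j$ with probability $|S_{ij}|/\dout_i$, matching the conditional law $p_{ij}/(1-q_i)$. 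When the walk halts at some $v_k$ after $k$ steps, the estimator is
\[
  X \;:=\; \Bigl(\prod_{t=0}^{k-1}\sigma_{v_t v_{t+1}}\Bigr)\cdot\frac{b_{v_k}}{q_{v_k}\, S_{v_k v_k}}.
\]
A term-by-term expansion, using the identity $p_{ij}\sigma_{ij} = -S_{ij}/S_{ii} = M_{ij}$ (which must be verified case-by-case on the signs of $S_{ii}$ and $S_{ij}$), shows $\E[X] = \sum_{k\geq 0}(M^k D^{-1}b)_u = z^*_u$, so $X$ is unbiased.

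To keep each sample cheap, I would truncate the walk at $K$ steps and emit $0$ on the non-stopping event; the resulting bias is controlled by the tail of the Neumann series, which is at most $(1-\delta/S_{\max})^K \cdot (S_{\max}/\delta)\cdot(\|b\|_\infty/\delta)$, falling below $\varepsilon/2$ once $K = O\bigl(\tfrac{S_{\max}}{\delta}\log\tfrac{\|b\|_\infty}{\delta\varepsilon}\bigr)$. The crucial boundedness fact is $|X| \leq \|b\|_\infty/\delta$: the numerator is at most $\|b\|_\infty$ in absolute value, and $q_{v_k}|S_{v_k v_k}| \geq \delta$ cancels the denominator. Averaging $T = O(\|b\|_\infty^2/(\delta^2\varepsilon^2))$ independent copies and applying Chebyshev (with $\Var(X) \leq \|b\|_\infty^2/\delta^2$) or Hoeffding drives the statistical error below $\varepsilon/2$ with probability at least $2/3$, and combining with the truncation bias yields total error $\varepsilon$. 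The query cost is $T\cdot O(K) = O\bigl(\tfrac{\|b\|_\infty^2 S_{\max}}{\delta^3\varepsilon^2}\log\tfrac{\|b\|_\infty}{\delta\varepsilon}\bigr)$, as claimed (the auxiliary $\log(S_{\max}/\delta)$ from the truncation bound is absorbed into the stated logarithm up to constants, since the runtime is quadratic in $\|b\|_\infty/(\delta\varepsilon)$).

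The main obstacle I anticipate is the sign bookkeeping when $S_{ii}$ can be negative: proving $p_{ij}\sigma_{ij} = M_{ij}$ requires splitting on whether $S_{ii}$ and $S_{ij}$ are positive or negative, and only after this verification does the probabilistic expansion align with the Neumann series term by term. Strictness $\delta > 0$ is essential on two counts---ensuring geometric decay of the truncation bias and providing the uniform bound $|X| \leq \|b\|_\infty/\delta$---so the non-strict case must be approached differently, as the paper indicates.
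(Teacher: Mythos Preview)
Your approach is essentially the same as the paper's: the signed random-walk estimator you define coincides with their \textsc{RecursiveSolver} (your termination value $b_{v_k}/(q_{v_k}S_{v_k v_k})$ equals their $\mathrm{sgn}(S_{v_k v_k})b_{v_k}/(|S_{v_k v_k}|-\dout_{v_k})$, and your $\sigma_{ij}$ equals their $\mathrm{sgn}(-S_{ii}S_{ij})$), the bound $|X|\le\|b\|_\infty/\delta$ and the Chebyshev sample count $T=O(\|b\|_\infty^2/(\delta^2\varepsilon^2))$ match their Lemma~3.2 exactly, and both analyses rely on continuation probability decaying like $(1-\delta/S_{\max})$ per step.

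Two small points. First, on truncation the paper does something slightly different: rather than cut at a fixed length $K$, their \textsc{TruncatedRecursiveSolver} tracks the running continuation probability $p$ and halts when $p\le 1/(6T)$, then argues by a total-variation coupling that this changes each sample's law by at most $1/(6T)$; your fixed-$K$ cutoff with a direct bias bound is simpler and equally effective for the worst-case complexity. Second, your absorption of the extra $\log(S_{\max}/\delta)$ is not justified as written---$S_{\max}/\delta$ and $\|b\|_\infty/(\delta\varepsilon)$ are independent parameters. The clean fix is to bound the truncation bias not by the Neumann tail but by $|\E[X\cdot\mathbf{1}\{T>K\}]|\le(\|b\|_\infty/\delta)\Pr[T>K]\le(\|b\|_\infty/\delta)(1-\delta/S_{\max})^K$, using the very boundedness $|X|\le\|b\|_\infty/\delta$ you already established; this removes the spurious $S_{\max}/\delta$ factor and gives exactly $K=O\bigl(\tfrac{S_{\max}}{\delta}\log\tfrac{\|b\|_\infty}{\delta\varepsilon}\bigr)$.
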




\begin{remark}[Median trick]\label{remark:median}
    The success probability $2/3$ is an arbitrary constant greater than $1/2$. To achieve $1 - \delta$ success probability for an arbitrary $\delta > 0$, one can independently run the algorithm with  $2/3$ success probability for $O(\log \frac{1}{\delta})$ times and then take the median of all the outputs.
\end{remark}

\Cref{thm:algo}  considers the additive approximation error. 
As a corollary, we can also obtain a sublinear-time algorithm that achieves relative error with respect to $\Vert z^*\Vert_\infty$.

\begin{restatable}{corollary}{1.1}\label{cor:rel}
Let $\delta > 0$. There exists an algorithm with access to $(S,b)$-oracle for any $\delta$-diagonally dominant matrix $S$ and let $b \in \mathbb{R}^n$ be a vector such that for given the value of $\delta$, the value of $\Vert b \Vert_{\infty}$, the value of $S_{\max}$, an error bound $\varepsilon > 0$, and a vertex $u \in [n]$,
the algorithm returns a random value $z_u$ with query complexity $O\left( \frac{S_{\max}^3}{\delta^3\varepsilon^2}\log \frac{S_{\max}}{\delta\varepsilon}\right)$ such that with probability at least $2/3$, $z_u$ satisfies $|z_u-z^*_u|<\varepsilon||z^*||_{\infty}$, where $z^*$ is the solution satisfying $Sz^*=b$. 
\end{restatable}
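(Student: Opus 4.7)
The plan is to derive \Cref{cor:rel} from \Cref{thm:algo} by running the additive-error algorithm with a rescaled tolerance $\varepsilon'$ chosen so that $\varepsilon' \le \varepsilon\,\|z^*\|_\infty$. The key ingredient is a simple a priori lower bound on $\|z^*\|_\infty$ computable from the quantities supplied to the algorithm, namely $\|b\|_\infty$ and $S_{\max}$.

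First I would establish the elementary inequality $\|z^*\|_\infty \ge \|b\|_\infty/(2S_{\max})$. Writing $\|b\|_\infty = \|S z^*\|_\infty \le \|S\|_\infty\,\|z^*\|_\infty$, where $\|S\|_\infty$ denotes the maximum absolute row sum, it suffices to show $\|S\|_\infty \le 2S_{\max}$. Indeed, $\delta$-diagonal dominance gives $\sum_{j\neq i}|S_{ij}|\le |S_{ii}|-\delta$, so the absolute sum of the $i$-th row is at most $2|S_{ii}|-\delta\le 2S_{\max}$. The degenerate case $\|b\|_\infty=0$ forces $z^*=0$ by the uniqueness of the solution (guaranteed because $\delta>0$), so the algorithm detects this directly and returns $0$.

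Second, assuming $\|b\|_\infty>0$, I would set $\varepsilon':=\varepsilon\,\|b\|_\infty/(2S_{\max})$, a quantity computable from the inputs listed in \Cref{cor:rel}, and invoke the algorithm of \Cref{thm:algo} with additive error $\varepsilon'$. Its output $z_u$ satisfies $|z_u-z^*_u|<\varepsilon'$ with probability at least $2/3$, and by the bound from Step~1 we have $\varepsilon'\le \varepsilon\,\|z^*\|_\infty$, yielding the claimed relative-error guarantee.

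Finally, I would verify the query complexity by substituting $\varepsilon'$ into the runtime from \Cref{thm:algo}:
\begin{align*}
\frac{\|b\|_\infty^{\,2}\, S_{\max}}{\delta^3 \varepsilon'^{\,2}}\log\frac{\|b\|_\infty}{\delta\varepsilon'}
= \frac{4\,S_{\max}^{\,3}}{\delta^3\varepsilon^2}\log\frac{2S_{\max}}{\delta\varepsilon}
= O\!\left(\frac{S_{\max}^{\,3}}{\delta^3\varepsilon^2}\log\frac{S_{\max}}{\delta\varepsilon}\right),
\end{align*}
matching the stated bound. The whole argument is a short reduction and I do not anticipate any real obstacle beyond the elementary row-sum bound $\|S\|_\infty\le 2S_{\max}$; the main point to check carefully is that $\varepsilon'$ depends only on the inputs named in the statement, so that no extra oracle queries are needed to set it up.
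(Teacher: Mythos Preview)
Your proposal is correct and matches the paper's proof essentially line for line: both set $\varepsilon' = \varepsilon\,\|b\|_\infty/(2S_{\max})$ and invoke \Cref{thm:algo}, with the same key inequality $\|z^*\|_\infty \ge \|b\|_\infty/(2S_{\max})$ derived from the row-sum bound that diagonal dominance gives. The only cosmetic difference is that the paper phrases the intermediate step as $\|D^{-1}b\|_\infty \le 2\|z^*\|_\infty$ (where $D=\mathrm{diag}(S_{11},\dots,S_{nn})$) rather than your equivalent $\|S\|_\infty \le 2S_{\max}$.
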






We next consider the case that $S$ is non-strictly diagonally dominant matrices, which means for all $i \in [n]$, $|S_{ii}| \geq \sum_{j \in [n]:j \neq i}|S_{ij}|$. 
We consider matrices $S$ satisfying the following condition.
\begin{condition}\label{cond:S} $S$ is non-strictly DD and $(S,b)$ satisfies one of the following two conditions
\begin{enumerate}
    \vspace{-0.5em}
    \item $S$ is non-singular;
    \vspace{-0.5em}
    \item $S$ is symmetric, and all the diagonal entries are non-zero and have the same sign. 
\end{enumerate}
\end{condition}
\vspace{-0.5em}
Define $\infty$-norm condition number of $S$ as $\kappa_{\infty}(S)=\lVert S\rVert_{\infty}\lVert S^{-1}\rVert_{\infty}$, where $S^{-1}$ is the pseudo-inverse of $S$ if $S$ is singular\footnote{See definitions of pseudo-inverse and matrix norm $\Vert \cdot \Vert_\infty$ in \Cref{sec:preliminaries}}. 
Let $z^*$ be a solution to $Sz = b$. 
Under the first condition of in \Cref{cond:S}, $z^*$ is unique. Under the second condition, $S$ is symmetric and we define $z^* = S^{-1}b$, where $S^{-1}$ is the pseudo-inverse of the symmetric matrix $S$.

\begin{restatable}{theorem}{cond}\label{thm:cond}
There exists an algorithm with access to $(S,b)$-oracle for any $S$ satisfying \Cref{cond:S} and any $b$ in the range of $S$ such that given the values of $\kappa_\infty(S)$, $\Vert b \Vert_{\infty}$, $S_{\max}$, and a vertex $u \in [n]$,
the algorithm returns a random value $z_u$ with query complexity $O\left( \frac{\kappa_\infty(S)^3}{\varepsilon^5}\log \frac{\kappa_\infty(S)}{\varepsilon}\right)$ such that with probability at least $2/3$, $z_u$ satisfies $|z_u-z^*_u|<\varepsilon\norm{z^*}_{\infty}$, where $z^* = S^{-1}b$. 
\end{restatable}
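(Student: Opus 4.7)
The plan is to reduce to the strictly DD case via a diagonal perturbation and then invoke \Cref{cor:rel} as a black box. Concretely, set $\eta := c\varepsilon S_{\max}/\kappa_\infty(S)$ for a small constant $c$ and let $\widetilde S := S + \eta D$ where $D := \mathrm{diag}(\mathrm{sgn}(S_{11}),\dots,\mathrm{sgn}(S_{nn}))$. Under \Cref{cond:S} every $S_{ii}$ is non-zero: if $S_{ii}=0$, then the non-strict DD property forces the entire $i$-th row of $S$ to vanish, contradicting non-singularity in case 1 and the non-zero-diagonal hypothesis in case 2. Hence $|\widetilde S_{ii}| = |S_{ii}|+\eta$, so $\widetilde S$ is strictly $\eta$-DD. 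Since $\widetilde S$ differs from $S$ only on the diagonal, the $(\widetilde S, b)$-oracle is simulated with $O(1)$ overhead from the $(S, b)$-oracle (out-degrees, weighted out-degrees, and random-walk distributions are unchanged, and only the value $S_{uu}$ returned on a vertex query is shifted by $\pm \eta$).

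Next I will establish the perturbation bound $\lVert \widetilde z^{*} - z^{*}\rVert_\infty \leq O(\varepsilon) \lVert z^{*}\rVert_\infty$ for $\widetilde z^{*} := \widetilde S^{-1} b$. In case 1, subtracting $Sz^{*}=b$ from $\widetilde S \widetilde z^{*} = b$ yields $\widetilde z^{*} - z^{*} = -\eta S^{-1} D \widetilde z^{*}$; using $\lVert D\rVert_\infty = 1$ and $\lVert S^{-1}\rVert_\infty \leq \kappa_\infty(S)/S_{\max}$ gives $\lVert \widetilde z^{*} - z^{*}\rVert_\infty \leq c\varepsilon \lVert \widetilde z^{*}\rVert_\infty$, from which $\lVert \widetilde z^{*}\rVert_\infty = \Theta(\lVert z^{*}\rVert_\infty)$ and the desired bound follow. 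Case 2 requires an extra step: here $D = \pm I$, so $\widetilde S = S \pm \eta I$ is (positive or negative) definite and every null vector $u$ of $S$ is a $\pm \eta$-eigenvector of $\widetilde S$. Since $b \in \mathrm{range}(S) = \mathrm{null}(S)^\perp$, for any $u \in \mathrm{null}(S)$ we have $\langle u, \widetilde z^{*}\rangle = \langle \widetilde S^{-1} u, b\rangle = \pm \eta^{-1}\langle u, b\rangle = 0$, so $\widetilde z^{*} \in \mathrm{range}(S)$; combined with $z^{*} \in \mathrm{range}(S)$, the identity $S(\widetilde z^{*} - z^{*}) = \mp \eta \widetilde z^{*}$ inverts against the pseudo-inverse $S^{-1}$ on $\mathrm{range}(S)$, giving $\widetilde z^{*} - z^{*} = \mp \eta S^{-1} \widetilde z^{*}$ and the same bound.

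Finally, I apply \Cref{cor:rel} to $(\widetilde S, b)$ with DD parameter $\eta$ and relative error $\varepsilon' = \varepsilon / c'$ for a small constant $c'$, producing $\widehat z_u$ with $|\widehat z_u - \widetilde z^{*}_u| \leq \varepsilon' \lVert \widetilde z^{*}\rVert_\infty = O(\varepsilon) \lVert z^{*}\rVert_\infty$. The triangle inequality combined with the perturbation bound yields $|\widehat z_u - z^{*}_u| < \varepsilon \lVert z^{*}\rVert_\infty$ after tuning the constants $c, c'$. With $\widetilde S_{\max} = \Theta(S_{\max})$ and $1/\eta = \Theta(\kappa_\infty(S)/(\varepsilon S_{\max}))$, the complexity becomes
\[
O\!\left(\frac{\widetilde S_{\max}^{3}}{\eta^{3}\varepsilon'^{2}}\log\frac{\widetilde S_{\max}}{\eta\varepsilon'}\right) = O\!\left(\frac{\kappa_\infty(S)^{3}}{\varepsilon^{5}}\log\frac{\kappa_\infty(S)}{\varepsilon}\right),
\]
matching the theorem statement.

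I expect the main obstacle to be the pseudo-inverse argument in case 2 of \Cref{cond:S}: when $S$ is singular, one must check carefully that $\widetilde z^{*}$ lies in $\mathrm{range}(S)$ before the relation $S(\widetilde z^{*} - z^{*}) = \mp \eta \widetilde z^{*}$ can be inverted against $S^{-1}$. This hinges on two facts specific to case 2: the symmetry of $S$ (so that $\mathrm{range}(S) = \mathrm{null}(S)^\perp$) and the shape of the perturbation $\widetilde S = S \pm \eta I$ (which preserves eigenvectors), so that null vectors of $S$ become $\pm \eta$-eigenvectors of $\widetilde S$ and are orthogonal to $b$.
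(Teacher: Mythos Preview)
Your proposal is correct and follows essentially the same approach as the paper: perturb $S$ to a strictly $\eta$-DD matrix with $\eta = \Theta(\varepsilon S_{\max}/\kappa_\infty(S))$, bound the solution shift by $O(\varepsilon)\lVert z^{*}\rVert_\infty$ separately in the two cases of \Cref{cond:S}, and invoke \Cref{cor:rel} on the perturbed system. Your handling of the singular symmetric case---showing $\widetilde z^{*}\in\mathrm{range}(S)$ from the eigenvector structure of $\widetilde S = S \pm \eta I$ and then inverting $S(\widetilde z^{*}-z^{*}) = \mp\eta\widetilde z^{*}$ against the pseudo-inverse on that subspace---is precisely the step the paper carries out via its projection $\Pi$.
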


\paragraph{Comparison to \cite{andoni2019solving}.}
We compare our results with those in \cite{andoni2019solving}, who study a similar problem under different assumptions. Let $D = \mathrm{diag}(S_{11}, \dots, S_{nn})$ and define $\tilde{S} = D^{-1/2} S D^{-1/2}$. Their algorithm outputs $z_u$ such that $|z_u - z^*_u| < \varepsilon \|z^*\|_{\infty}$ in time $\tilde{O}(\kappa^3 / \varepsilon^2)$, where $\kappa$ is the $2$-norm condition number of $\tilde{S}$ and $\tilde{O}(\cdot)$ hides polylogarithmic factors. 
In contrast, our \Cref{thm:algo} achieves absolute additive error $\varepsilon$, while \Cref{cor:rel} and \Cref{thm:cond} guarantee additive error $\varepsilon \|z^*\|_{\infty}$, which may yield different accuracies depending on $\|z^*\|_\infty$. Furthermore, for networks where degree differences are not significant, such as Small-World and Erdős–Rényi (ER) networks, in the Friedkin-Johnsen model, our algorithms achieve better query complexity. Additionally, their algorithm applies only to symmetric DD matrices with strictly positive diagonal entries, whereas our method applies to general DD matrices without such restrictions. The runtime of their algorithm depends on the condition number $\kappa$, while our runtime is characterized by different parameters. In certain cases, such as when $S = \delta I+L$ for $L$ being the Laplacian of an undirected regular graph, both algorithms achieve similar complexity up to polylogarithmic factors since $\kappa = \Theta(S_{\max}/\delta)$ and our result in \Cref{cor:rel} matches theirs. However, when $S$ is the Laplacian of an undirected regular graph (i.e., $\delta = 0$), their algorithm can be faster because $\kappa_2(\tilde{S}) \leq \kappa_\infty(S)$.

\subsubsection{Lower bound} Note that when $\norm{b}_\infty\leq 1$, then the query complexity of the algorithm from \Cref{thm:algo} is $O\left( \frac{S_{\max}}{\delta^3 \varepsilon^2} \log \frac{1}{\delta \varepsilon} \right)$. 
The following lower bound shows our algorithm has an optimal dependence on $S_{\max}$.
Consider the case that $S$ is non-singular.
We say a sublinear 
time algorithm $\mathcal{A}$ with access to $(S,b)$-oracle solves the $Sz = b$ linear system with additive error $\varepsilon$ if for any $u \in [n]$, it returns a random value $z_u$ satisfying $|z_u-z_u^*|\leq \varepsilon$ with probability at least $2/3$, where $z^*$ is the unique solution satisfying $Sz^*=b$.
Let $\Delta_{\max}^S$ denote the maximum out-degree of graph $G$ constructed from $S$, i.e., $\Delta_{\max}^S = \max_{u \in [n]}|\{ v\in [n]\mid v \neq u \land S_{uv} \neq 0\}|$.

\begin{restatable}{theorem}{thm:lb}\label{thm:lb}
There exists a small constant $\varepsilon_0 > 0$ for which the following hardness result holds:
For any sublinear-time algorithm $\mathcal{A}$ that, given a $1$-diagonally dominant matrix $S \in \mathbb{R}^{n \times n}$ satisfying $S_{\max} = \Theta(\sqrt{n})$ and $\Delta^S_{\max} = O(1)$, and any Boolean vector $b \in \{0,1\}^n$, solves the linear system $Sz = b$ with additive error $\varepsilon_0$, the algorithm $\mathcal{A}$ must make $\Omega(S_{\max})$ queries to the $(S,b)$-oracle.
\end{restatable}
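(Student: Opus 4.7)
The plan is to apply Yao's minimax principle with a hidden-bit ``chain gadget''. Let $m = \lfloor\sqrt{n}\rfloor$. Designate the input vertex as a root $u$, and sample a uniformly random injection assigning the $m$ chain ``roles'' $v_1, \ldots, v_m$ to labels in $[n]\setminus\{u\}$; the $n-m-1$ unused labels play dummy roles. Define $S$ by $S_{uu} = S_{v_iv_i} = m$ for $i < m$, $S_{v_m v_m} = 1$, $S_{uv_1} = S_{v_iv_{i+1}} = -(m-1)$, and $S_{ww} = 1$ with no out-edges for each dummy $w$. One checks that $S$ is $1$-diagonally dominant, $S_{\max} = m = \Theta(\sqrt{n})$, and $\Delta^S_{\max} = 1$. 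Set $b$ identically zero except $b_{v_m} \in \{0,1\}$ drawn uniformly. Back-substituting the chain equations from $v_m$ up to $u$ gives $z^*_u = (1 - 1/m)^m b_{v_m}$, which is $0$ when $b_{v_m} = 0$ and at least $1/4$ when $b_{v_m} = 1$ (for $m \geq 2$). Fixing $\varepsilon_0 = 1/10$ makes the two intervals $[-\varepsilon_0, \varepsilon_0]$ and $[z^*_u - \varepsilon_0, z^*_u + \varepsilon_0]$ disjoint, so any $\varepsilon_0$-accurate output must identify $b_{v_m}$.

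The key step is to show that any deterministic $q$-query algorithm with $q = o(\sqrt{n})$ almost never gathers information about $b_{v_m}$. I would use the principle of deferred decisions, revealing the random labeling only as each new label first enters the algorithm's query sequence. Because every chain interior vertex has a unique out-edge, neighbor queries and random-walk queries return only structural data (the edge to $v_{i+1}$ and its weight) that is identical regardless of $b_{v_m}$; hence the only query that can reveal $b_{v_m}$ is a vertex query whose argument equals the label of $v_m$. Define a progress measure $\Phi_t \in \{0,1,\ldots,m\}$, setting $v_0 := u$, as the largest $j$ such that the label of $v_j$ has appeared in the algorithm's view after $t$ queries. Then $\Phi$ grows by at most one per query via \emph{chain extension} (a query on the currently-known end of the chain, returning the next vertex's label), plus a possible \emph{jump} when a \emph{cold} query on a fresh label happens to match some $v_j$. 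Each cold query matches a chain role with probability at most $m/(n-q) = O(1/\sqrt{n})$, so a union bound gives $\Pr[\text{some cold query hits the chain}] = O(q/\sqrt{n}) = o(1)$. Conditional on this event not occurring, no jumps happen, $\Phi_q \leq q < m$, and the algorithm never queries the label of $v_m$. A standard total-variation argument then caps the algorithm's distinguishing probability at $1/2 + O(q/\sqrt{n})$, strictly below $2/3$ once $q$ is a sufficiently small constant fraction of $\sqrt{n} = \Theta(S_{\max})$.

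The main obstacle is handling the jump possibility cleanly: a single lucky cold query could drop the algorithm onto a chain vertex $v_j$ with $j$ close to $m$, after which only a handful of chain extensions suffice to reach $v_m$. The progress-measure argument above finesses this by charging \emph{any} cold hit on the chain, not only hits close to $v_m$; this absorbs all adaptive strategies (including adversarial mixes of cold exploration and chain extension) into the single $O(q/\sqrt{n})$ cap. A secondary point to verify is that vertex queries at dummies and at $v_m$ under $b_{v_m} = 0$ return the identical response $(0,0,1,0)$, while under $b_{v_m} = 1$ the response at $v_m$ becomes $(0,0,1,1)$; thus view-divergence is entirely controlled by the discovery event, and Yao's principle promotes the bound on deterministic algorithms to the claimed $\Omega(S_{\max})$ lower bound against randomized algorithms.
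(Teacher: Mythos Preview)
Your chain-gadget argument is correct and proves the theorem as stated. The back-substitution giving $z^*_u = (1-1/m)^m b_{v_m}$ is right, the $1$-DD, $S_{\max}=\Theta(\sqrt n)$, $\Delta^S_{\max}=1$ checks all pass, and the progress-measure plus cold-hit union bound cleanly controls the total-variation distance. The one detail worth spelling out in a full write-up is that a cold query on the label of $v_m$ itself counts as a ``cold hit on the chain'' (since $v_m$ is a chain vertex), so the case split into ``no cold hit $\Rightarrow$ $\Phi_q\le q<m$'' really does cover every way of discovering $b_{v_m}$.

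Your route is genuinely different from the paper's and considerably more elementary. The paper builds $S=I+L$ where $L$ is the Laplacian of an \emph{undirected} weighted graph assembled from two Ramanujan expanders $G'$ and $B$ joined by a single edge, with all edge weights equal to $k=\sqrt n$. Establishing the solution gap there (their Lemma~\ref{lem:diff}) requires hitting-time and mixing-time bounds for random walks on expanders; the indistinguishability argument (Lemma~\ref{lem:tv}) is a coupling similar in spirit to yours but with more bookkeeping because the knowledge graph can grow in less controlled ways. Your directed chain sidesteps all of the spectral machinery: the solution gap is a one-line recurrence, and the out-degree-$1$ structure makes the progress measure exact. What the paper's heavier construction buys is that the hard instance is \emph{symmetric} and literally of Friedkin--Johnsen form $S=I+L$; this is what lets them transfer the lower bound to the FJ opinion-estimation problem (see Remark~\ref{remark:lb} and the optimality claim after Theorem~\ref{thm:fj}). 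Your asymmetric chain does not have this form, so while it fully proves Theorem~\ref{thm:lb}, it would not directly yield the FJ-model lower bound the paper derives as a corollary.
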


\begin{remark}\label{remark:lb}
In the proof of \Cref{thm:lb}, we construct the hard instance matrix $S \in \mathbb{R}^{n \times n}$ as $S = I+L$, where $I$ is the identity matrix, $L$ is the Laplacian matrix of an undirected weighted regular graph $G_L$ with constant degree, and each edge in $G_L$ has weight $\Theta(\sqrt{n})$. 
\end{remark}

Taking $\delta = 1$, $\varepsilon = \varepsilon_0 = \Theta(1)$, and $\Vert b \Vert_\infty = 1$, the algorithm in \Cref{thm:algo} achieves query complexity $O(S_{\max})$, which matches the lower bound in \Cref{thm:lb}. For the hard instances constructed in \Cref{thm:lb}, we have $S_{\max} = \Theta(d^S_{\max})$, where $d^S_{\max} = \max_{u}\sum_{v \neq u} |S_{uv}|$ is the maximum weighted out-degree of the graph $G$ constructed from $S$. This lower bound shows that even when the maximum out-degree $\Delta^S_{\max}$ is constant, the running time must be at least linear in the maximum weighted out-degree $d^S_{\max}$. When expressed in terms of $n$, \Cref{thm:lb} establishes an $\Omega(\sqrt{n})$ lower bound on the query complexity.

\paragraph{Comparison to \cite{andoni2019solving}.}
Andoni, Krauthgamer, and Pogrow~\cite{andoni2019solving} also studied lower bounds for sublinear-time algorithms. They proved that for sparse matrices $S$ (where $\Delta^S_{\max} = O(1)$) with condition number $\kappa_2(S) \leq 3$, any algorithm that solves the linear system with additive error $\frac{1}{5}\Vert z^* \Vert_\infty$ must probe at least $n^{\Omega(1/(\Delta^S_{\max})^2)}$ entries of $b$. Their lower bound holds in a stronger computational model where the entire matrix $S$ is known to the algorithm in advance. In contrast, our work establishes an $\Omega(\sqrt{n})$ lower bound on the total number of queries required in our query model to achieve $O(1)$ additive error, where the algorithm has no prior knowledge of $S$ or $b$. 
While they~\cite{andoni2019solving} study the lower bound when $S$ is a Laplacian matrix $L$, we focus on the case $S = I+L$, which is particularly relevant for the Friedkin-Johnsen model discussed below.


\subsubsection{Applications to the Friedkin-Johnsen opinion dynamics} The
  Friedkin--Johnsen (FJ) model is a popular model for studying opinion formation processes in social networks ~\cite{friedkin1990social}. 
Neumann et al.\ applied the solver from~\cite{andoni2019solving} to the opinion estimation problem in the Friedkin–Johnsen model \cite{neumann2024sublinear}. Let $G_{\text{FJ}} = (V, E, w)$ be an undirected, weighted graph. Each node $u$ has an \emph{innate opinion} $b_u \in [0,1]$ and an \emph{expressed opinion} $z_u \in [0,1]$. While the innate opinions are fixed, the expressed opinions evolve over time $t$ according to the update rule:
    \begin{align}
        \label{eq:update-rule}
        z_u^{(t+1)} = \frac{b_u + \sum_{(u,v) \in E} w_{uv} z_v^{(t)}}{1 + \sum_{(u,v) \in E} w_{uv}},
    \end{align}
    i.e., each node updates its expressed opinion as a weighted average of its own innate opinion and the expressed opinions of its neighbors. It is known that as $t \to \infty$, the expressed opinions converge to the equilibrium $z^* = (I + L)^{-1} b$, where $I$ is the identity matrix and $L = D - A$ is the graph Laplacian of $G_{\text{FJ}}$, with $D$ as the weighted degree matrix and $A$ as the weighted adjacency matrix.

    It was observed by Neumann, Dong and Peng in~\cite{neumann2024sublinear} that $z_u^*$ can be approximated using the sublinear-time solver from~\cite{andoni2019solving}, yielding an estimate $\tilde{z}_u^*$ such that
    \[
        |\tilde{z}_u^* - z_u^*| \leq \varepsilon
    \]
    with probability $1 - \frac{1}{s}$, where $s = O(\kappa \log(\varepsilon^{-1} n \kappa \max_u w_u))$ and $\kappa$ is an upper bound on the 2-norm condition number of $(I+D)^{-\frac12}(I + L)(I+D)^{-\frac12}$. The algorithm runs in time $O(\varepsilon^{-2} s^3 \log s) = \tilde{O}(\frac{\kappa^3}{\varepsilon^2})$.
Furthermore, \cite{neumann2024sublinear} gave a \emph{deterministic} sublinear-time algorithm for estimating $z_u^*$ based on a sublinear-time algorithm for computing personalized PageRank~\cite{andersen2006local}. 
This algorithm works if $G_{\text{FJ}}$ is a regular graph and the query complexity is  $(W/\varepsilon)^{O(W\log(1/\varepsilon))}$, where $W$ is the weighted degree of the graph $G_{\text{FJ}}$.


We apply our new sublinear-time algorithm in \Cref{thm:algo} to the Friedkin-Johnsen model. We set $S = I+L$ and we define $W$ is the maximum \emph{weighted} degree of the graph $G_{\text{FJ}}$ for Friedkin-Johnsen model such that $W = \max_{u \in V} \sum_{v: \{u,v\} \in E} w_{uv}.$ 
We apply our new sublinear-time algorithm in \Cref{thm:algo} to the Friedkin-Johnsen model. We set $S = L + I$ and we define $W$ as the maximum \emph{weighted} degree of the graph $G_{\text{FJ}}$ for Friedkin-Johnsen model such that $W = \max_{u \in V} \sum_{v: \{u,v\} \in E} w_{uv}.$ 
Then, the parameter $S_{\max} = W + 1$.
Note that for Friedkin-Johnsen model, $S = I+L$ is $1$-diagonally dominant and $\Vert b \Vert_\infty \leq 1$.
We can obtain the following fast sublinear-time algorithm. 
\begin{theorem}\label{thm:fj}
For the Friedkin-Johnsen model, there exists an algorithm with access to the $(I+L,b)$-oracle such that given an error bound $\varepsilon > 0$ and a vertex $u \in [n]$, the algorithm returns a random value $z_u$ with query complexity $O(\frac{W+1}{\varepsilon^2}\log \frac{1}{\varepsilon})$ such that with probability at least $2/3$, $z_u$ satisfies $|z_u-z^*_u|<\varepsilon$, where $z^*$ is the unique solution satisfying $(I+L)z^*=b$.
\end{theorem}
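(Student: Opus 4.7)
The plan is to derive \Cref{thm:fj} as a direct corollary of \Cref{thm:algo} by carefully verifying the three parameters $\delta$, $\|b\|_\infty$, and $S_{\max}$ for the matrix $S = I + L$ arising from the Friedkin-Johnsen graph $G_{\text{FJ}}$. I would organize the proof as a short parameter-matching argument followed by one sentence invoking \Cref{thm:algo}.

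First, I would verify that $S = I + L$ is $1$-diagonally dominant, i.e., $\delta = 1$ in the notation of \Cref{thm:algo}. Using the standard decomposition $L = D - A$ where $D$ is the weighted degree matrix and $A$ is the weighted adjacency matrix of $G_{\text{FJ}}$, the diagonal entries of $S$ are $S_{uu} = 1 + \sum_{v: \{u,v\}\in E} w_{uv}$, while the off-diagonal entries satisfy $|S_{uv}| = w_{uv}$ for $\{u,v\} \in E$ and $S_{uv} = 0$ otherwise. Hence for every $u \in V$,
\begin{align*}
    |S_{uu}| \;=\; 1 + \sum_{v \neq u} |S_{uv}|,
\end{align*}
which shows $S$ is $\delta$-DD with $\delta = 1$. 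The same computation yields $S_{\max} = \max_u |S_{uu}| = 1 + W$, where $W$ is the maximum weighted degree of $G_{\text{FJ}}$. Since the innate opinions satisfy $b_u \in [0,1]$ for all $u$, we also have $\|b\|_\infty \leq 1$.

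Second, I would observe that the $(I+L, b)$-oracle is granted by hypothesis, so vertex, neighbor, and random-walk queries to $S = I+L$ are available at unit cost. (If desired, one could also note that these queries are implementable from natural graph queries to $G_{\text{FJ}}$: the diagonal $S_{uu}$ is $1 + \dout_u$ where $\dout_u$ is the weighted degree, and each off-diagonal weight is $-w_{uv}$.) Plugging $\delta = 1$, $\|b\|_\infty \leq 1$, and $S_{\max} = W + 1$ into the query bound of \Cref{thm:algo} gives
\begin{align*}
    O\!\left( \frac{\|b\|_\infty^2 \, S_{\max}}{\delta^3 \varepsilon^2} \log \frac{\|b\|_\infty}{\delta \varepsilon}\right) \;=\; O\!\left( \frac{W+1}{\varepsilon^2} \log \frac{1}{\varepsilon} \right),
\end{align*}
and the additive error guarantee $|z_u - z^*_u| < \varepsilon$ with probability at least $2/3$ transfers verbatim, yielding \Cref{thm:fj}.

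There is essentially no technical obstacle here: the entire argument is a substitution of parameters into \Cref{thm:algo}. The only point that requires a line of care is confirming that $S = I+L$ is non-singular (so that $z^* = (I+L)^{-1}b$ is well defined and unique, matching the hypothesis of \Cref{thm:algo}); this follows from strict diagonal dominance with $\delta = 1 > 0$, which is already established in the first paragraph.
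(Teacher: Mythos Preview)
Your proposal is correct and essentially identical to the paper's own proof: both derive \Cref{thm:fj} by directly substituting $\delta = 1$, $\|b\|_\infty \leq 1$, and $S_{\max} = W+1$ into \Cref{thm:algo}. Your write-up is simply a bit more explicit about verifying the $1$-DD property and non-singularity, which is fine.
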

Again, we can boost the success probability from $2/3$ to $1-\delta$ via \Cref{remark:median}.

\paragraph{Worst-case optimal query complexity.}
The query complexity in \Cref{thm:fj} is worst-case \emph{optimal} for the Friedkin-Johnsen model. By \Cref{remark:lb}, the hard instances $(S,b)$ in \Cref{thm:lb} satisfy $S = I + L$ and $b \in \{0,1\}^n$, where $L$ is the Laplacian matrix of an undirected weighted regular graph $G_L$. We can take $G_{\text{FJ}} = G_L$, then the hardness instance becomes a Friedkin-Johnsen model on graph $G_{\text{FJ}}$. The lower bound in \Cref{thm:lb} shows that to achieve constant error $\varepsilon = O(1)$, the query complexity is $\Omega(S_{\max}) = \Omega(W+1)$. For constant error, our algorithm in \Cref{thm:fj} achieves the optimal $O(W+1)$ query complexity.

\paragraph{Comparison to \cite{neumann2024sublinear}.}
We compare our results with those in \cite{neumann2024sublinear} both theoretically and experimentally. Consider the case where $G_{\text{FJ}}$ is a regular graph, where every edge has the same weight and the maximum weighted degree is $W$. In this case, the 2-norm condition number $\kappa$ of $(I+D)^{-\frac12}(I + L)(I+D)^{-\frac12}$ is $\Theta(W+1)$. The algorithm in \cite{neumann2024sublinear} achieves the query complexity $\tilde{O}(\frac{(W+1)^3}{\varepsilon^2})$. The deterministic algorithm in \cite{neumann2024sublinear} achieves the query complexity $(W/\varepsilon)^{O(W\log(1/\varepsilon))}$. Our algorithm achieves a much better query complexity $\tilde{O}(\frac{W+1}{\varepsilon^2}\log \frac{1}{\varepsilon})$.

Furthermore, we conduct experiments on the Friedkin--Johnsen model and compare the performance of our algorithm with that of \cite{neumann2024sublinear} on the same dataset. The results show that our algorithm is much faster in practice. See \Cref{sec:exp} for details.



\subsection{Technical overview}\label{sec:ov}

\paragraph{Algorithm overview.} 
Our main algorithm is based on a probabilistic recursion.
Consider the linear system $Sz^* = b$, where $S$ is strictly diagonally dominant. 
For any $u\in[n]$, it holds that $\sum_{v\neq u} S_{uv}z_v^*+S_{uu}z_u^*=b_u$. By rearranging the terms, the following equation holds for any $u\in[n]$:
\[z^*_u= \frac{b_u - \sum_{v \neq u} S_{uv} z_v^*}{S_{uu}}.\]

Note that every entry in $S$ can take any sign. Define the following sign function. 
For any $x \in \mathbb{R}$, define the sign function as $\text{sgn}(x) = 1$ if $x > 0$, $\text{sgn}(x) = 0$ if $x = 0$, and $\text{sgn}(x) = -1$ if $x < 0$.
Recall that $\dout_u = \sum_{v \neq u} |S_{uv}|$ is the out-weighted-degree of $u$.
Now, $z^*_u$ can be rewritten as
\begin{align}\label{eq:rec}
    z^*_u= \underbrace{\frac{|S_{uu}| - \dout_u}{|S_{uu}|}}_{=P(u,u)} \cdot \frac{\text{sgn}(S_{uu}) b_u}{|S_{uu}| - \dout_u} + \sum_{v \neq u} \underbrace{\frac{|S_{uv}|}{|S_{uu}|}}_{=P(u,v)}\cdot \sgn(-S_{uu}S_{uv}) z_v^*.
\end{align}

Let $P(u,u) = \frac{|S_{uu}| - \dout_u}{|S_{uu}|}$ for $u \in [n]$ and $P(u,v) = \frac{|S_{uv}|}{|S_{uu}|}$ for $v \neq u$. Since $S$ is strictly diagonally dominant, the matrix $P$ forms a stochastic matrix, which means every row of $P$ is a probability distribution over $[n]$.
Equation~\eqref{eq:rec} can be interpreted as: (1) with probability $P(u,u)$, $z^*_u$ takes the value $\frac{\text{sgn}(S_{uu}) b_u}{|S_{uu}| - \dout_u}$;
(2) for each $v \neq u$, with probability $P(u,v)$, $z^*_u$ takes the value $\sgn(-S_{uu}S_{uv}) z_v^*$.

The above interpretation suggests a simple recursive algorithm. To solve the value of $z^*_u$, the algorithm aims to generate a random variable $z_u$ such that $\E[z_u] = z^*_u$. 
The algorithm first samples $w \in [n]$ according to the distribution $P(u,\cdot)$. If $w = u$, then the recursion terminates and returns value $z_u = \frac{\text{sgn}(S_{uu}) b_u}{|S_{uu}| - \dout_u}$. Otherwise, the algorithm first recursively generates random variable $z_w$ and return $z_u = \sgn(-S_{uu}S_{uv}) z_v^*$. 
The recursion process terminates quickly as $P(v,v) > 0$ for all $v$.
\paragraph{Proof overview of the lower bound.} 
Our lower bound is based on the construction of two distributions of $S \in \mathbb{R}^{n \times n}$ and $b \in \mathbb{R}^n$ instances $\mathcal{F}_0$ and $\mathcal{F}_1$. Suppose one chooses $\mathcal{F} \in \{\mathcal{F}_0,\mathcal{F}_1\}$ and draws a random instance $(S,b)$ from $\mathcal{F}$. Consider the sublinear-time algorithm interacting with the $(S,b)$-oracle. By choosing a proper $u \in [n]$ as the input of the algorithm, we can show the following two propositions:
\begin{itemize}
   \item  If there is a sublinear-time algorithm that approximates $z^*_u$ with a small constant error, where $z^*$ is the unique solution satisfying $Sz^*=b$, then the algorithm can distinguish whether $\mathcal{F}=\mathcal{F}_0$ or $\mathcal{F}=\mathcal{F}_1$.
   \item  Any sublinear-time algorithm that can distinguish whether $\mathcal{F}=\mathcal{F}_0$ or $\mathcal{F}=\mathcal{F}_1$ requires at least $\Omega(S_{\max})$ queries, where $S_{\max} = \max_{ i \in [n]} |S_{ii}|$.
\end{itemize}
Combining the above two propositions, we can show that any sublinear-time algorithm that approximates $z^*_u$ with a small constant error requires at least $\Omega(S_{\max})$ queries.

Next, we sketch the construction of the hard instances. In both distributions $\mathcal{F}_0$ and $\mathcal{F}_1$, the matrix $S = I + L$ is the same, where $I$ is the identity matrix and $L$ is the Laplacian matrix of a weighted graph $G$. 
The graph $G$ is constructed by using constant degree expander graphs as building blocks.
The vector $b$ takes different values in two distributions $\mathcal{F}_0$ and $\mathcal{F}_1$. The two properties mentioned above can be verified by using some expansion properties of the expanders.


\subsection{Other related work}\label{sec:relatedwork}
Random walks have also played a central role in the design of sublinear-time algorithms for various graph problems beyond solving linear systems. They have been used to approximate a variety of local graph centrality measures, such as PageRank and personalized PageRank scores, which are important in ranking and recommendation systems~\cite{bressan2018sublinear}. Random walks are also instrumental in estimating stationary distributions of Markov chains, particularly for fast convergence and mixing time analysis in large networks~\cite{banerjee2015fast,bressan2019approximating}. In addition, random walk techniques have been applied to estimate effective resistances between vertex pairs~\cite{andoni2019solving,peng2021local}, a quantity closely tied to graph connectivity and electrical flow. Another notable application is in designing algorithms for sampling vertices with probability proportional to their degrees, which is useful in streaming and sparsification contexts~\cite{dasgupta2014estimating}. These applications demonstrate the versatility of random walks as a tool for designing efficient sublinear and local algorithms in large-scale graphs.

There are also some quantum-inspired classical sublinear-time algorithms for linear system solvers. Chia et al. gave classical sublinear-time algorithms for solving low-rank linear systems of equations \cite{chia2018quantum,chia2020quantum}. Their algorithms are inspired by the HHL quantum algorithm \cite{harrow2009quantum} for solving linear systems and the breakthrough by Tang \cite{tang2019quantum} of dequantizing the quantum algorithm for recommendation systems. Their sublinear-time algorithms sample from and estimate entries of $x$ such that $\norm{x-A^\dagger b}\leq \eps \norm{A^\dagger b}$ for low-rank $A \in \mathbb{C}^{m \times n}$ in $\poly(k, \kappa, \|A\|_F, 1/\varepsilon) \cdot \poly\log(m, n)$ time per query, assuming length-square sampling access to
input $A$ and query access to $b$.  The length-square access of a vector $v$ allows queries to the norm $\norm{v}$ and
individual coordinates $v_i$
, as well as providing samples from the length-square distribution
$\frac{\abs{v_i}}{\norm{v}^2}$. In these algorithms they assume length-square access to the input matrix $A$, requiring
length-square access to the rows of $A$ as well as to the vector of row-norms. The output of
their algorithms is a description of an approximate solution $\tilde{x}$, providing efficient length-square
access to $\tilde{x}$. Shao and Montanaro \cite{shao2022faster} gave a slightly improved algorithm.

\section{Preliminaries}\label{sec:preliminaries}

Throughout the paper, we use $i\in [n]$ to denote $1\leq i\leq n$. Let $S$ denote a $\delta$-diagonally dominant matrix. We use $\lVert S\rVert_2$ and $\lVert S\rVert_{\infty}$ to denote the $2$-norm and the infinity norm of $S$. The condition number of a matrix \( S \), denoted \( \kappa(S) \), is defined as  
\[
\kappa(S) = \lVert S \rVert \cdot \lVert S^{-1} \rVert.
\]  
The \textbf{2-norm condition number} \( \kappa_2(S) \) and the \textbf{infinity-norm condition number} \( \kappa_{\infty}(S) \) are obtained by replacing the norm with the corresponding norm. In this paper, we abuse the notation to have $S^{-1}$ refer to the pseudo-inverse of $S$ if $S$ is singular. For singular and symmetric $S$, $S^{-1}$ is defined by first doing  diagonalization $S = V\Sigma V^T$ and then let $S^{-1} = V\Sigma^{-1}V^T$, where $\Sigma^{-1}$ takes the reciprocal of each non-zero entry of $\Sigma$.


We often use the coupling argument to bound the total variation distance between two distributions. Let $\nu$ and $\mu$ be two distributions over some set $\Omega$. The total variation distance between $\nu$ and $\mu$ is defined as $d_{\text{TV}}(\nu, \mu) = \sup_{A \subseteq \Omega} |\nu(A) - \mu(A)| = \frac{1}{2} \sum_{x \in \Omega} |\nu(x) - \mu(x)|$. A coupling of $\nu$ and $\mu$ is a joint distribution $(X,Y)$ over $\Omega\times\Omega$ such that the marginal distribution of $X$ is $\nu$ and the marginal distribution of $Y$ is $\mu$. The following coupling inequality is well-known. 
\begin{lemma}[Coupling Inequality \text{\cite[Section 4.1]{levin2017markov}}]\label{lem:coupling}
    Let $\nu$ and $\mu$ be two distributions over some set $\Omega$. For any coupling $(X,Y)$ of $\nu$ and $\mu$, we have
    $$
    d_{\text{TV}}(\nu, \mu) \leq \Pr[X\neq Y].
    $$
In particular, there exists an optimal coupling $(X,Y)$ that achieves the equality.
\end{lemma}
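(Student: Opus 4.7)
The plan is to establish the two halves of the lemma separately: first the inequality for an arbitrary coupling, then the construction of a coupling that saturates it (the maximal coupling).

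For the inequality, I would fix an arbitrary coupling $(X,Y)$ and an arbitrary event $A \subseteq \Omega$, and write
\begin{align*}
\nu(A) - \mu(A) &= \Pr[X \in A] - \Pr[Y \in A] \\
&= \Pr[X \in A,\, Y \notin A] - \Pr[Y \in A,\, X \notin A],
\end{align*}
where the second line comes from subtracting and adding the common term $\Pr[X \in A,\, Y \in A]$. Since both remaining probabilities are non-negative, I can bound
\[
\nu(A) - \mu(A) \leq \Pr[X \in A,\, Y \notin A] \leq \Pr[X \neq Y].
\]
Swapping the roles of $\nu$ and $\mu$ gives the same bound on $\mu(A) - \nu(A)$, so $|\nu(A) - \mu(A)| \leq \Pr[X \neq Y]$. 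Taking the supremum over $A$ yields $d_{\text{TV}}(\nu, \mu) \leq \Pr[X \neq Y]$.

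For the existence of an optimal coupling, I would explicitly construct the maximal coupling. Let $p = \sum_{x \in \Omega} \min(\nu(x), \mu(x))$; a routine identity gives $p = 1 - d_{\text{TV}}(\nu, \mu)$. Define the sampling procedure: flip a biased coin that comes up heads with probability $p$. If heads, draw $Z$ from the distribution $\min(\nu(x), \mu(x))/p$ and set $X = Y = Z$. If tails, draw $X$ and $Y$ independently from the residual distributions $(\nu(x) - \min(\nu(x), \mu(x)))/(1-p)$ and $(\mu(x) - \min(\nu(x), \mu(x)))/(1-p)$, respectively (if $p = 1$ there is nothing to do, and if $p = 0$ the residuals are just $\nu$ and $\mu$). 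A direct check shows that the marginals are $\nu$ and $\mu$ as required, and that the residual distributions have disjoint supports, which forces $X \neq Y$ on the tails event. Hence $\Pr[X \neq Y] \leq 1 - p = d_{\text{TV}}(\nu, \mu)$, matching the inequality proved above.

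I expect the main subtlety to be the construction of the residual distributions in a way that both (i) recovers the correct marginals and (ii) ensures $X \neq Y$ whenever we ``fall into'' the residual step, especially in corner cases such as $p \in \{0,1\}$ or when $\Omega$ is uncountable (where sums must be replaced by integrals against a dominating measure via the Radon--Nikodym derivatives). Since the paper only uses this lemma over discrete or countable state spaces arising from the recursive sampling procedure described in \Cref{sec:ov}, the discrete construction above suffices and the measure-theoretic complications can be avoided.
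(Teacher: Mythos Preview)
Your proof is correct and is essentially the standard argument found in the cited reference. Note, however, that the paper does not actually supply its own proof of this lemma: it is stated in the preliminaries with a citation to \cite[Section 4.1]{levin2017markov} and used as a black box thereafter, so there is no paper-side proof to compare against.
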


\section{The Algorithms for DD Systems}\label{sec:alg}

In this section, we present the algorithms based on the recursive procedure. 



\paragraph{Basic subroutines}
Using the recursion idea discussed in~\Cref{sec:ov}, Equation~\eqref{eq:rec} suggests the following simple recursive procedure \Cref{alg:recursive} \textsc{RecursiveSolver}. Furthermore, we need the subroutine \Cref{alg:solver} \textsc{EstimateZ} that outputs the average of estimates from some other procedure \textbf{ASolver}.

\begin{algorithm}[H]
\caption{\textsc{RecursiveSolver} ($u$)}
    \label{alg:recursive}
    \begin{algorithmic}[1]
        \STATE \textbf{Input:} Current vertex \( u \); query access to the \((S, b)\)-oracle
        \STATE \textbf{Output:} A real-valued random variable
        \STATE Perform a vertex query on \( u \) to retrieve \( b_u \), \( S_{uu} \), and \( \mathrm{d}^{\mathrm{out}}_u \)
        \STATE \textbf{With probability \( \frac{|S_{uu}| - \mathrm{d}^{\mathrm{out}}_u}{|S_{uu}|} \) do}
        \STATE \hspace{1em} \textbf{return} \( \frac{\mathrm{sgn}(S_{uu}) \cdot b_u}{|S_{uu}| - \mathrm{d}^{\mathrm{out}}_u} \)
        \STATE \textbf{Otherwise}
        \STATE 
        Perform a random walk query on \( u \) to obtain an out-neighbor \( v \) and edge weight \( S_{uv} \) \label{line:s11}
        \STATE 
        \textbf{return} \( \mathrm{sgn}(-S_{uu} S_{uv}) \cdot \textsc{RecursiveSolver}(v) \)
    \end{algorithmic}
\end{algorithm}

\begin{algorithm}[H]  \caption{\textsc{EstimateZ} ($u,\delta,\varepsilon,\lVert b\rVert_\infty$)}
    \label{alg:solver}
    \begin{algorithmic}[1]
        \STATE \textbf{Input:} The query vertex \( u \);  parameters \( \delta > 0 \), \( \varepsilon > 0 \), and \( \|b\|_\infty \); query access to the \((S, b)\)-oracle
        \STATE \textbf{Output:} The estimate of $z^*_u$ 
        \STATE Set \( t \gets \frac{6 \|b\|_\infty^2}{\delta^2 \varepsilon^2} \)
        \FOR{ \( i = 1 \) to \( t \) }
            \STATE \( z_u^{(i)} \gets \textbf{ASolver}(u) \) \label{line:r11}
        \ENDFOR
        \STATE \textbf{return} \( \tilde{z}_u \gets \frac{1}{t} \sum_{i=1}^t z_u^{(i)} \)
    \end{algorithmic}
\end{algorithm}
For any vertex $u$, let $z_u$ denote the random variable returned by \textsc{RecursiveSolver}($u$). We have the following lemma and the proof is in \Cref{sec:recursive}.
\begin{lemma}\label{lem:recursive}
For any $u \in [n]$, \textsc{RecursiveSolver} ($u$) returns a random variable $z_u$ with expected query complexity $O(\frac{S_{\max}}{\delta})$, where $z_u$ satisfies  $\E[z_u] = z^*_u$ and $z^*$ is the unique solution of $Sz=b$.
\end{lemma}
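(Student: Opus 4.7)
The plan is to establish the two conclusions of the lemma---the $O(S_{\max}/\delta)$ expected query complexity and the unbiasedness $\E[z_u] = z^*_u$---by separate arguments, both driven by the probabilistic interpretation of the recurrence \eqref{eq:rec}.

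For the query complexity, I would first observe that at every recursive call at a vertex $u$, the algorithm terminates without recursing with probability exactly $P(u,u) = (|S_{uu}| - \dout_u)/|S_{uu}|$. Because $S$ is $\delta$-DD we have $|S_{uu}| - \dout_u \geq \delta$, and because $|S_{uu}| \leq S_{\max}$, the termination probability at every step is at least $\delta/S_{\max}$. Consequently, the number of recursive calls is stochastically dominated by a geometric random variable with success probability $\delta/S_{\max}$, so its expectation is at most $S_{\max}/\delta$. Each call performs only one vertex query and at most one random-walk query, giving an expected total query cost of $O(S_{\max}/\delta)$.

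For the unbiasedness, I would first establish that the returned value is almost surely bounded. On any realization, if the walk terminates at a vertex $w$ after visiting $u = u_0 \to u_1 \to \cdots \to u_k = w$, the output equals $\left(\prod_{i=1}^{k}\sgn(-S_{u_{i-1}u_{i-1}}S_{u_{i-1}u_i})\right)\cdot \sgn(S_{ww}) b_w / (|S_{ww}| - \dout_w)$. Hence $|z_u| \leq \norm{b}_\infty / \delta$ almost surely, so $f_u := \E[z_u]$ is well-defined for every $u \in [n]$. Conditioning on the first step of the recursion and using linearity of expectation then shows that $f$ satisfies exactly the identity \eqref{eq:rec} with $z^*$ replaced by $f$, i.e., $Sf = b$. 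Because strictly DD matrices are non-singular, this system has the unique solution $z^*$, and therefore $f = z^*$.

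The one delicate point is justifying the interchange used in the conditioning step---$\E[z_u]$ is an expectation over an a priori unbounded recursion tree, so the identity $f_u = P(u,u)\cdot \sgn(S_{uu})b_u/(|S_{uu}|-\dout_u) + \sum_{v\neq u} P(u,v)\,\sgn(-S_{uu}S_{uv})\,f_v$ is not automatic. This is precisely where the uniform almost-sure bound $|z_u| \leq \norm{b}_\infty/\delta$ together with almost-sure finiteness of the recursion (from the geometric termination argument) come in: they justify moving the expectation inside the recursion via dominated convergence. Once this is in place, the remainder is elementary and reduces to invoking uniqueness of the solution of $Sz = b$.
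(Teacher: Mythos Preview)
Your proposal is correct and takes essentially the same approach as the paper: both bound the recursion depth by a geometric random variable with success probability at least $\delta/S_{\max}$, and both derive $\E[z_u]=z^*_u$ by conditioning on the first step to obtain $S\,\E[z]=b$ and then invoking uniqueness. If anything, you are slightly more careful than the paper in explicitly justifying the interchange of expectation via the almost-sure bound $|z_u|\le \norm{b}_\infty/\delta$; the paper simply invokes the law of total expectation without further comment.
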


\paragraph{An algorithm with bounded expected running time}
We first give an algorithm with bounded expected query complexity, as given in the  
following lemma. 
\begin{lemma}\label{lem:algorithmexpected}
With probability at least $2/3$, Algorithm~\ref{alg:solver}, where \textbf{ASolver} is instantiated by \textsc{RecursiveSolver} (Algorithm~\ref{alg:recursive}), returns an estimate $\tilde{z}_u$ such that $\abs{\tilde{z}_u - z_u^*} \leq \varepsilon$. The expected query complexity of the algorithm is $O(\frac{\lVert b\rVert_{\infty}^2S_{\max}}{\delta^3\varepsilon^2})$.   
\end{lemma}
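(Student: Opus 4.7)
The plan is to combine the unbiasedness of \textsc{RecursiveSolver} (established in \Cref{lem:recursive}) with a standard second-moment concentration argument for averages of bounded i.i.d.\ random variables. The key quantitative inputs are an almost-sure bound on each $z_u^{(i)}$, the resulting variance of $\tilde z_u$, and the expected query cost of a single invocation of \textsc{RecursiveSolver}.

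First I would establish the uniform bound $|z_u^{(i)}| \le \|b\|_\infty/\delta$ with probability $1$. Inspecting \Cref{alg:recursive}, every realization of the recursion eventually terminates at some vertex $v$ (on the branch taken with probability $(|S_{vv}|-\dout_v)/|S_{vv}|$) and returns $\sgn(S_{vv})\,b_v/(|S_{vv}|-\dout_v)$, multiplied along the way by a product of $\pm 1$ sign factors accumulated at the non-terminating steps. Because $S$ is $\delta$-DD we have $|S_{vv}|-\dout_v\ge \delta$, and $|b_v|\le \|b\|_\infty$, so every returned value lies in $[-\|b\|_\infty/\delta,\ \|b\|_\infty/\delta]$; a short induction on the random recursion depth formalizes this.

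Using this bound together with $\E[z_u^{(i)}]=z_u^*$ from \Cref{lem:recursive}, I would then conclude $\Var(z_u^{(i)}) \le \E[(z_u^{(i)})^2] \le \|b\|_\infty^2/\delta^2$. Since the $t$ samples drawn in \Cref{alg:solver} are i.i.d., $\Var(\tilde z_u) \le \|b\|_\infty^2/(\delta^2 t)$, which equals $\varepsilon^2/6$ for the choice $t=6\|b\|_\infty^2/(\delta^2\varepsilon^2)$. Chebyshev's inequality applied to the unbiased estimator $\tilde z_u$ yields $\Pr[|\tilde z_u - z_u^*| \ge \varepsilon] \le \Var(\tilde z_u)/\varepsilon^2 \le 1/6 < 1/3$, giving the claimed $2/3$ success probability.

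For the expected query complexity I would invoke linearity of expectation and \Cref{lem:recursive}: each of the $t$ independent calls to \textsc{RecursiveSolver}$(u)$ uses $O(S_{\max}/\delta)$ queries in expectation, so the total expected cost is $t \cdot O(S_{\max}/\delta) = O(\|b\|_\infty^2 S_{\max}/(\delta^3\varepsilon^2))$. No step in this plan presents a genuine obstacle; the only point that requires mild care is the almost-sure boundedness of each $z_u^{(i)}$, which ultimately relies on the $\delta$-DD hypothesis to bound the denominator $|S_{vv}|-\dout_v$ uniformly away from zero.
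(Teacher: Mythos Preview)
Your proposal is correct and follows essentially the same approach as the paper: bound each sample almost surely by $\|b\|_\infty/\delta$ via the $\delta$-DD property, deduce a variance bound of $\|b\|_\infty^2/\delta^2$, apply Chebyshev with $t=6\|b\|_\infty^2/(\delta^2\varepsilon^2)$ to get failure probability at most $1/6$, and multiply $t$ by the $O(S_{\max}/\delta)$ expected cost from \Cref{lem:recursive}. There are no substantive differences between your argument and the paper's proof.
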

\begin{proof}
Note that the algorithm independently calls \textsc{RecursiveSolver}($u$) for $t=\frac{6\lVert b\rVert_{\infty}^2}{\delta^2\varepsilon^2}$ times. Let the corresponding $t$ estimates be $z_u^{(1)},z_u^{(2)},\cdots,z_u^{(t)}$. The output of the algorithm is the average over these estimates, i.e., $\frac{1}{t}\sum_{i}z_u^{(i)}$. \Cref{alg:recursive} can be viewed as a random walk that stops at the current vertex $u$ 
    and outputs the value $\frac{\text{sgn}(S_{uu}) b_u}{|S_{uu}| - \dout_u}$ with probability $\frac{|S_{uu}| - \dout_u}{|S_{uu}|}$ or performs a random walk query on $u$ with probability $\propto|S_{uv}|$. Note that $|S_{ww}| - \dout_w \geq \delta$ for any $w\in V$, and each returned value $z_u^{(i)}$ takes the form $\frac{b_w}{|S_{ww}|-d_w^{out}}$
    or $\frac{-b_w}{|S_{ww}|-d_w^{out}}$ for some $w$. Thus, we have $|z_u^{(i)}| \leq \frac{\Vert b \Vert_\infty}{\delta}$, 
    which implies $\Var[z_u^{(i)}] \leq \E[(z_u^{(i)})^2] \leq \frac{\lVert b\rVert_{\infty}^2}{\delta^2}$. 
    By Chebyshev inequality,  
    $\Pr\left[\left|\frac{1}{t}\sum_{i}z^{(i)}_u-z_u^*\right|\leq\varepsilon\right]\geq1-\frac{\Var[z_u^{(i)}]}{t\varepsilon^2}\geq \frac{5}{6} > \frac{2}{3}.$ 
    Hence, the algorithm has success  probability at least $2/3$.
    By \Cref{lem:recursive}, the expected query complexity of each execution of \textsc{RecursiveSolver}($u$) is $O(\frac{S_{\max}}{\delta})$. This finishes the proof of the lemma by considering the number of iterations. 
\end{proof}

\paragraph{An algorithm with bounded worst-case running time}  
The running time of the algorithm in \Cref{lem:algorithmexpected} is randomized without a fixed upper bound. We now present a modified algorithm with a bounded worst-case running time. Specifically, instead of instantiating \textbf{ASolver}($u$) with \textsc{RecursiveSolver}($u$) in \Cref{alg:solver}, we use \textsc{TruncatedRecursiveSolver}($u,p$), as defined in \Cref{alg:truncated}. Note that \textsc{TruncatedRecursiveSolver}($u$,$p$) maintains a probability value $p$.

\begin{algorithm}[H]
    \caption{\textsc{TruncatedRecursiveSolver}($u$, $p$)}
    \label{alg:truncated}
    \begin{algorithmic}[1]
        \STATE \textbf{Input:} Current vertex $u$; recursion probability $p$; query access to the $(S, b)$-oracle
        \STATE \textbf{Output:} A real-valued random variable
        \STATE Let $t' = \frac{6 \|b\|_\infty^2}{\delta^2 \varepsilon^2}$
        \STATE Perform a vertex query on $u$ to retrieve $b_u$, $S_{uu}$, and $\mathrm{d}^{\mathrm{out}}_u$
        \STATE \textbf{With probability} $\frac{|S_{uu}| - \mathrm{d}^{\mathrm{out}}_u}{|S_{uu}|}$ \textbf{do}\label{line:s12}
        \STATE \hspace{1em} \textbf{return} $\frac{\mathrm{sgn}(S_{uu}) \cdot b_u}{|S_{uu}| - \mathrm{d}^{\mathrm{out}}_u}$\label{line:return-1}
        \STATE Update $p \leftarrow \frac{\mathrm{d}^{\mathrm{out}}_u}{|S_{uu}|}\cdot p$\label{line:update-p}
        \IF{$p \leq \frac{1}{6t'}$}
            \STATE \textbf{return} an arbitrary value (e.g., $0$)\label{line:tc-2}
        \ELSE
            \STATE Perform a random walk query on $u$ to obtain an out-neighbor $v$ and edge weight $S_{uv}$\label{line:r12}
            \STATE \textbf{return} $\mathrm{sgn}(-S_{uu} S_{uv}) \cdot \textsc{TruncatedRecursiveSolver}(v, p)$
        \ENDIF
    \end{algorithmic}
\end{algorithm}


That is, the modified algorithm, denoted by $\mathcal{A}$, is defined as \Cref{alg:solver} \textsc{EstimateZ}, with \textbf{ASolver} instantiated by \Cref{alg:truncated} \textsc{TruncatedRecursiveSolver}$(u, p)$, where the initial value of $p$ is set to $1$. During each recursive call, if \textsc{TruncatedRecursiveSolver}$(u, p)$ does not terminate at Line~\ref{line:return-1}, then $p$ is updated to $p \gets pq$ and passed to the next level of recursion. If, at any point in the recursion, the algorithm finds that $p \leq \frac{1}{6t'}$, where $t' = \frac{6 \|b\|_\infty^2}{\delta^2 \varepsilon^2}$, it terminates and returns an arbitrary value.




We use algorithm $\mathcal{A}$ to prove Theorem~\ref{thm:algo} in \Cref{sec:algo}. We prove \Cref{cor:rel} that achieves relative error with respect to $\lVert z^*\rVert_{\infty}$ in \Cref{sec:rel}.
    For non-strictly diagonally dominant matrix, we prove Theorem~\ref{thm:cond} in \Cref{sec:cond}. As an application of our main theorem, we prove Theorem~\ref{thm:fj} in \Cref{subsec:proofoffj}.

\section{The Lower Bound}\label{sec:lb}
In this section, we prove Theorem \ref{thm:lb}. We start with the construction of hard instances. For infinitely many $n$, we consider the following hard instance $(S,b)$, where $S \in \mathbb{R}^{n \times n}$ and $b \in \mathbb{R}^n$ are both constructed from a weighted graph $G$. We first define the weighted graph $G$ and then show how to construct $S$ and $b$ from $G$.

 
We use unweighted expander graphs as a basic building block of our hard instances.
We first introduce some definitions.
Let graph $G^{\text{nd}}$ be an undirected, unweighted graph. The Laplacian of $G^{\text{nd}}$ is defined as $L = D-A$, where $D$ is the diagonal matrix of degrees and $A$ is the adjacency matrix of $G^{\text{nd}}$. Let $\bar{L} = D^{-1/2}LD^{-1/2}$ be the normalized Laplacian of $G^{\text{nd}}$.
Let $0=\lambda_1\leq\lambda_2\leq\cdots\leq\lambda_n\leq 2$ be the eigenvalues of $\bar{L}$.
The spectral expansion of $G^{\text{nd}}$ is defined as $\gamma_G=\min(\lambda_2,2-\lambda_n)$.

\begin{proposition}[\text{Ramanujan Graphs}~\cite{lubotzky1988ramanujan,hoory2006expander}]\label{prop:expander}
Let $d = 98$, where $d-1$ is a prime number and $(d-1)\mod 4 = 1$. 
For infinitely many $N \in \mathbb{N}$, there exists a connected unweighted $d$-regular simple graph $G^{\textnormal{ex}}_N=(V,E)$ with $N$ vertices such that $\gamma_{G^{\textnormal{ex}}_N}\geq 1 - \frac{2\sqrt{d-1}}{d} >\frac23$.
\end{proposition}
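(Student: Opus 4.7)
The plan is to invoke the Lubotzky--Phillips--Sarnak (LPS) construction of Ramanujan graphs with parameter $p := d-1 = 97$, and then convert its adjacency-spectrum guarantee into the claimed bound on the normalized-Laplacian spectral expansion. The hypotheses in the proposition, namely that $p = d-1$ is prime and $p \equiv 1 \pmod 4$, are exactly the arithmetic prerequisites of the LPS construction. For any prime $q \equiv 1 \pmod 4$ with $q > p$ and Legendre symbol $\left(\frac{p}{q}\right) = 1$, LPS produces a connected non-bipartite $(p+1)$-regular simple Cayley graph $X^{p,q}$ on $\mathrm{PSL}_2(\mathbb{F}_q)$ with $N = q(q^2-1)/2$ vertices; by Dirichlet's theorem combined with quadratic reciprocity, infinitely many such primes $q$ exist, yielding the required infinite family of values of $N$.

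Next I would convert the Ramanujan property into the stated spectral-expansion bound. The LPS theorem guarantees that every adjacency eigenvalue $\mu$ of $X^{p,q}$ other than the trivial Perron eigenvalue $\mu = d$ satisfies $|\mu| \leq 2\sqrt{d-1}$. Since the graph is $d$-regular, $\bar L = I - A/d$, so the eigenvalues of $\bar L$ are $\{1 - \mu_i/d\}_{i=1}^{N}$ (in reverse order). The bound $|\mu_i| \leq 2\sqrt{d-1}$ for $i \geq 2$ immediately yields $\lambda_2 \geq 1 - 2\sqrt{d-1}/d$, and, since the Ramanujan bound is two-sided, also $\lambda_N \leq 1 + 2\sqrt{d-1}/d$, i.e.\ $2 - \lambda_N \geq 1 - 2\sqrt{d-1}/d$. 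Taking the minimum gives $\gamma_{G^{\textnormal{ex}}_N} \geq 1 - 2\sqrt{d-1}/d$. A direct numerical check finishes the proof: $36 \cdot 97 = 3492 < 9604 = 98^2$, so $2\sqrt{97}/98 < 1/3$ and therefore $1 - 2\sqrt{d-1}/d > 2/3$.

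Since this is a classical cited fact, there is no substantive mathematical obstacle. The only subtleties I would flag are (i) choosing $q$ so that $X^{p,q}$ is simple, connected, and non-bipartite, all of which are ensured automatically by $\left(\frac{p}{q}\right) = 1$ together with $q > p$; and (ii) using the two-sided Ramanujan bound on \emph{both} ends of the spectrum, since the spectral expansion $\gamma$ is defined as the minimum of $\lambda_2$ and $2 - \lambda_N$ and not just $\lambda_2$ -- without the upper-spectrum half of the Ramanujan bound one would only control the algebraic expansion and not the full spectral gap appearing in later mixing-time arguments.
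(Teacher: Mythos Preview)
Your proposal is correct. The paper does not give its own proof of this proposition; it is stated as a cited fact from \cite{lubotzky1988ramanujan,hoory2006expander} and used as a black box in the lower-bound construction. Your outline is exactly the standard derivation one would extract from those references: invoke the LPS construction with $p=d-1=97$, use quadratic reciprocity and Dirichlet to obtain infinitely many admissible $q$ with $\left(\frac{p}{q}\right)=1$ so that $X^{p,q}$ is connected, simple, and non-bipartite, translate the two-sided Ramanujan adjacency bound $|\mu|\le 2\sqrt{d-1}$ into the normalized-Laplacian bound via $\bar L = I - A/d$, and then check the numerical inequality $2\sqrt{97}/98 < 1/3$. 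Your remark that the two-sided bound is needed (to control $2-\lambda_N$ and not only $\lambda_2$) is exactly right given how the paper later uses $\gamma$ in the mixing lemma.
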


 In the rest of the proof, we use $d$ to denote the universal constant 98 in Proposition \ref{prop:expander}.
Fix $k \in \mathbb{N}$ such that the graph $G^{\textnormal{ex}}_k$ in \Cref{prop:expander} exists. Let
$n = k^2.$ 
We now construct a distribution $\mu_n$ of graphs and then sample a random graph $G \sim \mu_n$ to construct hard instances. By \Cref{prop:expander}, there are infinitely many integers $k$. Hence, we can construct the distribution $\mu_n$ for infinitely many $n$. 
Let graphs $G'$ and $B$ be two disjoint copies of $G^{\textnormal{ex}}_k$.
Let $C$ be a set of $n - 2k$ isolated vertices. 
Note that $G',B$ and $C$ have $n$ vertices in total. 
A random \emph{weighted} graph $G \sim \mu_n$ is constructed by the following procedure:
\begin{itemize}
\item Take the union of graphs $G',B$ and $C$.
\item Sample a vertex $w_{G'}$ in $G'$ and sample a vertex $w_B$ in $B$ uniformly at random. Connect $w_{G'}$ and $w_{B}$ by an unweighted edge; 
\item Let $G$ be the whole graph with $n$ vertices and the labels of vertices are assigned as a uniform random permutation of $[n]$.
\item So far, the graph $G$ is unweighted. For any edge $e$ in $G$, assign the weight $k$ to the edge $e$. 
\item For every vertex $v$ in $G$, independently assign a uniform permutation of all its neighbors, where the permutation is used to answer the neighbor query.
\end{itemize}
See \Cref{fig:lb} for an illustration of the construction.


\begin{figure}[ht]
   \centering
    \includegraphics[width=120mm]{./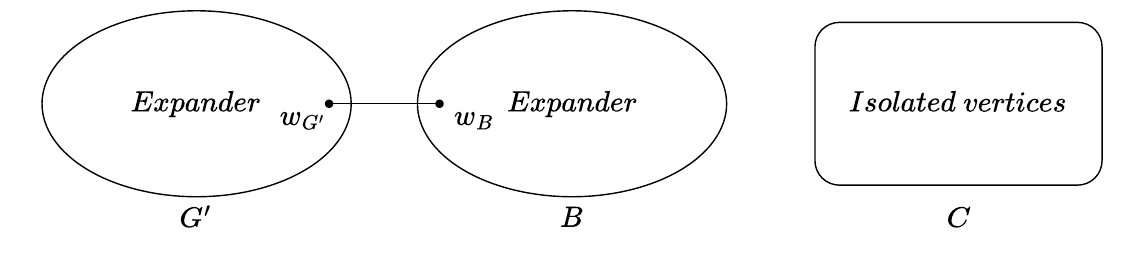}
   \caption{Construction of $G$ with two expanders ($G'$ and $B$) and $n-\Theta(\sqrt{n})$ isolated vertices ($C$). }\label{fig:lb}
\end{figure}

We now sample a random graph $G \sim \mu_n$ and show how to construct two instances $(S^G,b^{G\text{-}1})$ and $(S^G,b^{G\text{-}0})$ from $G$.
Note that two instances share the same matrix $S^G$ but have different vectors $b^{G\text{-}1}$ and $b^{G\text{-}0}$.
 Let $L^G = D^G - A^G$ be the Laplacian of $G$, where $D^G$ is the diagonal matrix of weighted degrees and $A^G$ is the weighted adjacency matrix of $G$. The input matrix is the following:
$S^G = I + L^G$.
The following observation follows directly from the definition.
\begin{observation}\label{obs:S}
The matrix $S^G$ is $1$-DD, $S_{\max} = \Theta(dk) = \Theta(\sqrt{n})$, and $\Delta^S_{\max} = d = O(1)$.
\end{observation}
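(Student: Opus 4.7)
} The plan is to verify each of the three claimed properties by direct computation from the construction of $G$ and the definition $S^G = I + L^G$, where $L^G = D^G - A^G$. For any row $i$, the diagonal entry is $S^G_{ii} = 1 + D^G_{ii}$, where $D^G_{ii}$ is the weighted degree of vertex $i$ in $G$, and the off-diagonal entries are $S^G_{ij} = -A^G_{ij}$, i.e., the negative of the weight of edge $\{i,j\}$ (zero if there is no such edge). Since all edge weights are non-negative, $\sum_{j \neq i} |S^G_{ij}| = \sum_{j \neq i} A^G_{ij} = D^G_{ii}$, and therefore
\[
|S^G_{ii}| - \sum_{j \neq i} |S^G_{ij}| \;=\; (1 + D^G_{ii}) - D^G_{ii} \;=\; 1,
\]
which establishes that $S^G$ is $1$-DD.

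Next I would bound $S_{\max} = \max_i |S^G_{ii}| = 1 + \max_i D^G_{ii}$ by going through the three parts of $G$. Vertices in the isolated set $C$ contribute weighted degree $0$. Non-distinguished vertices of the two expander copies $G'$ and $B$ have exactly $d$ incident edges, each of weight $k$, so their weighted degree is $dk$. The two distinguished vertices $w_{G'}$ and $w_B$ have one extra incident edge (the connecting edge of weight $k$), so their weighted degree is $(d+1)k$. Thus $S_{\max} = 1 + (d+1)k = \Theta(dk) = \Theta(\sqrt{n})$, using $d = 98$ as a universal constant and $n = k^2$.

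Finally, $\Delta^S_{\max}$ equals the maximum number of nonzero off-diagonal entries in any row of $S^G$, which is exactly the maximum (unweighted) degree in $G$. Vertices in $C$ have degree $0$, typical vertices in $G'$ or $B$ have degree $d$, and the two bridge endpoints $w_{G'}, w_B$ have degree $d+1$. Hence $\Delta^S_{\max} \leq d+1 = O(1)$, matching the stated bound up to an additive constant that is absorbed into the $O(1)$.

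Since each of the three properties follows from unpacking the construction of $G$ and the definition of $S^G$, there is no real obstacle here; the only point that needs a tiny bit of care is accounting for the single inter-expander bridge edge when computing both the maximum weighted degree (giving $(d+1)k$ rather than $dk$) and the maximum unweighted degree (giving $d+1$ rather than $d$). Both corrections are absorbed by the asymptotic notation in the statement.
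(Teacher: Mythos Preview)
Your proposal is correct and matches the paper's approach, which simply states that the observation ``follows directly from the definition.'' Your direct unpacking of $S^G = I + L^G$ and the three-part case analysis on $C$, typical expander vertices, and the bridge endpoints $w_{G'}, w_B$ is exactly the intended verification; you even correctly note that the bridge endpoints give $\Delta^S_{\max} = d+1$ rather than the literal $d$ in the statement, a harmless imprecision absorbed by the $O(1)$.
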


Given the graph $G$, we define two vectors $b^{G\text{-}1}$ and $b^{G\text{-}0}$ as follows:
\begin{align*}
\forall v \in [n], \quad b^{G\text{-}0}_v = 0, \quad b^{G\text{-}1}_v &= \begin{cases}
        1 & \text{if } v \text{ is a vertex in } B, \\
        0 & \text{otherwise}.
    \end{cases}
\end{align*}
Finally, the query vertex $u$ is sampled uniformly at random from $G' \subseteq G$. As discussed before, we will show that the following two properties hold.
\begin{itemize}
    \item There is a constant gap between $z^1_u$ and $z^0_u$, where $z^1_u$ and $z^0_u$ are the unique solutions of $S^Gz = b^{G\text{-}1}$ and $S^Gz = b^{G\text{-}0}$ respectively. In other words, solving the query at $u$ for two instances with a small constant additive error can distinguish the two instances.
    \item Given the query vertex $u$, any sublinear-time algorithm with low query complexity cannot distinguish between the two instances $(S^G,b^{G\text{-}1})$ and $(S^G,b^{G\text{-}0})$ with high probability.
\end{itemize}

Formally, the first property is stated in \Cref{lem:diff} and the second property is stated in \Cref{lem:tv}. The lower bound result in \Cref{thm:lb} can be proved by combining two lemmas. The detailed analysis is given in~\Cref{sec:lb-pf}.

Finally, if the oracle allows the edge query, i.e., the oracle returns the information of a given edge $e\in E$, we can slightly modify the construction of hard instances to have the same lower bound. See \Cref{remark:edge} for details.

\section{Experiments}\label{sec:exp}

Now we experimentally evaluate our main algorithm, Algorithm~\ref{alg:solver} and instantiate \textbf{Asolver} by \textsc{RecursiveSolver} (\Cref{alg:recursive}). We focus on evaluating the approximation quality and the running complexity of our algorithm on the Friedkin-Johnsen Model. As a baseline, we compare our algorithm with Algorithm $1$ in \cite{neumann2024sublinear}, which we refer to as \textbf{NDP24} (named after the authors). Their algorithm is mainly based on the random walk based on linear system solver in~\cite{andoni2019solving}. We compare two algorithms in two settings: the absolute error varies with the average running time of each vertex and the absolute error with a bounded number of random walks.

\paragraph{Implementation and Dataset.} We run our experiment with Intel(R) Core(TM) i$7$-$9750$H CPU @ 2.60GHZ with $6$ cores and $32$ GB RAM and our code is based on the framework of \cite{neumann2024sublinear}. We implement two algorithms in C++$11$ and all other parts are implemented in Python. We use the GooglePlus, TwitterFollows and Pokec dataset from KONECT~\cite{10.1145/2487788.2488173} and Network
Repository~\cite{nr}. Note that these datasets only consist of an unweighted graph, we generate the innate opinion using a uniform distribution in $[0,1]$. Since our datasets are too large to obtain the exact ground truth, we use the results of the algorithm in~\cite{10.1145/3442381.3449812} based on the Laplacian solver as our ground truth. Their experiments show that the error of their algorithms is negligible in practice, where their algorithm's relative error is less than $10^{-6}$ with the exact results based on matrix inversion.

\paragraph{Absolute error varies with the average running time of each vertex.}  Since the two algorithms have different parameterizations, we first compare how the absolute error varies with the average running time per vertex. We sample $5000$ vertices with replacement and compute the average absolute error and running time for each algorithm. For our Algorithm~\ref{alg:solver}, we set the number of independent samples $t = 100, 1000, 5000, 10000$. For NDP24, we vary the walk length $\ell = 200, 400, 800$ and number of walks $r = 100, 500, 1000, 2000, 5000$. As noted in~\cite{neumann2024sublinear}, due to early termination of walks, increasing the walk length yields diminishing returns in accuracy. This trade-off is evident in Figure~\ref{fig:time}, where our algorithm consistently achieves better approximation quality for the same per-vertex runtime.

\begin{figure}[ht]
    \centering
    \begin{subfigure}{0.31\columnwidth}
        \centering
        \includegraphics[width=\linewidth]{./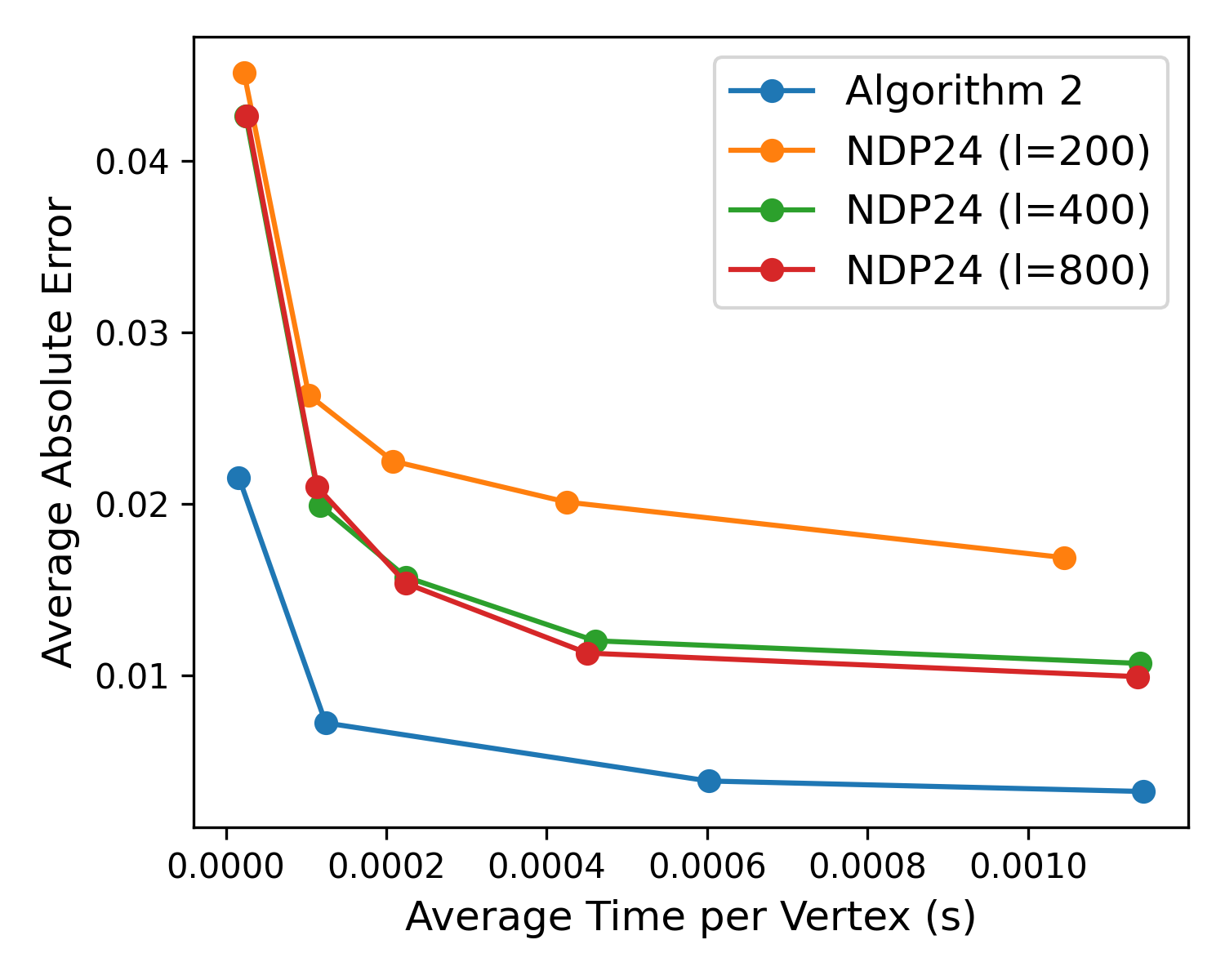}
        \caption{GooglePlus}
        \label{fig:googleplus}
    \end{subfigure}
    \hspace{0.02\columnwidth}%
    \begin{subfigure}{0.31\columnwidth}
        \centering
        \includegraphics[width=\linewidth]{./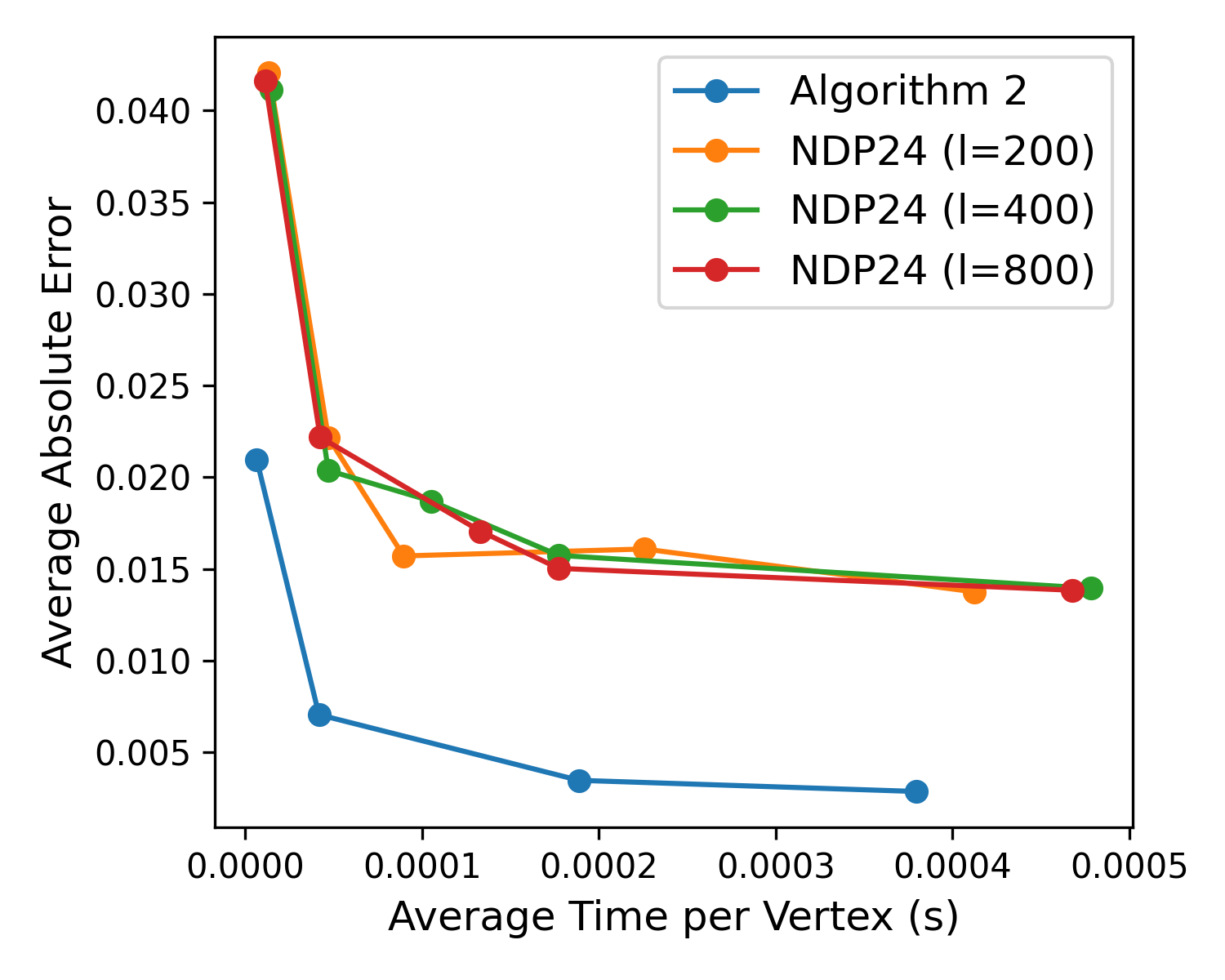}
        \caption{TwitterFollows}
        \label{fig:twitter}
    \end{subfigure}
    \hspace{0.02\columnwidth}%
    \begin{subfigure}{0.31\columnwidth}
        \centering
        \includegraphics[width=\linewidth]{./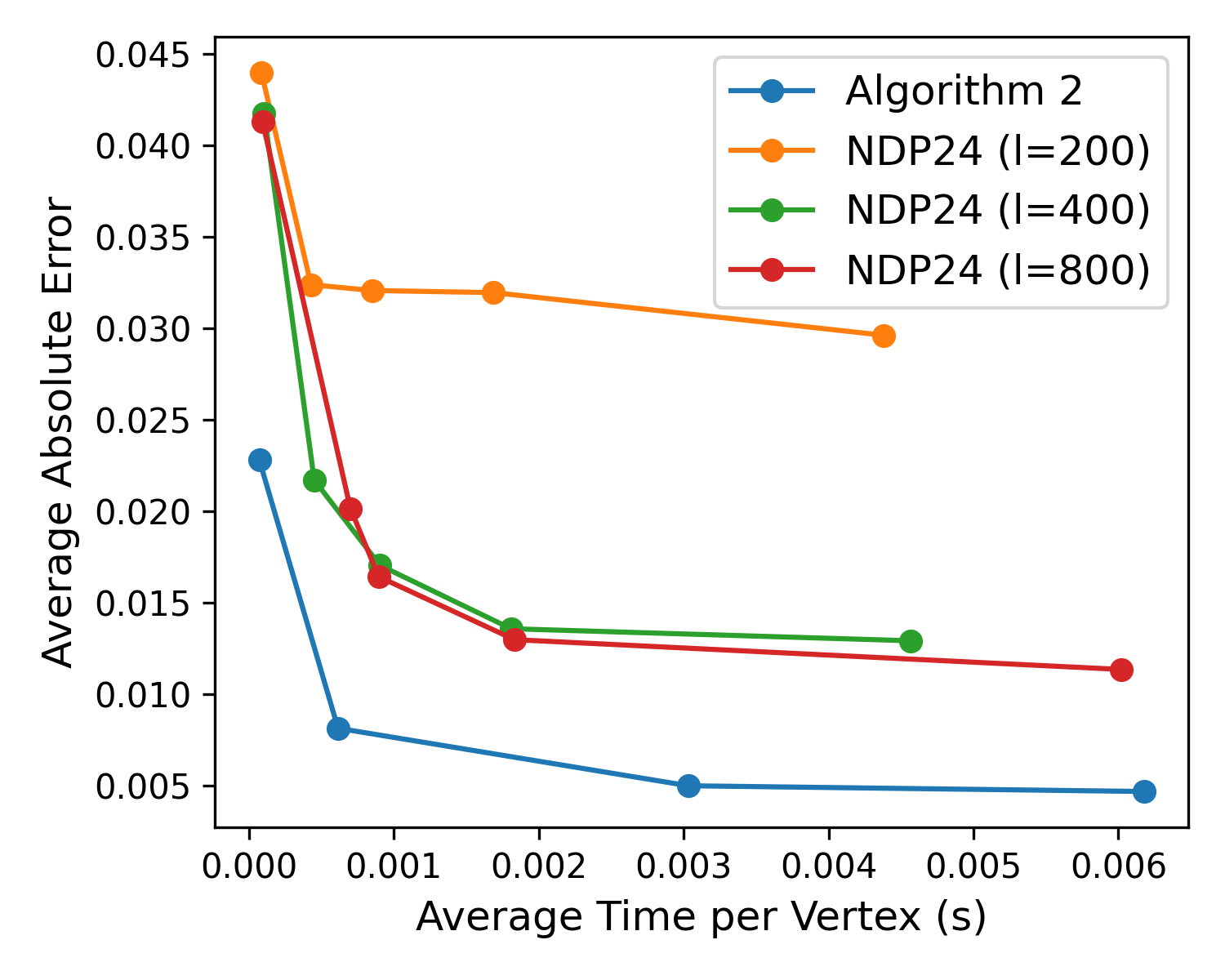}
        \caption{Pokec}
        \label{fig:pokec}
    \end{subfigure}
    \caption{The average absolute error varies with the average running time per vertex of Algorithm~\ref{alg:solver} and NDP24 based on the linear system solver.}
    \label{fig:time}
\end{figure}


\paragraph{Absolute error with a bounded number of random walks.} Both algorithms rely on random walks as their fundamental query type. To compare their performance under a fixed query budget, we analyze the average absolute error using a bounded number of random walk queries. Specifically, we sample $5000$ vertices with replacement and measure the average absolute error for query budgets $q = 5000, 10000, 20000, 40000, 80000$. Each random walk terminates and returns the average result once the query budget is exhausted. For NDP24, we fix the walk length to $\ell = 800$. As shown in Table~\ref{tab:avg_error_networks}, our algorithm consistently achieves lower error under the same query constraints.

\begin{table}[ht]
\centering
\caption{The average absolute error with bounded number of random walk queries}
\label{tab:avg_error_networks}
\begin{tabular}{llrrrrr}
\toprule
\multirow{2}{*}{\textbf{Network}} & \multirow{2}{*}{\textbf{Method}} & \multicolumn{5}{c}{\textbf{Number of Random Walk Queries}} \\
\cmidrule(lr){3-7}
 &  & 5000 & 10000 & 20000 & 40000 & 80000 \\
\midrule
\multirow{2}{*}{GooglePlus} & Algorithm 2 & 0.0100 & 0.0069 & 0.0051 & 0.0038 & 0.0030 \\
                            & NDP24       & 0.0262 & 0.0189 & 0.0154 & 0.0111 & 0.0085 \\
\multirow{2}{*}{Pokec}      & Algorithm 2 & 0.0173 & 0.0124 & 0.0091 & 0.0070 & 0.0054 \\
                            & NDP24       & 0.0438 & 0.0321 & 0.0232 & 0.0187 & 0.0138 \\
\multirow{2}{*}{TwitterFollows} & Algorithm 2 & 0.0067 & 0.0050 & 0.0036 & 0.0028 & 0.0022 \\
                                & NDP24       & 0.0196 & 0.0159 & 0.0164 & 0.0126 & 0.0152 \\
\bottomrule
\end{tabular}
\end{table}



\newpage
\bibliographystyle{plainurl}
\bibliography{ref}

\begin{thebibliography}{10}

\bibitem{alman2025more}
Josh Alman, Ran Duan, Virginia~Vassilevska Williams, Yinzhan Xu, Zixuan Xu, and Renfei Zhou.
\newblock More asymmetry yields faster matrix multiplication.
\newblock In {\em Proceedings of the 2025 Annual ACM-SIAM Symposium on Discrete Algorithms (SODA)}, pages 2005--2039. SIAM, 2025.

\bibitem{ambainis2012variable}
Andris Ambainis.
\newblock Variable time amplitude amplification and quantum algorithms for linear algebra problems.
\newblock In {\em 29th Symposium on Theoretical Aspects of Computer Science (STACS)}, volume~14, pages 636--647. LIPIcs, 2012.

\bibitem{andersen2006local}
Reid Andersen, Fan R.~K. Chung, and Kevin~J. Lang.
\newblock Local graph partitioning using pagerank vectors.
\newblock In {\em 47th Annual {IEEE} Symposium on Foundations of Computer Science {(FOCS)}}, pages 475--486, 2006.

\bibitem{andoni2019solving}
Alexandr Andoni, Robert Krauthgamer, and Yosef Pogrow.
\newblock On solving linear systems in sublinear time.
\newblock In {\em 10th Innovations in Theoretical Computer Science Conference {(ITCS)}}, volume 124, pages 3:1--3:19, 2019.

\bibitem{banerjee2015fast}
Siddhartha Banerjee and Peter Lofgren.
\newblock Fast bidirectional probability estimation in {M}arkov models.
\newblock In {\em The 28th Advances in Neural Information Processing Systems(NIPS)}, pages 1423--1431, 2015.

\bibitem{bressan2018sublinear}
Marco Bressan, Enoch Peserico, and Luca Pretto.
\newblock Sublinear algorithms for local graph centrality estimation.
\newblock In {\em 59th {IEEE} Annual Symposium on Foundations of Computer Science (FOCS)}, pages 709--718, 2018.

\bibitem{bressan2019approximating}
Marco Bressan, Enoch Peserico, and Luca Pretto.
\newblock On approximating the stationary distribution of time-reversible {Markov} chains.
\newblock {\em Theory Comput. Syst.}, 64(3):444--466, 2020.

\bibitem{chia2020quantum}
Nai-Hui Chia, Andr{\'a}s Gily{\'e}n, Han-Hsuan Lin, Seth Lloyd, Ewin Tang, and Chunhao Wang.
\newblock Quantum-inspired algorithms for solving low-rank linear equation systems with logarithmic dependence on the dimension.
\newblock In {\em 31st International Symposium on Algorithms and Computation (ISAAC)}, pages 47--1. Schloss Dagstuhl--Leibniz-Zentrum f{\"u}r Informatik, 2020.

\bibitem{chia2018quantum}
Nai-Hui Chia, Han-Hsuan Lin, and Chunhao Wang.
\newblock Quantum-inspired sublinear classical algorithms for solving low-rank linear systems.
\newblock {\em arXiv preprint arXiv:1811.04852}, 2018.

\bibitem{childs2017quantum}
Andrew~M Childs, Robin Kothari, and Rolando~D Somma.
\newblock Quantum algorithm for systems of linear equations with exponentially improved dependence on precision.
\newblock {\em SIAM Journal on Computing}, 46(6):1920--1950, 2017.

\bibitem{cohen2018solving}
Michael~B Cohen, Jonathan Kelner, Rasmus Kyng, John Peebles, Richard Peng, Anup~B Rao, and Aaron Sidford.
\newblock Solving directed {L}aplacian systems in nearly-linear time through sparse lu factorizations.
\newblock In {\em 2018 IEEE 59th Annual Symposium on Foundations of Computer Science (FOCS)}, pages 898--909. IEEE, 2018.

\bibitem{cohen2014solving}
Michael~B Cohen, Rasmus Kyng, Gary~L Miller, Jakub~W Pachocki, Richard Peng, Anup~B Rao, and Shen~Chen Xu.
\newblock Solving {SDD} linear systems in nearly {$m \log^{1/2} n$} time.
\newblock In {\em Proceedings of the 46th annual ACM Symposium on Theory of Computing (STOC)}, pages 343--352, 2014.

\bibitem{dasgupta2014estimating}
Anirban Dasgupta, Ravi Kumar, and Tam{\'{a}}s Sarl{\'{o}}s.
\newblock On estimating the average degree.
\newblock In {\em 23rd International World Wide Web Conference (WWW)}, pages 795--806, 2014.

\bibitem{dervovic2018quantum}
Danial Dervovic, Mark Herbster, Peter Mountney, Simone Severini, Na{\"\i}ri Usher, and Leonard Wossnig.
\newblock Quantum linear systems algorithms: a primer.
\newblock {\em arXiv preprint arXiv:1802.08227}, 2018.

\bibitem{friedkin1990social}
Noah~E Friedkin and Eugene~C Johnsen.
\newblock Social influence and opinions.
\newblock {\em Journal of Mathematical Sociology}, 15(3-4):193--206, 1990.

\bibitem{Goldreich2010}
Oded Goldreich.
\newblock {\em Introduction to Testing Graph Properties}.
\newblock Springer Berlin Heidelberg, 2010.

\bibitem{harrow2009quantum}
Aram~W Harrow, Avinatan Hassidim, and Seth Lloyd.
\newblock Quantum algorithm for linear systems of equations.
\newblock {\em Physical review letters}, 103(15):150502, 2009.

\bibitem{DBLP:conf/nips/HoltzCWCM24}
Chester Holtz, Pengwen Chen, Zhengchao Wan, Chung{-}Kuan Cheng, and Gal Mishne.
\newblock Continuous partitioning for graph-based semi-supervised learning.
\newblock In {\em The 38th Annual Conference on Neural Information Processing Systems (NeurIPS)}, 2024.

\bibitem{hoory2006expander}
Shlomo Hoory, Nathan Linial, and Avi Wigderson.
\newblock Expander graphs and their applications.
\newblock {\em Bulletin of the American Mathematical Society}, 43(4):439--561, 2006.

\bibitem{conf/colt/JinKMSS24}
Yujia Jin, Ishani Karmarkar, Christopher Musco, Aaron Sidford, and Apoorv~Vikram Singh.
\newblock Faster spectral density estimation and sparsification in the nuclear norm (extended abstract).
\newblock In {\em The Thirty Seventh Annual Conference on Learning Theory (COLT)}, volume 247 of {\em Proceedings of Machine Learning Research}, page 2722. {PMLR}, 2024.

\bibitem{koutis2014approaching}
Ioannis Koutis, Gary~L Miller, and Richard Peng.
\newblock Approaching optimality for solving sdd linear systems.
\newblock {\em SIAM Journal on Computing}, 43(1):337--354, 2014.

\bibitem{10.1145/2487788.2488173}
J\'{e}r\^{o}me Kunegis.
\newblock Konect: the koblenz network collection.
\newblock In {\em Proceedings of the 22nd International Conference on World Wide Web (WWW)}, pages 1343--1350. ACM, 2013.

\bibitem{levin2017markov}
David~A Levin and Yuval Peres.
\newblock {\em Markov chains and mixing times}, volume 107.
\newblock American Mathematical Soc., 2017.

\bibitem{lovasz1993random}
L{\'a}szl{\'o} Lov{\'a}sz et~al.
\newblock Random walks on graphs: A survey.
\newblock {\em Combinatorics, Paul erdos is eighty}, 2(1):1--46, 1993.

\bibitem{lubotzky1988ramanujan}
Alexander Lubotzky, Ralph Phillips, and Peter Sarnak.
\newblock Ramanujan graphs.
\newblock {\em Combinatorica}, 8(3):261--277, 1988.

\bibitem{neumann2024sublinear}
Stefan Neumann, Yinhao Dong, and Pan Peng.
\newblock Sublinear-time opinion estimation in the {F}riedkin--{J}ohnsen model.
\newblock In {\em Proceedings of the ACM Web Conference 2024 (WWW)}, pages 2563--2571, 2024.

\bibitem{peng2021local}
Pan Peng, Daniel Lopatta, Yuichi Yoshida, and Gramoz Goranci.
\newblock Local algorithms for estimating effective resistance.
\newblock In {\em Proceedings of the 27th ACM SIGKDD Conference on Knowledge Discovery {\&} Data Mining (KDD)}, pages 1329--1338, 2021.

\bibitem{nr}
Ryan~A. Rossi and Nesreen~K. Ahmed.
\newblock The network data repository with interactive graph analytics and visualization.
\newblock In {\em Proceedings of the 29th AAAI Conference on Artificial Intelligence (AAAI)}, page 4292–4293. AAAI Press, 2015.

\bibitem{shao2022faster}
Changpeng Shao and Ashley Montanaro.
\newblock Faster quantum-inspired algorithms for solving linear systems.
\newblock {\em ACM Transactions on Quantum Computing}, 3(4):1--23, 2022.

\bibitem{spielman2010algorithms}
Daniel~A Spielman.
\newblock Algorithms, graph theory, and linear equations in {L}aplacian matrices.
\newblock In {\em Proceedings of the International Congress of Mathematicians 2010 (ICM 2010) (In 4 Volumes) Vol. I: Plenary Lectures and Ceremonies Vols. II--IV: Invited Lectures}, pages 2698--2722. World Scientific, 2010.

\bibitem{spielman2004nearly}
Daniel~A Spielman and Shang-Hua Teng.
\newblock Nearly-linear time algorithms for graph partitioning, graph sparsification, and solving linear systems.
\newblock In {\em Proceedings of the 36th annual ACM Symposium on Theory of computing (STOC)}, pages 81--90, 2004.

\bibitem{10.1007/978-3-031-31975-4_19}
Or~Streicher and Guy Gilboa.
\newblock Graph {L}aplacian for semi-supervised learning.
\newblock In {\em International conference on scale space and variational methods in computer vision (SSVM)}, pages 250--262. Springer-Verlag, 2023.

\bibitem{tang2019quantum}
Ewin Tang.
\newblock A quantum-inspired classical algorithm for recommendation systems.
\newblock In {\em Proceedings of the 51st annual ACM Symposium on Theory of Computing (STOC)}, pages 217--228, 2019.

\bibitem{10.1145/3442381.3449812}
Wanyue Xu, Qi~Bao, and Zhongzhi Zhang.
\newblock Fast evaluation for relevant quantities of opinion dynamics.
\newblock In {\em Proceedings of the Web Conference 2021 (WWW)}, pages 2037–--2045. ACM, 2021.

\end{thebibliography}

\appendix
\newpage
\section{Omitted Proofs for Algorithms}\label{sec:alg-pf}

\subsection{Proof of Lemma~\ref{lem:recursive}}\label{sec:recursive}

\begin{proof}[\textbf{Proof of \Cref{lem:recursive}}]
    First, we show that the random variable $z_u$ returned by the algorithm is well-defined. In each recursion step, the algorithm terminates and output a value with probability $\frac{|S_{uu}| - \dout_u}{|S_{uu}|} \geq \frac{\delta}{|S_{uu}|} > 0$. Otherwise, the algorithm starts a new recursion step. Hence, the number of recursion steps is dominated by a geometric random variable and the algorithm finally terminates with probability 1. This implies the random variable $z_u$ is well-defined. 
    Furthermore, 
    by the expectation of geometric random variable, the expected recursion depth 
    is at most $O(\max_{i\in[n]}\frac{|S_{ii}|}{|S_{ii}| - \dout_i})\leq O(\frac{S_{\max}}{\delta})$.
    Since the total query complexity is linear in the total number of recursions, the expected query complexity of \textsc{RecursiveSolver} ($u$) is $O(\frac{S_{\max}}{\delta})$.

    Next, we calculate the expectation $\E[z_u]$ of $z_u$. By the definition of the \textsc{RecursiveSolver}($u$), $z_u$ takes the value $\frac{\text{sgn}(S_{uu}) b_u}{|S_{uu}| - \dout_u}$ with probability $\frac{|S_{uu}| - \dout_u}{|S_{uu}|}$. Otherwise, with probability $\frac{|S_{uv}|}{|S_{uu}|}$, $z_u$ takes the random value $\text{sgn}(-S_{uu}S_{uv}) z_v$. By the law of total expectation, we have
    \begin{align*}
        \E[z_u]&= \frac{|S_{uu}| - \dout_u}{|S_{uu}|} \cdot \frac{\text{sgn}(S_{uu}) b_u}{|S_{uu}| - \dout_u} + \sum_{v \neq u} \frac{|S_{uv}|}{|S_{uu}|}\cdot \sgn(-S_{uu}S_{uv}) \E[z_v].\\
        &=S_{uu}^{-1}\left(b_u-\sum_{v\neq u}S_{uv}\E[z_v]\right).
    \end{align*}
    Let $\E[z]$ denote the vector $(\E[z_u])_{u \in [n]}$. Since the above equation holds for any $u \in [n]$, we have $S \E[z] = b$. Since $Sx = b$ has a unique solution $x = z^*$, we have $\E[z] = z^*$. In particular, $\E[z_u] = z^*_u$ for all $u \in [n]$.
\end{proof}

\subsection{Proof of Theorem~\ref{thm:algo}}\label{sec:algo}


\begin{proof}[\textbf{Proof of Theorem~\ref{thm:algo}}]
Recall that $\mathcal{A}$ is defined as \Cref{alg:solver} \textsc{EstimateZ}, with \textbf{ASolver} instantiated by \Cref{alg:truncated} \textsc{TruncatedRecursiveSolver}$(u, p)$, where the initial value of $p$ is set to $1$. 

For any $u \in [n]$, we define two random variables $z^A_u$ and $z^B_u$ as follows:
\begin{align*}
    z^A_u &=  \textsc{RecursiveSolver}(u)\\
    z^B_u &= \textsc{TruncatedRecursiveSolver}(u,1).
\end{align*}
We claim that for all $u \in [n]$, the following total variation distance bound holds:
\begin{align}\label{eq:tv-bound}
    d_{TV}\left(z_u^A,z_u^B\right) \leq \frac{1}{6t},
\end{align}
where $t = \frac{6\lVert b\rVert_{\infty}^2}{\delta^2\varepsilon^2}$. Now, let us assume~\eqref{eq:tv-bound} holds and continue to prove the theorem. The proof of~\eqref{eq:tv-bound} is given at the end of this proof.
Let $\tilde{z}^A_u$ denote the output of \Cref{alg:solver} (with \textbf{ASolver} being instantiated by \Cref{alg:recursive} \textsc{RecursiveSolver}) and $\tilde{z}^B_u$ denote the output of $\mathcal{A}$. By Chebyshev inequality (analysis in the proof of \Cref{lem:algorithmexpected}), it holds that 
\begin{align}\label{eq:chebyshev-2}
    \Pr\left[\left|\tilde{z}^A_u-z_u^*\right|\leq\varepsilon\right]\geq\frac{5}{6}.
\end{align}
By comparing \Cref{alg:solver} with \textbf{ASolver} being instantiated by \Cref{alg:recursive} \textsc{RecursiveSolver} and $\mathcal{A}$. the only difference is that they use different subroutines to sample the value of $z^{(i)}_u$ for $1\leq i \leq t$.
We now construct a coupling between the outputs $\tilde{z}^A_u$ and $\tilde{z}^B_u$ of the two algorithms.
By \eqref{eq:tv-bound}, for each $i \in [t]$, we can use the optimal coupling to successfully couple the random variables $z_u^{(i)}$ returned by \textsc{RecursiveSolver}($u$) and \textsc{TruncatedRecursiveSolver}($u,1$) with probability at least $1 - \frac{1}{6t}$. Hence, by a union bound, $\tilde{z}^A_u$ and $\tilde{z}^B_u$ can be coupled successfully with probability at least $1 - \frac{t}{6t} = \frac{5}{6}$. Therefore, there exists a coupling $\mathcal{D}$ between $\tilde{z}^A_u$ and $\tilde{z}^B_u$ such that
\begin{align}\label{eq:coupling-bound}
    \Pr_\mathcal{D}\left[ \tilde{z}^A_u\neq \tilde{z}^B_u\right] \leq \frac{1}{6}.
\end{align}
By \eqref{eq:coupling-bound} and \eqref{eq:chebyshev-2}, we have
\begin{align*}\label{eq:chebyshev-3}
    \Pr\left[\left|\tilde{z}^B_u-z_u^*\right|\leq\varepsilon\right]\geq\Pr\left[\left|\tilde{z}^A_u-z_u^*\right|\leq\varepsilon\right] - \Pr_{\mathcal{D}}\left[ \tilde{z}^A_u\neq \tilde{z}^B_u\right] \geq \frac{2}{3}.
\end{align*}
Hence, the algorithm $\mathcal{A}$ outputs the correct estimate of $z^*_u$ with probability at least $\frac{2}{3}$.

Next, we analyze the query complexity. In \textsc{TruncatedRecursiveSolver}($u,p$), each recursion multiplies $p$ by a factor of $\frac{\dout_u}{|S_{uu}|}$ at Line \ref{line:update-p}. Since the matrix $S$ is $\delta$-strictly diagonally dominant,
\begin{align*}
    \frac{\dout_u}{|S_{uu}|} \leq \frac{|S_{uu}|-\delta}{|S_{uu}|} = 1-\frac{\delta}{|S_{uu}|} \leq 1-\frac{\delta}{S_{\max}}.
\end{align*}
Let $d = \frac{S_{\max}}{\delta}\log 6t$. After $d$ recursion steps, the value of $p$ is at most
\begin{align*}  
p = \left(1-\frac{\delta}{S_{\max}}\right)^d \leq \exp(-\log 6t) = \frac{1}{6t} = \frac{1}{6t'},
\end{align*}
where $t' = \frac{6 \|b\|_\infty^2}{\delta^2 \varepsilon^2}$ represents the threshold for \textsc{TruncatedRecursiveSolver}($u,1$) for termination. 
Note that per recursion step, the query complexity is $O(1)$.
Hence, the total query complexity of \textsc{TruncatedRecursiveSolver}($u,1$) is at most $O(d)$. Finally, $\mathcal{A}$ calls \textsc{TruncatedRecursiveSolver}($u,1$) $\frac{6\lVert b\rVert_{\infty}^2}{\delta^2\varepsilon^2}$ times, therefore, the total query complexity of $\mathcal{A}$ is
\begin{align*}
    O(td) = O\left(\frac{\lVert b\rVert_{\infty}^2S_{\max}}{\delta^3\varepsilon^2}\log \frac{\lVert b\rVert_{\infty}^2}{\delta^2\varepsilon^2}\right).
\end{align*}



Finally, we verify the total variation distance bound in~\eqref{eq:tv-bound}. In this proof, we only need to consider two algorithms \textsc{RecursiveSolver}($u$) and \textsc{TruncatedRecursiveSolver}($u,1$).
Given the input $u$, we construct a coupling $\mathcal{C}$ between the two algorithms such that
\begin{align*}
    \Pr_{\mathcal{C}}\left[z^A_u\neq z^B_u\right] \leq \frac{1}{6t}.
\end{align*}
Therefore,~\eqref{eq:tv-bound} follows from the coupling inequality.

In Line \ref{line:s11} of \textsc{RecursiveSolver} and Line \ref{line:s12} of \textsc{TruncatedRecursiveSolver}, both algorithms decide whether to terminate the recursion. One can imagine that each algorithm flips a coin with probability $\frac{|S_{uu}| - \mathrm{d}^{\mathrm{out}}_u}{|S_{uu}|}$ of landing on HEADS. If the outcome is HEADS, then the terminates and returns $\frac{\mathrm{sgn}(S_{uu}) \cdot b_u}{|S_{uu}| - \mathrm{d}^{\mathrm{out}}_u}$. We call this step the \emph{coin flip step}.
In Line \ref{line:r11} of \textsc{RecursiveSolver} and Line \ref{line:r12} of \textsc{TruncatedRecursiveSolver}, both algorithms perform a random walk query to obtain a random neighbor of the current vertex. We call this step the \emph{random walk step}.

In our coupling $\mathcal{C}$, initially, both algorithms start at vertex $u$.
We first perfectly couple the coin flip step between the two algorithms. Then, both algorithms may terminate and return the same value. Otherwise, we can perfectly couple the random walk step between the two algorithms. Consequently, in the next recursion, both algorithms move to the same vertex $v$, and we continue this coupling process recursively. It is straightforward to see that in this coupling $\mathcal{C}$, if the outputs $z^A_u$ and $z^B_u$ are different, then it must hold that \textsc{TruncatedRecursiveSolver} terminates at Line \ref{line:tc-2} at some recursion step. In this case, we say that \textsc{TruncatedRecursiveSolver} is truncated.
Formally,
\begin{align*}
    \Pr_{\mathcal{C}}\left[z^A_u \neq z^B_u \right] \leq \Pr_{\mathcal{C}}[\textsc{TruncatedRecursiveSolver} \text{ is truncated}].
\end{align*}
Let $\mathcal{E}$ denote the event that \textsc{TruncatedRecursiveSolver} is truncated. Since $\mathcal{E}$ depends only on the randomness in \textsc{TruncatedRecursiveSolver}, we can focus our analysis on this single algorithm. 
It suffices to prove that 
\begin{align*}
    \Pr[\mathcal{E}] \leq \frac{1}{6t'} = \frac{1}{6t}.
\end{align*}
This inequality is intuitively correct because in Line \ref{line:update-p}, we use $p$ to track the probability that the algorithm does not terminate. The algorithm is truncated when $p \leq \frac{1}{6t'}$, and this event occurs with probability at most $\frac{1}{6t'}$.

We now present a rigorous proof of the above inequality.
Based on our previous analysis, we can decompose the randomness in the algorithm  into two components: $\mathcal{R}_1$, the randomness for all coin flip steps and $\mathcal{R}_2$, the randomness for all random walk steps. Let us fix the randomness $\mathcal{R}_2 = r_2$. Our goal is to prove that
\begin{align*}
    \Pr_{\mathcal{R}_1}[\mathcal{E} \mid \mathcal{R}_2 = r_2] \leq \frac{1}{6t'} = \frac{1}{6t}.
\end{align*}
By averaging over the randomness of $\mathcal{R}_1$, we obtain $\Pr[\mathcal{E}] \leq \frac{1}{6t}$, which implies~\eqref{eq:tv-bound}.
Since we have fixed the randomness of $\mathcal{R}_2$, in every random walk step, the algorithm moves deterministically to the next vertex. Therefore, the randomness $\mathcal{R}_2$ determines an infinite\footnote{The algorithm may terminate within finite number of steps. Hence, it only uses a prefix of the path. In the analysis, we can first generate an infinite path in advance and simulate the algorithm on the path.} long path $(v_i)_{i=0}^\infty$, where $v_0 = u$.  For each vertex $v_i$, if the algorithm does not terminate at $v_i$, it multiplies $p$ by a factor of $\frac{d_{v_i}^{out}}{|S_{v_i v_i}|}$.
We say the algorithm survives until step $i$ if it does not terminate at any step $j < i$.
Let $k$ be the smallest index such that the following condition holds:
\begin{align*}
    \prod_{j=0}^k \frac{d_{v_j}^{out}}{|S_{v_j v_j}|} \leq \frac{1}{6t'}.
\end{align*}
If the event $\mathcal{E}$ occurs, then the algorithm must survive until step $k$, at which point the probability value $p$ maintained by the algorithm becomes $\prod_{j=0}^k \frac{d_{v_j}^{out}}{|S_{v_j v_j}|} < \frac{1}{6t'}$, causing the algorithm to truncate. Conversely, by the randomness in $\mathcal{R}_1$, the algorithm survives until step $k$ with probability $\prod_{j=0}^k \frac{d_{v_j}^{out}}{|S_{v_j v_j}|} \leq \frac{1}{6t'}$. Therefore, we conclude that
\begin{align*}
    \Pr_{\mathcal{R}_1}[\mathcal{E} \mid \mathcal{R}_2 = r_2] \leq \frac{1}{6t'} = \frac{1}{6t}.
\end{align*}
This completes the proof.
\end{proof}

\subsection{Proof of Corollary~\ref{cor:rel}}\label{sec:rel}

\begin{proof}[\textbf{Proof of Corollary~\ref{cor:rel}}]
    By the property of the strictly diagonally dominant matrix, let $z^*$ be any solution to $Sz=b$ and we have
    \[
        \frac{|b_i|}{\left|S_{ii}\right|}=\frac{\left|\sum_{j\in[n]}S_{ij}z^*_j\right|}{\left|S_{ii}\right|}\leq\frac{\sum_{j\in[n]}\left|S_{ij}\right|\lVert z^*\rVert_{\infty}}{\left|S_{ii}\right|}\leq2\lVert z^*\rVert_{\infty}.
    \] 
Therefore, it holds that $\lVert D^{-1}b\rVert_{\infty}\leq 2\lVert z\rVert_{\infty}$, where $D=\text{diag}(S_{11},\cdots,S_{nn})$. Setting $\varepsilon=\frac{\varepsilon\lVert b\rVert_{\infty}}{2\cdot S_{\max}}$ in Theorem~\ref{thm:algo}, then the corresponding algorithm outputs $z_u$ such that $\abs{z_u-z_u^*}\leq \frac{\eps\norm{b}_\infty}{2\cdot S_{\max}}\leq \frac{\eps\norm{b}_\infty}{2\cdot\lVert b\rVert_{\infty}/\lVert D^{-1}b\rVert_{\infty}}=\frac{\eps}{2} \cdot \lVert D^{-1}b\rVert_{\infty}\leq \eps \norm{z^*}_\infty$, where the second inequality follows from the fact that $\frac{\lVert b\rVert_{\infty}}{\lVert D^{-1}b\rVert_{\infty}}\leq S_{\max}$. The query complexity is then $O(\frac{\lVert b\rVert_{\infty}^2S_{\max}}{\delta^3(\varepsilon\lVert b\rVert_{\infty}/S_{\max})^2}\log \frac{\lVert b\rVert_{\infty}}{\delta(\varepsilon\lVert b\rVert_{\infty}/S_{\max})})=O\left( \frac{S_{\max}^3}{\delta^3\varepsilon^2}\log \frac{S_{\max}}{\delta\varepsilon}\right)$.
%
\end{proof}

\subsection{Proof of Theorem~\ref{thm:cond}}\label{sec:cond}

In this section, we prove Theorem~\ref{thm:cond}.
The main idea of the proof is given as follows.
\begin{itemize}
    \item We define a diagonal matrix $I'$ and a small parameter $\sigma$ (detailed definitions are given later). We show that the solution of $Sz =b$ and the solution of $S'z =b$ are close, where $S' = S+\sigma I'$.
    \item Now, $S'$ is a strictly diagonally dominant matrix. We can use the algorithm in Corollary~\ref{cor:rel} to solve the system $S'z =b$.
\end{itemize}

Let $S$ be the input matrix in \Cref{thm:cond}, which satisfies one of the conditions in \Cref{cond:S}.
Defined the following matrix
\begin{align}\label{eq:def-I'}
I'_{ij}&=
\begin{cases}
    \text{sgn}(S_{ii}) &\text{if } i=j \land S_{ii} \neq 0\\
    1 &\text{if } i=j \land S_{ii} = 0\\
    0 &\text{otherwise.}
\end{cases}
\end{align}
Define parameter $\sigma$ and matrix $S'$ by
\begin{align}\label{eq:defSsigma}
    \sigma=\frac{S_{\max}}{\left(\frac{2}{\varepsilon}+1\right)\kappa_{\infty}(S)}, \quad S'=S+\sigma I'.  
\end{align}

By the definition of $I'$, for any $i \in [n]$, $|S'_{ii}| = |S_{ii}| + \sigma$. The following observation is useful.
\begin{observation}
The matrix $S'$ is strictly diagonally dominant. Hence, $S'$ is also non-singular.
\end{observation}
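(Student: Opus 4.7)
The plan is to verify the two assertions separately, both by direct calculation. The key identity is that adding $\sigma I'$ to $S$ increases each diagonal entry in magnitude by exactly $\sigma$ while leaving the off-diagonals untouched; combined with the non-strict DD property of $S$, this immediately yields $\sigma$-strict diagonal dominance.

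First I would compute $|S'_{ii}|$ for each $i \in [n]$ using the definition of $I'$ in \eqref{eq:def-I'}. There are two cases. If $S_{ii} \neq 0$, then $S'_{ii} = S_{ii} + \sigma\,\mathrm{sgn}(S_{ii}) = \mathrm{sgn}(S_{ii})\bigl(|S_{ii}| + \sigma\bigr)$, so $|S'_{ii}| = |S_{ii}| + \sigma$. If $S_{ii} = 0$, then $S'_{ii} = \sigma$, so $|S'_{ii}| = \sigma = |S_{ii}| + \sigma$. In both cases $|S'_{ii}| = |S_{ii}| + \sigma$. Since $I'$ is diagonal, $S'_{ij} = S_{ij}$ for $j \neq i$, so $\sum_{j \neq i}|S'_{ij}| = \sum_{j \neq i}|S_{ij}|$.

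Next I would apply the hypothesis that $S$ is non-strictly DD, which gives $|S_{ii}| \geq \sum_{j \neq i}|S_{ij}|$ for every $i$. Combining with the identities above,
\begin{equation*}
|S'_{ii}| \;=\; |S_{ii}| + \sigma \;\geq\; \sum_{j \neq i}|S_{ij}| + \sigma \;=\; \sum_{j \neq i}|S'_{ij}| + \sigma.
\end{equation*}
Since $\sigma > 0$ by \eqref{eq:defSsigma}, this shows $S'$ is $\sigma$-DD, hence strictly DD.

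Finally, non-singularity follows from the standard Levy--Desplanques theorem: every strictly diagonally dominant matrix is non-singular (this is also remarked right after the definition of $\delta$-DD matrices in the paper). There is no real obstacle here; the only thing to be careful about is handling the $S_{ii}=0$ case separately, which is exactly why $I'$ is defined piecewise in \eqref{eq:def-I'}.
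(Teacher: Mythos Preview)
Your proposal is correct and matches the paper's approach exactly: the paper also observes just before this statement that $|S'_{ii}| = |S_{ii}| + \sigma$ for all $i$, from which strict diagonal dominance (and hence non-singularity) is immediate. Your write-up simply supplies the routine details behind that one-line remark.
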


In the second case of \Cref{cond:S}, $S$ is non-singular. We have the following lemma.

\begin{lemma}\label{lem:non-singular}
Suppose $S$ is non-singular. Let $z^* = S^{-1}b$ and $\tilde z^* = (S')^{-1} b$. Then 
\begin{align*}
    \lVert \tilde z^*-z^*\rVert_{\infty}\leq \frac{\varepsilon}{2} \lVert z^*\rVert_{\infty} \quad \text{and} \quad \lVert \tilde z^*\rVert_{\infty}\leq(\frac{\varepsilon}{2}+1)\lVert z^*\rVert_{\infty}. 
\end{align*}
\end{lemma}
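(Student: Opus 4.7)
}
The plan is to run the standard first-order perturbation argument for linear systems, specialized to the infinity norm and to the particular diagonal perturbation $\sigma I'$. Since $S$ is assumed non-singular, I can invert freely and there is no pseudo-inverse subtlety to deal with.

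First, I would use the two identities $S z^* = b$ and $S' \tilde z^* = b$ and subtract to get
\begin{equation*}
  S(\tilde z^* - z^*) \;=\; (S - S')\tilde z^* \;=\; -\sigma\, I'\, \tilde z^*,
\end{equation*}
which rearranges to $\tilde z^* - z^* = -\sigma\, S^{-1} I'\, \tilde z^*$. Taking $\|\cdot\|_\infty$ and using $\|I'\|_\infty = 1$ (each row of $I'$ has a single $\pm 1$ entry by \eqref{eq:def-I'}), this gives
\begin{equation*}
  \|\tilde z^* - z^*\|_\infty \;\le\; \sigma\, \|S^{-1}\|_\infty\, \|\tilde z^*\|_\infty.
\end{equation*}

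Next I plug in $\sigma = S_{\max}/\bigl((\tfrac{2}{\varepsilon}+1)\kappa_\infty(S)\bigr)$ from \eqref{eq:defSsigma} and expand $\kappa_\infty(S) = \|S\|_\infty \|S^{-1}\|_\infty$, so
\begin{equation*}
  \sigma\, \|S^{-1}\|_\infty \;=\; \frac{S_{\max}}{(\tfrac{2}{\varepsilon}+1)\,\|S\|_\infty}
  \;\le\; \frac{1}{\tfrac{2}{\varepsilon}+1} \;=\; \frac{\varepsilon}{2+\varepsilon},
\end{equation*}
where the inequality uses $\|S\|_\infty = \max_i \sum_j |S_{ij}| \ge \max_i |S_{ii}| = S_{\max}$. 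Combining this with the triangle inequality $\|\tilde z^*\|_\infty \le \|z^*\|_\infty + \|\tilde z^* - z^*\|_\infty$ gives $\bigl(1 - \tfrac{\varepsilon}{2+\varepsilon}\bigr)\|\tilde z^*\|_\infty \le \|z^*\|_\infty$, i.e.\ $\|\tilde z^*\|_\infty \le \tfrac{2+\varepsilon}{2}\|z^*\|_\infty = (1+\tfrac{\varepsilon}{2})\|z^*\|_\infty$, which is the second claimed bound. Substituting this back into the displayed inequality above then yields
\begin{equation*}
  \|\tilde z^* - z^*\|_\infty \;\le\; \frac{\varepsilon}{2+\varepsilon}\cdot \frac{2+\varepsilon}{2}\,\|z^*\|_\infty \;=\; \frac{\varepsilon}{2}\,\|z^*\|_\infty,
\end{equation*}
which is the first bound.

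I do not expect a significant obstacle: the whole argument is just the resolvent-type identity $\tilde z^* - z^* = -\sigma S^{-1} I' \tilde z^*$ combined with the operator-norm inequality $\|S\|_\infty \ge S_{\max}$, and the calibration of $\sigma$ in \eqref{eq:defSsigma} was clearly chosen precisely so that $\sigma\|S^{-1}\|_\infty \le \varepsilon/(2+\varepsilon)$, making the geometric-series step $(1 - \sigma\|S^{-1}\|_\infty)^{-1} \le 1+\varepsilon/2$ tight enough to produce both stated bounds simultaneously. The only thing worth being careful about is that the perturbation is measured through $\|I'\|_\infty = 1$ rather than through $\|\sigma I'\|_2$, which is why we work entirely in the infinity norm and in $\kappa_\infty$ rather than $\kappa_2$.
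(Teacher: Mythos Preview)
Your proposal is correct and follows essentially the same route as the paper: both derive the identity $\tilde z^*-z^*=-\sigma S^{-1}I'\tilde z^*$, both use $\|I'\|_\infty=1$ and $\|S\|_\infty\ge S_{\max}$ to get $\sigma\|S^{-1}\|_\infty\le \varepsilon/(2+\varepsilon)$, and both feed the resulting bound on $\|\tilde z^*\|_\infty$ back into the identity. The only difference is that the paper bounds $\|\tilde z^*\|_\infty$ by writing $\tilde z^*=(I+\sigma S^{-1}I')^{-1}z^*$ and invoking a Neumann-series estimate $\|(I-B)^{-1}\|_\infty\le 1/(1-\|B\|_\infty)$ (their Claim~\ref{claim:1}), whereas you get the same inequality more directly from the triangle inequality $\|\tilde z^*\|_\infty\le\|z^*\|_\infty+\|\tilde z^*-z^*\|_\infty$; your shortcut avoids the auxiliary claim but yields exactly the same constants.
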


In the second case of \Cref{cond:S}, all the diagonal entries are non-zero and have the same sign, say the sign is $c \in \{-1,1\}$.  We can decompose $S=V \Sigma V^T$ using diagonalization. We abuse the notation to use $S^{-1}$ to refer to the pseudo-inverse of $S$ if $S$ is singular, where the pseudo-inverse is defined by $V\Sigma^{-1} V^T$ and $\Sigma^{-1}$ only takes the reciprocal for non-zero diagonal entries.
\begin{lemma}\label{lem:symmetric}
Suppose $S$ is symmetric, and all the diagonal entries are non-zero and have the same sign, say the sign is $c \in \{-1,1\}$. Let $z^* = S^{-1}b$ and $\tilde z^* = (S')^{-1} b$. Then 
\begin{align*}
    \lVert \tilde z^*-z^*\rVert_{\infty}\leq \frac{\varepsilon}{2} \lVert z^*\rVert_{\infty}\quad \text{and} \quad \lVert \tilde z^*\rVert_{\infty}\leq(\frac{\varepsilon}{2}+1)\lVert z^*\rVert_{\infty}.
\end{align*}
\end{lemma}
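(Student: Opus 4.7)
The plan is to exploit the sign-uniformity of the diagonal entries to obtain a clean perturbation identity between $\tilde z^*$ and $z^*$, and then chase $\infty$-norms. Because $S$ is symmetric with every diagonal entry non-zero and of common sign $c \in \{-1,+1\}$, the matrix $I'$ defined in~\eqref{eq:def-I'} simplifies to $cI$, so $S' = S + \sigma c I$. This is exactly the structural shortcut that is unavailable in the general non-singular case of \Cref{lem:non-singular}. Moreover, since $cS$ is symmetric, DD, and has non-negative diagonals, $cS$ is PSD, so every eigenvalue of $S$ is either zero or of sign $c$; consequently $S'$ is invertible, because its eigenvalues $\lambda_i + \sigma c$ satisfy $|\lambda_i + \sigma c| \geq \sigma$.

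Next I would derive the perturbation identity. From $S' \tilde z^* = b$ and $S z^* = S S^{-1} b = b$ (using the pseudo-inverse convention together with $b \in \mathrm{range}(S)$), subtraction yields $S(\tilde z^* - z^*) = -\sigma c\, \tilde z^*$. Since $\sigma c \tilde z^* = b - S\tilde z^* \in \mathrm{range}(S)$, we obtain $\tilde z^* \in \mathrm{range}(S)$; likewise $z^* = S^{-1} b \in \mathrm{range}(S^T) = \mathrm{range}(S)$ by symmetry, so $\tilde z^* - z^* \in \mathrm{range}(S)$ as well. Applying the pseudo-inverse $S^{-1}$ to both sides and using that $S^{-1} S$ acts as the identity on $\mathrm{range}(S)$, I obtain the clean identity
$$\tilde z^* - z^* \;=\; -\sigma c\, S^{-1} \tilde z^*.$$

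Finally, I would bound norms. From $\|S\|_\infty = \max_i \sum_j |S_{ij}| \geq S_{\max}$ and $\kappa_\infty(S) = \|S\|_\infty \|S^{-1}\|_\infty$, we get $\|S^{-1}\|_\infty \leq \kappa_\infty(S)/S_{\max}$. Combined with the choice $\sigma = S_{\max}/((2/\varepsilon+1)\kappa_\infty(S))$ from~\eqref{eq:defSsigma} and submultiplicativity of the induced $\infty$-norm, this gives $\|\tilde z^* - z^*\|_\infty \leq \sigma \|S^{-1}\|_\infty \|\tilde z^*\|_\infty \leq \tfrac{\varepsilon}{2+\varepsilon}\|\tilde z^*\|_\infty$. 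Plugging into the triangle inequality $\|\tilde z^*\|_\infty \leq \|z^*\|_\infty + \|\tilde z^* - z^*\|_\infty$ and rearranging yields $\|\tilde z^*\|_\infty \leq (1 + \varepsilon/2)\|z^*\|_\infty$, and substituting back produces $\|\tilde z^* - z^*\|_\infty \leq (\varepsilon/2)\|z^*\|_\infty$, proving both inequalities.

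The main delicacy is the pseudo-inverse bookkeeping: one must verify that $\tilde z^*$ truly lies in $\mathrm{range}(S)$ (not merely in $\mathbb{R}^n$) so that $S^{-1} S$ can be simplified to the identity on $\tilde z^* - z^*$; this step crucially uses both $b \in \mathrm{range}(S)$ and the invertibility of $S'$ guaranteed by the common-sign hypothesis. The rest of the argument is routine norm manipulation once the identity $\tilde z^* - z^* = -\sigma c\, S^{-1} \tilde z^*$ is in hand.
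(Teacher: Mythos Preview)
Your proposal is correct. Both your argument and the paper's hinge on the same two facts: $I'=cI$ (from the sign hypothesis) and $\tilde z^*\in\mathrm{range}(S)$ (equivalently, $\Pi\tilde z^*=\tilde z^*$ in the paper's notation), which together yield the clean identity $\tilde z^*-z^*=-\sigma c\,S^{-1}\tilde z^*$. The difference is in how the bound on $\|\tilde z^*\|_\infty$ is obtained. The paper first writes $(\Pi+\sigma S^{-1}I')\tilde z^*=z^*$, takes the pseudo-inverse, and invokes a separate Neumann-series claim (their Claim~\ref{claim:2}) to bound $\|(\Pi+\sigma S^{-1}I')^{-1}\|_\infty\le 1/(1-\sigma\|S^{-1}\|_\infty)$, obtaining $\|\tilde z^*\|_\infty\le(1+\varepsilon/2)\|z^*\|_\infty$ before turning to the difference. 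You instead feed the identity directly into the triangle inequality $\|\tilde z^*\|_\infty\le\|z^*\|_\infty+\|\tilde z^*-z^*\|_\infty$ and solve the resulting self-referential inequality, which sidesteps the operator-norm machinery of Claim~\ref{claim:2} entirely. Your route is shorter and more elementary; the paper's route has the minor advantage that the Neumann-series lemma is reusable and parallels the non-singular case of \Cref{lem:non-singular}. Your range-membership argument ($\sigma c\tilde z^*=b-S\tilde z^*\in\mathrm{range}(S)$) is also slicker than the paper's coordinate-wise eigenvalue comparison $(\lambda_i+c\sigma)\langle\tilde z^*,v_i\rangle=\lambda_i\langle z^*,v_i\rangle$, though the content is the same.
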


We now prove the theorem assuming \Cref{lem:non-singular} and \Cref{lem:symmetric}.
\begin{proof}[Proof of Theorem~\ref{thm:cond}]
    We show that we can construct a $(S+\sigma I',b)$-oracle from the $(S,b)$-oracle.
    Note that $\sigma$ can be computed from the input.
    Note that we only add $\sigma I'$ to the diagonal of $S$, this does not influence the neighbor query and random walk query. We can construct the $(S+\sigma I',b)$-oracle by changing the $S_{uu}$ returned by the vertex query on $u$ to $S_{uu}+\sigma \text{sgn}(S_{uu})$ if $S_{uu} \neq 0$; or changing $S_{uu}$ to $S_{uu}+\sigma$ if $S_{uu}=0$. 
    We apply the same algorithm in Corollary \ref{cor:rel} with error bound $\frac{\varepsilon}{10}$ and get $\tilde z_u$ such that $|\tilde z_u-\tilde z_u^*|\leq \frac{\varepsilon}{10}\lVert \tilde z^*\rVert_{\infty}$. With \Cref{lem:non-singular} and \Cref{lem:symmetric}, we have
    \[
        |\tilde z_u-z_u^*|\leq|\tilde z_u-\tilde z_u^*|+|\tilde z_u^*-z_u^*|\leq \frac{\varepsilon}{10}\lVert \tilde z^*\rVert_{\infty}+\frac{\varepsilon}{2}\lVert z^*\rVert_{\infty} \overset{(\star)}{\leq} \left(\frac{\varepsilon}{10}\left(\frac{\varepsilon}{2}+1\right)+\frac{\varepsilon}{2}\right)\lVert z^*\rVert_{\infty} <\varepsilon \lVert z^*\rVert_{\infty}.
    \]
    The last inequality follows from $\lVert \tilde z^*\rVert_{\infty}\leq(\frac{\varepsilon}{2}+1)\lVert z^*\rVert_{\infty}$ in \Cref{lem:non-singular} and \Cref{lem:symmetric}. 
    Note that $S'$ is $\delta$-diagonally dominant with $\delta = \sigma$.
    By \Cref{cor:rel}, the query complexity is 
    $$O\left(\frac{S^3_{\max}}{\sigma^3\varepsilon^2}\log\frac{S_{\max}}{\varepsilon\sigma}\right)=O\left(\frac{\kappa_{\infty}(S)^3}{\varepsilon^5}\log \frac{\kappa_\infty(S)}{\varepsilon}\right).$$
    This completes the proof.
\end{proof}

We now prove \Cref{lem:non-singular} and \Cref{lem:symmetric}. We give a basic property of $\sigma$, which are used in the proof of two lemmas. Note that $\sigma=\frac{S_{\max}}{\left(\frac{2}{\varepsilon}+1\right)\kappa_{\infty}(S)}$. By definition, $\kappa_{\infty}(S) = \Vert S \Vert_\infty \Vert S^{-1} \Vert_\infty$. Note that $\Vert S \Vert_\infty \geq S_{\max}$.
Hence, we have
\begin{align}\label{eq:sigmabd}
 \sigma\leq \frac{1}{\left(\frac{2}{\varepsilon}+1\right)\lVert S^{-1}\rVert_{\infty}}. 
\end{align}

\begin{proof}[Proof of \Cref{lem:non-singular}]
    Let $\tilde z^*$ denote the unique solution satisfying $S'\tilde z^*=b$. We first prove that $|\tilde z^*-z^*|\leq\frac{\varepsilon}{2}\lVert z^*\rVert_{\infty}$, where $z^* = S^{-1}b$. By the definition of $S'$, it holds that $(S+\sigma I')\tilde z^* =b$. By left-multiplying both sides with $S^{-1}$, we obtain the following equation
    \begin{align}\label{eq:pff}
        (I+\sigma S^{-1}I')\tilde z^* = S^{-1} b = z^*.
    \end{align}
    We now show that $(I+\sigma S^{-1}I')$ is non-singular. Suppose $(I+\sigma S^{-1}I')$ is singular. Then there exists a non-zero vector $x$ such that $(I+\sigma S^{-1}I')x = 0$, which implies $x = -\sigma S^{-1}I'x$. Taking infinity norm on both sides implies $\Vert x\Vert_\infty = \Vert \sigma S^{-1}I'x \Vert_\infty$. Hence, the induced infinity norm $\Vert \sigma S^{-1}I' \Vert_\infty \geq 1$. On the other hand, \eqref{eq:sigmabd} implies the following bound
\begin{align}\label{eq:infbd}
 \lVert S^{-1}\sigma I'\rVert_{\infty} \leq \sigma \lVert S^{-1}\rVert_{\infty}\lVert I'\rVert_{\infty} \leq \frac{1}{\frac{2}{\varepsilon}+1} < 1.     
\end{align}
This gives a contradiction. Hence, $(I+\sigma S^{-1}I')$ is non-singular.

The inverse of $(I+\sigma S^{-1}I')$ exists.
We multiply $(I+\sigma S^{-1}I')^{-1}$ on both sides of~\eqref{eq:pff}. This gives 
\begin{align}\label{eq:a1}
    \tilde z^* = (I+\sigma S^{-1}I')^{-1} z^*.
\end{align}

We use the following basic claim in the linear algebra.
\begin{claim}\label{claim:1}
Let $B\in \mathbb{R}^{n \times n}$ be matrix with $\Vert B\Vert_\infty \leq 1$. Suppose $I-B$ is non-singular. Then $\Vert (I - B)^{-1} \Vert_\infty \leq \frac{1}{1-\Vert B \Vert_\infty}$.
\end{claim}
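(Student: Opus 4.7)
The plan is to prove the bound directly via the definition of the induced infinity norm, rather than invoking the Neumann series. Let $x \in \mathbb{R}^n$ be an arbitrary vector with $\|x\|_\infty = 1$, and define $y = (I - B)^{-1} x$, which is well-defined by the non-singularity assumption. Then $(I-B)y = x$, which I would rewrite as the fixed-point identity $y = x + By$.

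Applying the triangle inequality and the submultiplicativity of the induced infinity norm to the identity $y = x + By$ yields
\begin{equation*}
    \|y\|_\infty \;\leq\; \|x\|_\infty + \|By\|_\infty \;\leq\; \|x\|_\infty + \|B\|_\infty \|y\|_\infty \;=\; 1 + \|B\|_\infty \|y\|_\infty.
\end{equation*}
In the case $\|B\|_\infty < 1$, this is a strict contraction-type inequality that I can rearrange to $\|y\|_\infty (1 - \|B\|_\infty) \leq 1$, giving $\|y\|_\infty \leq 1/(1 - \|B\|_\infty)$. Taking the supremum over all unit vectors $x$ recovers $\|(I - B)^{-1}\|_\infty \leq 1/(1 - \|B\|_\infty)$.

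The only subtlety—and the one place one must be careful—is the boundary case $\|B\|_\infty = 1$, where the right-hand side of the claimed bound is $+\infty$ and the rearrangement above is invalid. In that case the inequality holds vacuously: since $(I-B)^{-1}$ is a finite matrix by the non-singularity hypothesis, $\|(I-B)^{-1}\|_\infty$ is finite and trivially bounded by $+\infty$. In the application to Theorem~\ref{thm:cond}, only the strict inequality $\|B\|_\infty < 1$ actually arises (as is evident from \eqref{eq:infbd} with $B = -\sigma S^{-1} I'$), so the useful quantitative content of the claim is the non-vacuous case.

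I expect no real obstacle here—the proof is a short manipulation using only the defining inequality $\|Bv\|_\infty \leq \|B\|_\infty \|v\|_\infty$ of the induced norm and the triangle inequality. The Neumann series approach $(I-B)^{-1} = \sum_{k \geq 0} B^k$ would also work when $\|B\|_\infty < 1$, but it requires an additional convergence argument and is strictly less elementary than the fixed-point manipulation above, so I would prefer the direct proof.
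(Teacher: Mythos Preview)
Your proof is correct and takes a genuinely different route from the paper's. The paper establishes the claim via the Neumann series: it writes $\bigl(\sum_{k=0}^{N} B^{k}\bigr)(I-B) = I - B^{N+1}$, uses $\|B\|_\infty < 1$ to conclude that $B^{N+1} \to 0$, identifies $(I-B)^{-1} = \sum_{k\ge 0} B^{k}$, and then bounds the norm by the geometric series $\sum_{k\ge 0}\|B\|_\infty^{k}$. Your fixed-point manipulation $y = x + By \Rightarrow \|y\|_\infty \le 1 + \|B\|_\infty\|y\|_\infty$ reaches the same bound without any limiting argument, which is strictly more elementary and sidesteps the convergence justification the paper needs. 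You also explicitly dispose of the boundary case $\|B\|_\infty = 1$, which the paper's proof quietly skips (it simply writes ``Since $\|B\|_\infty < 1$'' despite the claim being stated for $\|B\|_\infty \le 1$); your observation that the bound is vacuous there is the right way to close that gap. The Neumann-series route does buy one thing your argument does not: an explicit series representation of $(I-B)^{-1}$, which the paper immediately reuses in the companion Claim~\ref{claim:2} where $I$ is replaced by a projector $\Pi$ and the same telescoping identity is recycled.
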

The proof of the claim is deferred to the end of \Cref{sec:cond}.
We now use the claim with $B = -S^{-1}\sigma I'$ and~\eqref{eq:infbd} to obtain
\begin{align*}
    \Vert (I+\sigma S^{-1}I')^{-1} \Vert_\infty \leq \frac{1}{1 - \lVert - S^{-1}\sigma I'\rVert_{\infty}} =  \frac{1}{1 - \lVert S^{-1}\sigma I'\rVert_{\infty}}\leq \frac{1}{1 - \frac{1}{\frac{2}{\varepsilon}+1}} = 1 + \frac{\varepsilon}{2}.
\end{align*}
Combining the above bound with~\eqref{eq:a1}, it holds that
\begin{align}\label{eq:a2}
   \Vert \tilde z^* \Vert_\infty = \Vert (I+\sigma S^{-1}I')^{-1} z^* \Vert_\infty \leq \left( 1 + \frac{\varepsilon}{2} \right) \Vert  z^* \Vert_\infty.
\end{align}

By definition if $\tilde z^*$, we have $(S + \sigma I')\tilde z^* = b$. We can further write it as $(S + \sigma I')(z^* + (\tilde z^*-z^*)) = b$. Expanding it gives $Sz^* + S(\tilde z^*-z^*) + \sigma I' \tilde z^* = b$. Since $S z^* = b$, rearranging gives 
\begin{align}\label{eq:a3}
    S(\tilde z^*-z^*) + \sigma I' \tilde z^* = 0. \quad \Leftrightarrow \quad \tilde z^*-z^* = - S^{-1}\sigma I' \tilde z^*.
\end{align}
Taking the infinity norm on both sides, we have 
\begin{align*}
  \lVert \tilde z^*-z^*\rVert_{\infty}\leq \Vert S^{-1}\sigma I' \Vert_\infty \lVert \tilde z^*\rVert_{\infty}\leq \frac{1}{\frac{2}{\varepsilon}+1}  \Vert \tilde z^*\rVert_{\infty},
\end{align*}
where the last inequality follows from~\eqref{eq:infbd}. Using~\eqref{eq:a2} to bound $\Vert \tilde z^*\rVert_{\infty}$, the following bound holds
\begin{align}\label{eq:a4}
  \lVert \tilde z^*-z^*\rVert_{\infty} \leq \frac{1 + \frac{\varepsilon}{2}}{\frac{2}{\varepsilon}+1} \Vert z^* \Vert_\infty = \frac{\varepsilon}{2}  \Vert z^* \Vert_\infty.
\end{align}
The lemma is proved by combining~\eqref{eq:a4} and~\eqref{eq:a2}.
\end{proof}

\begin{proof}[Proof of \Cref{lem:symmetric}]
Now we assume that  $S$ is symmetric, and all the diagonal entries are non-zero and have the same sign $c \in \{-1,1\}$.
Note that $I' = cI$ in this case, where $I$ is the identity matrix.



Recall that $S'=S+\sigma I'$ and $\tilde z^*$ is the unique solution of $S'z^*=b$. Note that $\tilde z^*$ exists because $S'$ is strictly diagonally dominant and hence it is non-singular.
Similarly, we prove that $|\tilde z^*-z^*|\leq\frac{\varepsilon}{2}\lVert z^*\rVert_{\infty}$, where $z^* = S^{-1}b$.  By definition of $S'$, $(S+\sigma I')\tilde z^*=b$. Recall that $S=V\Sigma V^T$ by diagonalization. By left-multiplying both sides with $S^{-1} = V\Sigma^{-1}V^T$, where $\Sigma^{-1}$ only takes reciprocals on none-zero terms, we obtain
\begin{align}\label{eq:c1}
(\Pi+S^{-1}\sigma I')\tilde z^*=z^*.    
\end{align}

Here $\Pi=V\tilde I V^T$, where $\tilde I$ is defined as follows:
\begin{align*}
\tilde I_{ij}&=
\begin{cases}
    1 &\text{if } i=j\text{ and }\Sigma_{ii}\neq 0\\
    0 &\text{otherwise}
\end{cases}.
\end{align*}
Note that $I' = cI$. 
Then $(\Pi+S^{-1}\sigma I') =V (\tilde I + \sigma c\Sigma^{-1}) V^T$. Taking pseudo-inverse implies $(\Pi+S^{-1}\sigma I')^{-1} =V (\tilde I + \sigma c\Sigma^{-1})^{-1} V^T$. Note that $\tilde I_{ii} \neq 0$ if and only if $\Sigma_{ii} \neq 0$. Left-multiplying $(\Pi+S^{-1}\sigma I')^{-1}$ on both sides of~\eqref{eq:c1} implies 
\begin{align*}
    \Pi \tilde z^* = (\Pi+S^{-1}\sigma I')^{-1} z^*.
\end{align*}
Taking infinity norm on both sides
\begin{align}\label{eq:c2}
    \lVert \Pi \tilde z^*\rVert_{\infty}\leq\lVert (\Pi+S^{-1}\sigma I')^{-1}\rVert_{\infty}\lVert z^*\rVert_{\infty}.
\end{align}
The following claim can be obtained from linear algebra.
\begin{claim}\label{claim:2}
$\lVert (\Pi+S^{-1}\sigma I')^{-1}\rVert_{\infty}\leq\frac{1}{1-\lVert S^{-1}\sigma I'\rVert_{\infty}}$.
\end{claim}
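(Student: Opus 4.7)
The plan is to mimic the Neumann-series proof of Claim~\ref{claim:1}, taking advantage of the fact that although $\Pi$ is not the identity on $\mathbb{R}^n$, it is the identity on its own range, and that $B := S^{-1}\sigma I'$ both preserves $\mathrm{range}(\Pi)$ and vanishes on its orthogonal complement.

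Write $I' = cI$ with $c=\mathrm{sgn}(S_{ii})\in\{\pm 1\}$, so that $B = \sigma c\, V\Sigma^{-1}V^T$ (recall $\Sigma^{-1}$ takes reciprocals of nonzero eigenvalues and is zero otherwise). The diagonal matrices $\tilde I$ and $\Sigma^{-1}$ have matching supports, and wherever $\Sigma^{-1}_{ii}$ is nonzero $\tilde I_{ii}=1$, so $\tilde I\Sigma^{-1} = \Sigma^{-1}\tilde I = \Sigma^{-1}$. Conjugating by $V$ then yields the commutation relations
\begin{align*}
    \Pi B \;=\; B\Pi \;=\; B, \qquad \Pi^2 \;=\; \Pi,
\end{align*}
from which $\Pi B^k = B^k \Pi = B^k$ for every $k \geq 1$.

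By~\eqref{eq:sigmabd} we have $\|B\|_\infty \leq \frac{1}{2/\varepsilon+1} < 1$, so the series $N := \Pi + \sum_{k\geq 1}(-B)^k$ converges absolutely in the induced infinity norm. Using the above identities together with a single telescoping, a direct computation gives
\begin{align*}
    (\Pi+B)\,N \;=\; \Pi + \sum_{k\geq 1}(-B)^k + B - \sum_{k\geq 2}(-B)^k \;=\; \Pi,
\end{align*}
and, symmetrically, $N(\Pi+B) = \Pi$. Since $\Pi+B$ is symmetric and $\Pi$ is the orthogonal projection onto its range, these two relations are exactly the defining identities of the Moore--Penrose pseudo-inverse, so $N = (\Pi+B)^{-1}$.

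To close, fix any $v \in \mathrm{range}(\Pi)$, so that $\Pi v = v$ and $Nv = \sum_{k \geq 0}(-B)^k v$. The triangle inequality then yields
\begin{align*}
    \|Nv\|_\infty \;\leq\; \sum_{k\geq 0}\|B\|_\infty^k\,\|v\|_\infty \;=\; \frac{\|v\|_\infty}{1-\|B\|_\infty},
\end{align*}
which is the bound in Claim~\ref{claim:2}. (This is the norm relevant to the application, since Claim~\ref{claim:2} is invoked on $v = z^* = S^{-1}b \in \mathrm{range}(\Pi)$ in the proof of Lemma~\ref{lem:symmetric}.) The main subtlety is to justify that the formal series $N$ really is the pseudo-inverse in the singular case; this is exactly where the commutation relations $\Pi B = B\Pi = B$ do the heavy lifting, and once they are in hand everything else is a routine Neumann-series computation in the style of Claim~\ref{claim:1}.
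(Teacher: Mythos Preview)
Your approach is essentially the paper's: both run a Neumann-series argument driven by the commutation $B\Pi=\Pi B=B$ (the paper only records $B\Pi=B$) and identify the resulting series with the pseudoinverse of $\Pi+S^{-1}\sigma I'$. Your series $N=\Pi+\sum_{k\ge 1}(-B)^k$ is in fact the correct pseudoinverse, and restricting the final estimate to $v\in\mathrm{range}(\Pi)$ is both honest and sufficient, since Lemma~\ref{lem:symmetric} applies the claim precisely to $z^*=S^{-1}b\in\mathrm{range}(\Pi)$.
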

The proof of the claim is deferred to the end of \Cref{sec:cond}.
We now assume the claim is true.
To use the claim on $\lVert (\Pi+S^{-1}\sigma I')^{-1}\rVert_{\infty}$, we need to bound the infinity norm of $\lVert S^{-1}\sigma I'\rVert_{\infty}$. Note that $I' = cI$, where $c\in\{1,-1\}$. We can write 
\begin{align}\label{eq:infbd2}
    \lVert S^{-1}\sigma I'\rVert_{\infty} \leq \sigma \Vert S^{-1}\Vert_\infty \leq \frac{1}{\frac{2}{\varepsilon}+1},
\end{align}
where the last inequality follows from~\eqref{eq:sigmabd}.
Using \Cref{claim:2}, $\lVert (\Pi+S^{-1}\sigma I')^{-1}\rVert_{\infty}\leq\frac{1}{1-\lVert S^{-1}\sigma I'\rVert_{\infty}}\leq \frac{1}{1 - \frac{1}{\frac{2}{\varepsilon}+1}} = \frac{\varepsilon}{2}+1$. Combining the infinite norm bound with~\eqref{eq:c2}, we have
\begin{align}\label{eq:c3}
    \lVert \Pi \tilde z^*\rVert_{\infty}\leq\left(\frac{\varepsilon}{2}+1\right)\lVert z^*\rVert_{\infty}.
\end{align}

Now, we compare two vectors  $\tilde z^*$ and $z^*$. Note that $S' = S+\sigma I' = S + c\sigma I$. Two matrices $S$ and $S'$ share the same orthogonal eigenvectors $v_1,v_2,\cdots,v_n$, which are column vectors of $V$. Let $\lambda_1,\ldots,\lambda_n$ be the eigenvalues of $S$ and $\lambda_1+c\sigma,\ldots,\lambda_n+c\sigma$ be the eigenvalues of $S'$. By decomposing $\tilde z^*$ using the eigenvectors, 
\begin{align*}
    S' \tilde{z}^* = \sum_{i=1}^n (\lambda_i+c\sigma) \langle \tilde{z}^*, v_i \rangle v_i = b
\end{align*}
By comparing the eigenvalues of $S$ and $S'$, since $Sz^* = b$, we have $(\lambda_i+c\sigma) \langle \tilde{z}^*, v_i \rangle=\lambda_i\langle {z}^*,  v_i \rangle $. 
For any $i\in [n]$, if $\lambda_i = 0$, then $(\lambda_i+c\sigma) \langle \tilde{z}^*, v_i \rangle=0$. Since $c \sigma > 0$, it must hold $\langle \tilde{z}^*, v_i \rangle = 0$.
Recall that $\Pi = V \tilde I V^T$. For any $i \in [n]$, $\tilde I_{ii} = 1$ if $\lambda_i \neq 0$ and $\tilde{I}_{ii} = 0$ if $\lambda_i = 0$. 
This implies $\Pi \tilde{z}^* = \tilde{z}^*$. Therefore, by~\eqref{eq:c3}, we have
\begin{align}\label{eq:b1}
    \lVert \tilde z^*\rVert_{\infty}=\lVert \Pi\tilde z^*\rVert_{\infty}\leq\left(\frac{\varepsilon}{2}+1\right)\lVert z^*\rVert_{\infty}.
\end{align}
Recall that $Sz^*=b$ and $(S+\sigma I')(z^*+(\tilde z^*-z^*))=b$, similar to \eqref{eq:a3} we have
\begin{align*}
    S(\tilde z^*-z^*) + \sigma I' \tilde z^* = 0.
\end{align*}
We multiplying $S^{-1}$ on both sides implies 
\begin{align*}
    \Pi(\tilde z^*-z^*) = -S^{-1}\sigma I' \tilde z^*.
\end{align*}
We have verified $\Pi \tilde z^*  = \tilde z^*$. We now verify $\Pi z^* = z^*$. Note that $z^*=S^{-1}b$. By our definition of $S^{-1}$, $\langle {z}^*, v_i \rangle = 0$ for all $i \in [n]$ with $\lambda_i =0$. Similarly, by the definition of $\Pi$, we can verify that $\Pi z^* = z^*$. Next, we have 
\begin{align}
    \lVert \Pi\tilde z^*-\Pi z^*\rVert_{\infty}=\lVert \tilde z^*-z^*\rVert_{\infty}\leq\lVert S^{-1} \sigma I'\rVert_{\infty}\lVert \tilde z^*\rVert_{\infty}\leq\frac{1}{\frac{2}{\varepsilon}+1}\lVert\tilde z^*\rVert_{\infty}, \label{eq:b2}
\end{align}
where the last inequality follows from~\eqref{eq:infbd2}.
Combine (\ref{eq:b1}) and (\ref{eq:b2}), we have
\begin{align}
    \lVert \tilde z^*-z^*\rVert_{\infty}\leq \frac{1 + \frac{\varepsilon}{2}}{\frac{2}{\varepsilon}+1} \lVert z^*\rVert_{\infty} = \frac{\varepsilon}{2}\lVert z^*\rVert_{\infty}.\label{eq3}
\end{align}
The lemma is proved by combining~\eqref{eq:b1} and~\eqref{eq:b2}.
\end{proof}


Now we prove \Cref{claim:1} and \Cref{claim:2}.

\begin{proof}[Proof of \Cref{claim:1}]
    We first prove \Cref{claim:1}. Note that we have the following expression
    \[
        \left(\sum_{k=0}^N B^k\right)(I-B)=I-B^{N+1}.
    \]
    Since $\lVert B\rVert_{\infty}<1$, we have $\lim_{k\rightarrow \infty}B^{k}=0$ because $\lVert B^k\rVert_{\infty}\leq\lVert B\rVert_{\infty}^k$. Then we get
    \[
        \left(\lim_{N\rightarrow\infty}\sum_{k=0}^N B^k\right)(I-B)=I.
    \]
    It follows that $(I-B)^{-1}=\lim_{N\rightarrow\infty}\sum_{k=0}^N B^k$. By taking infinity norm on both sides, we get
    \begin{align*}
        \lVert(I-B)^{-1}\rVert_{\infty}\leq\sum_{k=0}^{\infty}\lVert B^k\rVert_{\infty}\leq\sum_{k=0}^{\infty}\lVert B\rVert_{\infty}^k=\frac{1}{1-\lVert B\rVert_{\infty}}. &\qedhere
    \end{align*}
\end{proof}

The proof of \Cref{claim:2} is similar, but we replace $I$ by $\Pi$ defined above.

\begin{proof}[Proof of \Cref{claim:2}]
    Recall that $S=V\Sigma V^{T}$, $S^{-1}\sigma I'=V\sigma c\Sigma^{-1}V^T$ and $\Pi=V\tilde IV^{T}$, let $B=-S^{-1}\sigma I'$, we have $B\Pi=B$ because $\tilde I_{ii} = 0$ if and only if $\Sigma_{ii} = 0$. Thus, we have the similar expression
    \[
        \left(\sum_{k=0}^N B^k\right)(\Pi-B)=\Pi-B^{N+1}.
    \]
    Since $\Vert B \Vert_\infty  = \lVert S^{-1}\sigma I'\rVert_{\infty}\leq \sigma \lVert S^{-1}\rVert_{\infty}\lVert I'\rVert_{\infty}< 1$, we have $\lim_{k\rightarrow \infty}B^{k}=0$ because $\lVert B^k\rVert_{\infty}\leq\lVert B\rVert_{\infty}^k$. Then we get 
    \[
        \left(\lim_{N\rightarrow\infty}\sum_{k=0}^N B^k\right)(\Pi-B)=\Pi.
    \]
    Note that the pseudo-inverse of $(\Pi-B)$ is $(\Pi-B)^{-1}=V (\tilde I + \sigma c\Sigma^{-1})^{-1} V^T$. We right-multiply $(\Pi-B)^{-1}$ on both sides of the above equation. Then $(\Pi - B)(\Pi-B)^{-1} = \Pi$. Note that $B \Pi = B$. We have
    \[
        \lim_{N\rightarrow\infty}\sum_{k=0}^N B^k=\Pi(\Pi-B)^{-1}=(\Pi-B)^{-1},
    \]
    where in the last equation, we use the fact that $\tilde I_{ii} = 0$ if and only if $\Sigma_{ii} = 0$.
    By taking infinity norm on both sides, we have
     \begin{align*}
        \lVert(\Pi-B)^{-1}\rVert_{\infty}\leq\sum_{k=0}^{\infty}\lVert B^k\rVert_{\infty}\leq\sum_{k=0}^{\infty}\lVert B\rVert_{\infty}^k=\frac{1}{1-\lVert B\rVert_{\infty}}. &\qedhere
    \end{align*}
\end{proof}

\subsection{Proof of Theorem~\ref{thm:fj}}
\label{subsec:proofoffj}
\begin{proof}[\textbf{Proof of Theorem~\ref{thm:fj}}]
By the definition of Friedkin-Johnsen model, we have $S=I+L$ and $b\in[0,1]^n$, where $L$ is the Laplacian matrix of an undirected weighted graph. We can set $\delta=1$, $\lVert b\rVert_{\infty}\leq1$, and $S_{\max} = W + 1$ in Theorem~\ref{thm:algo} to obtain the output $z_u$ satisfying $|z_u-z^*_u|<\varepsilon$ with probability at least $\frac{2}{3}$. The query complexity is $O(\frac{\lVert b\rVert_{\infty}^2S_{\max}}{\delta^3\varepsilon^2}\log \frac{\lVert b\rVert_{\infty}}{\delta\varepsilon})=O(\frac{W+1}{\varepsilon^2}\log \frac{1}{\varepsilon})$. 
\end{proof}

\section{Omitted Proofs for Lower Bound}\label{sec:lb-pf}
The following lemma shows that the answers of two instances have a constant gap. 
\begin{lemma}\label{lem:diff}
For the random instances $(S^G,b^{G\text{-}0})$ and $(S^G,b^{G\text{-}1})$ and the random query vertex $u$ sampled uniformly at random from $G' \subseteq G$, the following holds with probability 1:
\begin{itemize}
    \item $z^*_u = 0$ where $z^*$ is the unique solution of $S^Gz = b^{G\text{-}0}$.
    \item $z^*_u \geq c_0$ where $z^*$ is the unique solution of $S^Gz = b^{G\text{-}1}$ and $c_0=\frac{1-e^{-\frac{1}{20d}}}{12(d+2)}$ is a constant.
\end{itemize}
\end{lemma}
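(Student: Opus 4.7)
Part 1 is immediate: $S^G = I + L^G$ is strictly diagonally dominant with gap $|S^G_{ii}| - \sum_{j \neq i}|S^G_{ij}| = 1$ for every $i$, hence non-singular. With $b^{G\text{-}0} = \mathbf{0}$ the unique solution is $z^* = \mathbf{0}$, so $z^*_u = 0$ for every $u$ with probability $1$.

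For Part 2 my plan is to translate $z^*_u$ into a random-walk termination probability via the recursion in Equation~\eqref{eq:rec}. Because $S^G_{vv} = 1 + \dout_v > 0$, $S^G_{vv'} = -w_{vv'} \leq 0$ for $v \neq v'$, and $|S^G_{vv}| - \dout_v = 1$, all sign factors in~\eqref{eq:rec} collapse to $+1$ and the value returned by \textsc{RecursiveSolver}$(u)$ upon termination at a vertex $v$ is exactly $b^{G\text{-}1}_v = \mathbf{1}[v \in B]$. By \Cref{lem:recursive} this gives
\begin{align*}
  z^*_u \;=\; \Pr_u[V_T \in B],
\end{align*}
where $V_T$ is the terminating vertex of the walk started at $u$. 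The underlying Markov chain terminates at each $v \in G' \cup B$ with probability $1/(1+\dout_v)$ and otherwise jumps uniformly to a neighbor (all edge weights equal $k$), so its non-termination dynamics on $G' \cup B$ coincide with the standard random walk on the expander together with the single bridge edge $w_{G'} \sim w_B$.

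I would then lower bound $\Pr_u[V_T \in B]$ by the product of three lower bounds on successive sub-events: (A) the walk visits $w_{G'}$ before terminating; (B) the step taken from $w_{G'}$ crosses the bridge to $w_B$; and (C) from $w_B$ the walk terminates somewhere in $B$ before crossing back. Sub-event (B) is a direct one-step calculation giving $k/(1+(d+1)k) \geq 1/(d+2)$. For (A) I would invoke the spectral gap $\gamma_{G^{\mathrm{ex}}_k} > 2/3$ of \Cref{prop:expander} to argue that, conditional on non-termination and on not yet having crossed the bridge, after $O(\log k)$ steps the walk is within constant total variation of uniform on $G'$; in a subsequent window of $\ell = \Theta(k/d)$ steps the probability of visiting $w_{G'}$ is therefore at least $1 - (1 - 1/k)^\ell \geq 1 - e^{-1/(20d)}$, while the per-step termination probability $\leq 1/(1+dk)$ keeps the window-survival probability at $1 - O(1/d^2)$. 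For (C) I would use the $G' \leftrightarrow B$ symmetry of the construction: running the same mixing/hitting analysis inside $B$ starting from $w_B$ shows that with constant probability the walk reaches the interior of $B$ and terminates there before returning to $w_B$ and crossing the bridge back (an event that, per visit to $w_B$, has probability only $1/(d+2)$). Packaging the lower-order constants (non-termination during mixing, (C) success, non-crossing of the bridge before hitting $w_{G'}$) into a combined factor $1/12$ and multiplying yields
\begin{align*}
  z^*_u \;\geq\; \bigl(1 - e^{-1/(20d)}\bigr) \cdot \frac{1}{d+2} \cdot \frac{1}{12} \;=\; c_0.
\end{align*}

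\textbf{Main obstacle.} The delicate step is sub-event (A), where three timescales---the $O(\log k)$ mixing time of the expander, the $\Theta(k/d)$-step visit-accumulation window, and the geometric termination with mean $\Theta(k)$---must be reconciled simultaneously. Landing on exactly the constant $1 - e^{-1/(20d)}$ requires a quantitative mixing bound (e.g.\ via the Expander Mixing Lemma or a direct spectral estimate on $\bar L$) combined with union bounds over non-termination and over any early bridge crossings during the mixing phase. Sub-event (C) is handled by essentially the same machinery under the $G' \leftrightarrow B$ symmetry; its only real subtlety is verifying that the walk's escape probability back across the single bridge edge remains $O(1/d)$ per visit to $w_B$, which holds because the $d$ non-bridge neighbors of $w_B$ each receive transition weight $1/(d+1)$ whereas the bridge receives the same $1/(d+1)$, so in fact the $d/(d+1)$ share of mass stays in $B$ at every such step.
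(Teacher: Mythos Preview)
Your reduction of $z^*_u$ to $\Pr_u[V_T\in B]$ via \Cref{lem:recursive} and the three-factor decomposition $(A)\cdot(B)\cdot(C)$ match the paper exactly; sub-event~(B) is identical to the paper's $p_2\geq 1/(d+2)$. The discrepancy is in how you allocate the remaining constants, and this creates a genuine gap.

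You try to extract the factor $1-e^{-1/(20d)}$ from~(A) by arguing that, after an $O(\log k)$ mixing phase, in a window of $\ell=\Theta(k/d)$ steps the walk hits $w_{G'}$ with probability at least $1-(1-1/k)^\ell$. That inequality is unjustified: it would hold if each step were an independent uniform sample from $V(G')$, but random-walk visits are correlated (they cluster), and being within $o(1)$ of stationary at each individual step does \emph{not} give you this product bound. With $\ell=\Theta(k/d)$ much smaller than the $\Theta(k)$ hitting time, no simple Markov or mixing argument yields this exact constant.

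In the paper the factor $1-e^{-1/(20d)}$ comes from~(C), not~(A). For~(A) the paper uses the commute-time bound $T_{G'}(u,w_{G'})\leq 2k/\gamma$ from \cite{lovasz1993random}, then Markov's inequality over a window of length $4k/\gamma=\Theta(k)$ to get hitting probability $\geq 1/2$; combined with the survival probability over that window this gives $p_1\geq 1/4$. For~(C) the paper proves (\Cref{lem:rw}) that from $w_B$ the walk, with probability $\geq 1/3$, steps into $B$ and stays there for at least $k/10$ steps; during this sojourn the per-step termination probability is $\geq 1/((d+1)k+1)$, so the walk terminates in $B$ with probability $\geq 1-(1-1/((d+1)k+1))^{k/10}\geq 1-e^{-1/(20d)}$. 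Multiplying $\tfrac14\cdot\tfrac{1}{d+2}\cdot\tfrac13(1-e^{-1/(20d)})$ yields $c_0$. So the ``delicate step'' you flag largely disappears once you move the exponential factor to~(C), where it falls out of a straightforward geometric-termination calculation, and handle~(A) with a hitting-time bound over a $\Theta(k)$ window rather than a mixing bound over a $\Theta(k/d)$ window.
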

The proof of \Cref{lem:diff} is given in Section \ref{sec:proof-diff}.

Next, we show that any sublinear-time algorithm with low query complexity cannot distinguish between the two instances $(S^G,b^{G\text{-}0})$ and $(S^G,b^{G\text{-}1})$ for $G \sim \mu_n$.
We first introduce the definition of query-answer history for an abstract sublinear algorithm.
Let $\mathcal{A}$ be a sublinear-time algorithm using  $\ell$ queries to a given $(S,b)$-oracle. A query-answer history $D^{\mathcal{A}}$ is an ordered tuple $[(q_0,a_0),(q_1,a_1),\cdots,(q_\ell,a_\ell)]$ such that 
\begin{itemize}
    \item $q_0 = \empty$ and $a_0$ is the input of the algorithm;
    \item For any $1\leq i \leq \ell$, $q_i$ is the $i_{th}$ query to the $(S,b)$-oracle and $a_i$ is the answer of $q_i$ returned by the oracle. For the $i_{th}$ query, a (possibly probabilistic) mapping maps $\{(q_j,a_j)\mid 0\leq j < i\}$ to the query $q_i$, where $q_i$ is one of the three types of queries in~\Cref{def:oracle}.
\end{itemize}
Finally, a (possibly probabilistic) mapping maps $\{(q_j,a_j)\mid 0\leq j < \ell\}$ to the output of $\mathcal{A}$. Abstractly, the algorithm can be viewed as first sampling $x \sim D^{\mathcal{A}}$, and then mapping (possibly probabilistic mapping) $x$ to the output $\mathcal{A}(x)$. 

Let $G \sim \mu_n$. Let $u$ be a uniform random vertex in $G' \subseteq G$.
Let $P_0$ be the query access oracle of $(S^G,b^{G\text{-}0})$.
Let $P_1$ be the query access oracle of $(S^G,b^{G\text{-}1})$.
Let $\mathcal{A}$ be a sublinear-time algorithm given $u$ as the 
input, where $u$ is a uniform random vertex in $G'$.
Given the access to an oracle $P \in \{P_0,P_1\}$, consider the algorithm $\mathcal{A}$ that can distinguish whether $P=P_0$ or $P=P_1$, where the algorithm outputs \textbf{Yes} if $P = P_0$ or outputs $\textbf{No}$ if $P=P_1$.
Define $D_0^{\mathcal{A}}$ and $D_1^{\mathcal{A}}$ as the query-answer histories induced by the interaction of $\mathcal{A}$ and $P_0$ and $P_1$ respectively. 

Next, we bound the total variation distance between $D_0^{\mathcal{A}}$ and $D_1^{\mathcal{A}}$. Note that both $D_0^{\mathcal{A}}$ and $D_1^{\mathcal{A}}$ are random variables whose randomness comes from the randomness of $G \sim \mu_n$, the uniform random vertex $u$ sampled from $G'$, and the independent randomness inside $\mathcal{A}$.
\begin{restatable}{lemma}{lem:tv}\label{lem:tv}
    The following result holds for any constant $\alpha<1/2$.
    For any algorithm $\mathcal{A}$ that asks $m \leq \alpha k$ queries, the total variation distance between $D_1^{\mathcal{A}}$ and $D_2^{\mathcal{A}}$ is at most $6\alpha$.
\end{restatable}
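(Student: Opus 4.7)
The plan is to show that the two oracle transcripts $D_0^{\mathcal{A}}$ and $D_1^{\mathcal{A}}$ are nearly indistinguishable by identifying a single ``distinguishing event'' and bounding its probability via two coupling arguments. I would start from the observation that $P_0$ and $P_1$ return identical answers to every neighbor and random-walk query, and to every field of a vertex query except the coordinate $b_\ell$, and that $b_\ell$ differs between the two oracles only when $\pi^{-1}(\ell) \in B$. Let $\mathcal{E}$ denote the event that $\mathcal{A}$ performs a vertex query at some label $\ell$ with $\pi^{-1}(\ell) \in B$ during its $m$ queries. Applying the coupling inequality (\Cref{lem:coupling}) to the natural coupling of $D_0^{\mathcal{A}}$ and $D_1^{\mathcal{A}}$ (same $G$, same $u$, same algorithmic randomness, different oracle), the transcripts agree unless $\mathcal{E}$ occurs, so $d_{\mathrm{TV}}(D_0^{\mathcal{A}}, D_1^{\mathcal{A}}) \leq \Pr[\mathcal{E}]$. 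It therefore suffices to prove $\Pr[\mathcal{E}] \leq 6\alpha$.

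To bound $\Pr[\mathcal{E}]$ I would compare the run of $\mathcal{A}$ on $G$ against a run on an auxiliary graph $G^{\circ}$ obtained by deleting the single bridge edge $(w_{G'}, w_B)$. In $G^{\circ}$, the three pieces $G'$, $B$, and $C$ are pairwise disconnected, and the endpoints $w_{G'}, w_B$ have no special role (their degrees become the common value $d$). I couple the two runs by using the same algorithmic randomness and the same neighbor orderings; the resulting query-answer transcripts agree until the first query whose answer depends on the bridge, which can happen only if the algorithm issues a query (of any of the three types) whose subject is $\pi(w_{G'})$ or $\pi(w_B)$. Calling this event $\mathcal{F}$, we have $\Pr_G[\mathcal{E}] \leq \Pr_{G^{\circ}}[\mathcal{E}] + \Pr_{G^{\circ}}[\mathcal{F}]$.

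The bound $\Pr_{G^{\circ}}[\mathcal{E}] \leq 2\alpha$ I would establish by deferred decisions on the random labeling $\pi$. In $G^{\circ}$, exploration from $u \in G'$ through neighbor and random-walk queries can only reveal labels of vertices in $G'$, so the only way a label in $\pi(B)$ can become the subject of a vertex query is if the algorithm issues a vertex query on a not-yet-revealed label $\ell$. Conditional on the history, $\pi^{-1}(\ell)$ is uniform over the unrevealed vertices, of which at most $|B|=k$ lie in $B$ out of at least $n-m \geq n/2 = k^{2}/2$ total (using $n=k^{2}$ and $\alpha<1/2$). Each such query therefore contributes at most $2/k$ to $\Pr[\mathcal{E}]$, and summing over the $m \leq \alpha k$ queries gives $\Pr_{G^{\circ}}[\mathcal{E}] \leq 2\alpha$. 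For $\Pr_{G^{\circ}}[\mathcal{F}]$, both $w_{G'}$ and $w_B$ are uniform random vertices of $G'$ and $B$ respectively and, in $G^\circ$, are independent of the transcript; since at most $m+1$ distinct labels appear as query subjects in the run, the marginal probability that $\pi(w_{G'})$ is among them is at most $(m+1)/k$, and similarly for $\pi(w_B)$, yielding $\Pr_{G^{\circ}}[\mathcal{F}] \leq 2(m+1)/k \leq 2\alpha + 2/k$.

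The principal technical point to verify carefully is the measurability underlying the deferred-decisions step: namely, that conditional on the transcript of the $G^{\circ}$-run up to time $t-1$, the unrevealed portion of $\pi$ and the identities of $w_{G'}, w_B$ remain uniform. This hinges on the fact that in $G^{\circ}$ neither neighbor nor random-walk queries can transport a label across the components $G'$, $B$, $C$, and that these queries do not disclose any information about labels still unseen. Once this is in place, combining the estimates yields $\Pr_G[\mathcal{E}] \leq 2\alpha + 2\alpha + 2/k \leq 6\alpha$ once $k$ is large enough that $2/k \leq 2\alpha$ (for smaller $k$ the hypothesis $m \leq \alpha k < 1$ makes the claim vacuous), completing the proof.
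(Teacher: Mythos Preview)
Your overall strategy is sound and close in spirit to the paper's, but there is one genuine slip in the bound on $\Pr_{G^{\circ}}[\mathcal{E}]$. You assert that ``the only way a label in $\pi(B)$ can become the subject of a vertex query is if the algorithm issues a vertex query on a not-yet-revealed label.'' This is false in the query model: the algorithm may issue a \emph{neighbor} or \emph{random-walk} query on a fresh label that happens to lie in $\pi(B)$, receive back a neighbor $\ell'\in\pi(B)$, and only then perform a vertex query on the now-revealed $\ell'$. What \emph{is} true in $G^{\circ}$ is that if $\mathcal{E}$ occurs then at some step the algorithm issued a query (of \emph{some} type) on a fresh label lying in $\pi(B)$, since no query on a $G'$- or $C$-vertex can return a $B$-label once the bridge is gone. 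Bounding this larger event by the same deferred-decisions argument still yields at most $m\cdot k/(n-2m)\le 2\alpha$, so the arithmetic of your final bound is unaffected; only the justification needs repair.

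With that fix, your two-stage coupling (first pass to the bridge-free graph $G^{\circ}$ via the event $\mathcal{F}$, then analyze blind guesses into $B$ via $\mathcal{E}$ on $G^{\circ}$) is a clean reorganization of the paper's argument. The paper couples $D_0^{\mathcal A}$ and $D_1^{\mathcal A}$ directly on the same $G$, tracks a single bad event ``the knowledge graph contains $w_{G'}$ or some vertex of $B$,'' and bounds its per-step increment by a case analysis over the three query types. Your decomposition separates the contribution of the special endpoints $w_{G'},w_B$ from that of blind label guesses, which makes the roles of the two sources of randomness (the choice of $w_{G'},w_B$ versus the labeling $\pi$) more transparent; the paper's approach is more direct but carries more casework. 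Either route gives the stated $6\alpha$ bound.
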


The proof of \Cref{lem:tv} is given in Section \ref{sec:proof-tv}. With Lemma \ref{lem:diff} and Lemma \ref{lem:tv}, we are ready to prove the lower bound. 

\begin{proof}[\textbf{Proof of Theorem~\ref{thm:lb}}]
By \Cref{obs:S}, our hard instance satisfies properties stated in the theorem.
    Let $\mathcal{A}$ be an algorithm with access to $(S,b)$-oracle that solves the $Sz = b$ linear system for $1$-diagonally dominant matrix $S \in \mathbb{R}^{n \times n}$ and Boolean vector $b \in \{0,1\}^n$ with error $\varepsilon_0=\frac{c_0}{4}$, where $c_0$ is the universal constant in \Cref{lem:diff}.
We can apply the algorithm $\mathcal{A}$ in our hard instances $(S^G,b^{G\text{-0}})$ and $(S^G,b^{G\text{-1}})$ with the input vertex $u$, where $G \sim \mu_n$, $u$ is a uniform random vertex in subgraph $G'$ of $G$.
By \Cref{lem:diff}, by looking at the value of $z_u$, we can use $\mathcal{A}$ to distinguish between two instances $(S^G,b^{G\text{-0}})$ and $(S^G,b^{G\text{-1}})$ with success probability $\frac{2}{3}$. Formally, we can modify the output of $\mathcal{A}$ such that $\mathcal{A}$ outputs $\textbf{Yes}$ if $z_u < \frac{c_0}{2}$ and $\mathcal{A}$ outputs $\textbf{No}$ if $z_u \geq \frac{c_0}{2}$. By our assumption on $\mathcal{A}$, we have
\begin{align}\label{eq:prob}
   \Pr_{x \sim D_0^{\mathcal{A}}} [\mathcal{A}(x) = \textbf{Yes}] \geq \frac{2}{3}, \quad\text{and}\quad \Pr_{x \sim D_1^{\mathcal{A}}}[\mathcal{A}(x)  = \textbf{Yes}]  < \frac{1}{3}.
\end{align}

On the other hand, 
by Lemma \ref{lem:tv} with $\alpha = \frac{1}{20}$, if the algorithm $\mathcal{A}$ asks $\leq \frac{1}{20}k$ queries, then 
\begin{align*}
    d_{\text{TV}}(D_1^{\mathcal{A}}, D_2^{\mathcal{A}}) \leq \frac{6}{20} < \frac{1}{3}.
\end{align*}
Using the data processing inequality, we have
\begin{align*}
    \left\vert \Pr_{x \sim D_1^{\mathcal{A}}} [\mathcal{A}(x) = \textbf{Yes}] - \Pr_{x \sim D_2^{\mathcal{A}}}[\mathcal{A}(x)  = \textbf{Yes}] \right\vert \leq d_{\text{TV}}(D_1^{\mathcal{A}}, D_2^{\mathcal{A}}) < \frac{1}{3}.
\end{align*}
This contradicts~\eqref{eq:prob}. Hence, the algorithm $\mathcal{A}$ must ask more than $\frac{1}{20}k$ queries. By our definition of the graph $G$, we have $\max_{i\in[n]}|S^G_{ii}|=\Theta(dk) = \Theta(k)$. Then we finish the proof.
\end{proof}

\begin{remark} \label{remark:edge}
If the oracle allows the edge query, i.e., the oracle returns the information of a given edge $e\in E$, we can slightly modify the construction of hard instances to have the same lower bound. We can replace the isolated vertices by $n/k-1$ copies of $G'$ and $B$. Let the original parts be $G'_1$ and $B'_1$ and the copies be $G'_2$, $B_2$, $\cdots$, $G'_{\frac{n}{k}}$, $B_{\frac{n}{k}}$. We connect $G'_i$ and $B'_i$ in the same way by sampling a vertex $w_{G'_i}$ in $G'_i$ and $w_{B_i}$ in $B_i$ uniformly at random and connecting $w_{G'_i}$ and $w_{B_i}$. For the resulting graph $G$, $b^{G\text{-}1}$ and $b^{G\text{-}0}$ are defined as follows:

\begin{align*}
\forall v \in [n], \quad   b^{G\text{-}1}_v &= \begin{cases}
        1 & \text{if } v \text{ is a vertex in } B_1, \\
        0 & \text{otherwise}.
    \end{cases}\\
\forall v \in [n], \quad   b^{G\text{-}0}_v &= 0.
\end{align*}

It can be verified that it is hard to distinguish two instances by sampling edges and we have the same lower bound above.
\end{remark}

\subsection{Separations between solutions to two instances}\label{sec:proof-diff}

In this section, we prove Lemma \ref{lem:diff} to show that there is a constant gap between the answers of two instances. By the definition of $b^{G\text{-}0}$, it's obvious that $z_u^*=0$ for $S^{G}z= b^{G\text{-}1}$. To prove the second part, we need some properties of simple random walk on \emph{unweighted} expanders. 
We first give the definitions related to simple random walk and list some useful properties, and then we use these properties to prove the second part of Lemma \ref{lem:diff}.

Given a \emph{unweighted} graph $G=(V,E)$ and a current vertex $v \in V$, in every step, the simple random walk moves to a neighbor of $v$ uniformly at random.
Given two vertices $u,v \in V$, the \emph{hitting time} $T_G(u,v)$ is the expected number of steps for a simple random walk starting from $u$ to first visit $v$.

\begin{restatable}{lemma}{lem:hitting}\cite[Corollary 3.3]{lovasz1993random}\label{lem:hitting}
Given an unweighted graph $G=(V,E)$, let $T_G(u,v)$ be the hitting time of a random walk from $u$ to $v$ and $\kappa_{G}(u,v)=T_G(u,v)+T_G(v,u)$ be the commute time,
\[
T_G(u,v)\leq\kappa_{G}(u,v)\leq\frac{2m}{\gamma_G}\left(\frac{1}{d(u)}+\frac{1}{d(v)}\right),
\]
where $d(u)$ and $d(v)$ denote the unweighted degree of $u$ and $v$ in $G$ respectively.
\end{restatable}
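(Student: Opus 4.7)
The first inequality $T_G(u,v) \leq \kappa_G(u,v)$ is immediate from the definition $\kappa_G(u,v) = T_G(u,v) + T_G(v,u)$ together with $T_G(v,u) \geq 0$. The content of the lemma therefore lies in the spectral upper bound on $\kappa_G(u,v)$, and my plan is to attack it through the standard correspondence between random walks and electrical networks.

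The first step is the identity $\kappa_G(u,v) = 2m \cdot R_{\mathrm{eff}}(u,v)$, where $R_{\mathrm{eff}}(u,v)$ is the effective resistance between $u$ and $v$ when every edge of $G$ is treated as a unit resistor. I would prove this by injecting a unit current at $u$ and extracting it at $v$: the induced potential function $\varphi$ is harmonic on $V \setminus \{u,v\}$, and a standard reversibility argument shows that $\varphi(u) - \varphi(v)$ equals both $R_{\mathrm{eff}}(u,v)$ (by Ohm's law) and $\kappa_G(u,v)/(2m)$, because both quantities solve the same boundary-value problem for the discrete Laplacian.

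With the commute-time identity in hand, I would then give the spectral bound on effective resistance. Using the pseudo-inverse formula $R_{\mathrm{eff}}(u,v) = (e_u - e_v)^\top L^+ (e_u - e_v)$ and passing to the normalized Laplacian $\bar L = D^{-1/2} L D^{-1/2}$ via $L^+ = D^{-1/2} \bar L^+ D^{-1/2}$, I set $x = D^{-1/2}(e_u - e_v)$, so that
\[
R_{\mathrm{eff}}(u,v) = x^\top \bar L^+ x, \qquad \|x\|_2^2 = \frac{1}{d(u)} + \frac{1}{d(v)}.
\]
A direct check shows $x$ is orthogonal to the top eigenvector $\bar v_1 \propto D^{1/2}\mathbf{1}$ of $\bar L$, since $\mathbf{1}^\top (e_u - e_v) = 0$. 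Expanding $x$ in the orthonormal eigenbasis of $\bar L$ and bounding by the smallest nonzero eigenvalue gives
\[
x^\top \bar L^+ x \leq \frac{\|x\|_2^2}{\lambda_2} \leq \frac{\|x\|_2^2}{\gamma_G},
\]
where the last inequality uses $\gamma_G = \min(\lambda_2, 2-\lambda_n) \leq \lambda_2$. Multiplying by $2m$ yields the claimed bound.

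The main obstacle is the commute-time--effective-resistance identity itself: it is not a one-line computation and requires either an electrical-network argument identifying voltages with expected hitting times, or an algebraic derivation through the Green's function of the simple random walk. Once that bridge is built, the remainder is a clean Rayleigh-quotient estimate whose only substantive ingredient is the orthogonality of $D^{-1/2}(e_u - e_v)$ to the Perron eigenvector of $\bar L$, which is forced by the simple observation that $e_u - e_v$ has zero coordinate sum.
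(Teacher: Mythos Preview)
The paper does not supply a proof of this lemma: it is quoted as Corollary~3.3 of Lov\'asz's survey and used as a black box in the proof of Lemma~\ref{lem:diff}. Your proposal therefore goes beyond what the paper does, and the route you outline (commute-time/effective-resistance identity followed by a Rayleigh-quotient bound against $\lambda_2$) is the standard one and yields the stated inequality.

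One technical step deserves a correction. The identity $L^{+}=D^{-1/2}\bar L^{+}D^{-1/2}$ that you invoke is \emph{not} true in general: the right-hand side is a generalized inverse of $L$ (since $L\cdot D^{-1/2}\bar L^{+}D^{-1/2}\cdot L=D^{1/2}\bar L\,\bar L^{+}\bar L\,D^{1/2}=L$), but it is the Moore--Penrose pseudoinverse only when $D$ is a scalar multiple of the identity, i.e.\ when $G$ is regular. What saves your computation is precisely the orthogonality you already observed: because $x=D^{-1/2}(e_u-e_v)$ is orthogonal to $D^{1/2}\mathbf 1$, one has $L\bigl(D^{-1/2}\bar L^{+}D^{-1/2}\bigr)(e_u-e_v)=e_u-e_v$, so $D^{-1/2}\bar L^{+}D^{-1/2}(e_u-e_v)$ is a valid potential and the formula $R_{\mathrm{eff}}(u,v)=x^{\top}\bar L^{+}x$ holds regardless of which generalized inverse is used. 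With this adjustment the remainder of your argument (bounding $x^{\top}\bar L^{+}x\le\|x\|_2^2/\lambda_2\le\|x\|_2^2/\gamma_G$ and multiplying by $2m$) is correct.
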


Let $W \in [0,1]^{V \times V}$ be the random walk matrix of $G$, where $W_{uv}=1/d(u)$ if $u$ and $v$ are connected by an edge and $W_{uv}=0$ otherwise. Let $\pi$ be a distribution over $v$ such that $\pi_v = \frac{d(v)}{2m}$, where $m$ is the number of edges in $G$. It is well known that $W$ is reversible with respect to $\pi$, i.e., $\pi_u W_{uv} = \pi_v W_{vu}$ for all $u,v \in V$. Hence, $\pi$ is the stationary distribution of $W$. We have the following mixing lemma for the simple random walk on expanders.

\begin{restatable}{lemma}{lem:mixing}\cite[Theorem 5.1]{lovasz1993random}\label{lem:mixing}
Given a $d$-regular graph $G=(V,E)$. For any integer $\ell \geq 0$ and any vertex $u \in V$,
    \[
    \lVert \mathds{1}^T_{u}W^\ell-\pi\rVert_{\infty}\leq (1-\gamma_G)^{\ell},
    \]
    where $\mathds{1}_u$ is the indicator column vector of $u$ and $\gamma_G$ is the spectral expansion of $G$.
\end{restatable}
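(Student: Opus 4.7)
The plan is to prove the bound by diagonalizing the random walk matrix $W$ and peeling off the top eigenspace. Since $G$ is $d$-regular, $W = A/d$ is symmetric, and the normalized Laplacian satisfies $\bar L = I - W$. If $\mu_1 \geq \mu_2 \geq \cdots \geq \mu_n$ are the eigenvalues of $W$, then after suitable reordering the eigenvalues of $\bar L$ are $\lambda_i = 1 - \mu_{n+1-i}$, so $\mu_1 = 1$ (with eigenvector $v_1 = \tfrac{1}{\sqrt n}\mathds{1}$, the uniform vector) and
\[
\max_{i \geq 2}\abs{\mu_i} = \max(1-\lambda_2,\, \lambda_n - 1) = 1 - \min(\lambda_2,\, 2-\lambda_n) = 1 - \gamma_G.
\]

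Next, I would write $W = \sum_{i=1}^n \mu_i v_i v_i^\top$ in an orthonormal eigenbasis $\{v_i\}$, so that $W^\ell = \sum_i \mu_i^\ell v_i v_i^\top$. The $i=1$ term contributes exactly $\tfrac{1}{n}\mathds{1}\mathds{1}^\top$, and for a $d$-regular graph the stationary distribution is the uniform vector $\pi^\top = \tfrac{1}{n}\mathds{1}^\top$. Multiplying on the left by $\mathds{1}_u^\top$ therefore yields
\[
\mathds{1}_u^\top W^\ell - \pi^\top \;=\; \sum_{i=2}^n \mu_i^\ell\, (v_i)_u\, v_i^\top.
\]

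Finally, for any coordinate $j$, combining Cauchy--Schwarz with the spectral bound gives
\[
\abs{(\mathds{1}_u^\top W^\ell - \pi^\top)_j} \;\leq\; (1-\gamma_G)^\ell \sqrt{\sum_{i\geq 2}(v_i)_u^2}\;\sqrt{\sum_{i\geq 2}(v_i)_j^2} \;\leq\; (1-\gamma_G)^\ell,
\]
where the last step uses $\sum_{i=1}^n (v_i)_x^2 = \norm{\mathds{1}_x}_2^2 = 1$ by orthonormality. Taking the maximum over $j$ produces the claimed $\ell_\infty$ bound.

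The argument is largely routine spectral linear algebra, so I do not anticipate any real obstacle; the only mildly subtle step is the first one, where one must observe that the second-largest-in-absolute-value eigenvalue of $W$ can come from either end of the spectrum of $\bar L$. This is exactly why the definition $\gamma_G = \min(\lambda_2,\, 2-\lambda_n)$ is the right quantity to control both the "smooth" non-constant mode ($\mu_2 = 1-\lambda_2$) and the most oscillatory mode ($\mu_n = 1-\lambda_n$, which can be negative with large absolute value for bipartite-like structures).
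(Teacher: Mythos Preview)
Your argument is correct and is the standard spectral proof of this mixing bound. The paper does not prove this lemma at all; it simply quotes it from Lov\'asz's random walk survey \cite[Theorem~5.1]{lovasz1993random}, so there is no ``paper's proof'' to compare against. One small slip: with $\mu_1 \geq \cdots \geq \mu_n$ the eigenvalues of $W$ and $\lambda_1 \leq \cdots \leq \lambda_n$ those of $\bar L = I - W$, the correspondence is $\lambda_i = 1 - \mu_i$, not $\lambda_i = 1 - \mu_{n+1-i}$; your displayed identity $\max_{i\geq 2}|\mu_i| = \max(1-\lambda_2,\lambda_n-1) = 1-\gamma_G$ is nonetheless correct, so this is only a typo in the indexing sentence.
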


Return to our hard instance, where we sample a \emph{weighted} graph $G$ from $\mu_n$. Recall that $G$ is a union of $G'$, $B$ and $C$, where $G'$ and $B$ are two copies of $G_k^{\textnormal{ex}}$ and $C$ is a set of $n-2k$ isolated vertices. Two expanders $G'$ and $B$ are connected by an edge $w_{G'} \in G'$ and $w_{B} \in B$. Recall that although $G$ is weighted, the weight on every edge is the same. Hence, when consider the simple random walk on $G$, we can simply treat $G$ as an unweighted graph.
\begin{restatable}{lemma}{lem:rw}\label{lem:rw}
For a random graph $G \sim \mu_n$, the following holds with probability $1$.
Consider the simple random walk on $G$ starting from $w_B$. With probability at least $1/3$, the random walk will first enter into $B$ and stay in $B$ for at least $k/10$ steps.
\end{restatable}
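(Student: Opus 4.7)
The plan is to couple the simple random walk $(X_t)$ on $G$ started at $w_B$ with a simple random walk $(Y_t)$ on the expander $B$ alone, also started at $w_B$, and then to use the mixing bound of Lemma~\ref{lem:mixing} together with the fact that $w_B$ is the unique exit from $B$ in $G$. For every $v \in B \setminus \{w_B\}$, the $G$-neighborhood of $v$ coincides with its $B$-neighborhood, so at such vertices both walks take the same uniform step. At $w_B$ the $G$-walk picks uniformly among $d+1$ neighbors ($d$ in $B$ plus $w_{G'}$), while the $B$-walk picks uniformly among its $d$ $B$-neighbors. I would couple them using a single $U \sim \mathrm{Uniform}[0,1]$ drawn at each visit to $w_B$: if $U < 1/(d+1)$ then $X$ jumps to $w_{G'}$ and $Y$ independently moves to a uniformly random $B$-neighbor; if $U \geq 1/(d+1)$ then both walks move to the same uniformly random $B$-neighbor. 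A direct marginal check shows this is a valid coupling, and before the exit time $\tau := \min\{t : X_t = w_{G'}\}$ the two walks agree pointwise.

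Next I would invoke Lemma~\ref{lem:mixing} for $Y$. Since $B$ is a $d$-regular Ramanujan expander on $k$ vertices with $\gamma_B \geq 2/3$ (Proposition~\ref{prop:expander}) and $\pi_{w_B} = 1/k$, the lemma gives $\Pr[Y_t = w_B] \leq 1/k + (1/3)^t$ for every $t \geq 0$. Letting $N$ denote the number of times $Y$ visits $w_B$ during $t \in \{0, 1, \ldots, \lfloor k/10 \rfloor - 1\}$, summing yields
$$
\mathbb{E}[N] \;\leq\; 1 + \frac{k/10 - 1}{k} + \sum_{t \geq 1}(1/3)^t \;\leq\; \frac{8}{5}.
$$
Under the coupling each visit to $w_B$ independently triggers an exit of $X$ with probability $1/(d+1)$, so a union bound over visits gives $\Pr[\tau \leq k/10] \leq \mathbb{E}[N]/(d+1) \leq (8/5)/99 < 1/60$. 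With $d = 98$ this yields $\Pr[\tau > k/10] > 1 - 1/60$, which is comfortably larger than $1/3$; note the claim is deterministic in $G \sim \mu_n$ because the bounds above depend only on the expansion of $B$ and the topology of the single edge $(w_{G'}, w_B)$, not on the random labelling or permutations used in the construction.

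The main technical point is arranging the coupling so that the exit events at successive visits to $w_B$ behave as genuinely independent Bernoulli$(1/(d+1))$ trials; once this is secured, the rest reduces to showing that a fast-mixing expander walk visits a prescribed vertex only a constant expected number of times within $O(k)$ steps, which Lemma~\ref{lem:mixing} delivers immediately, followed by a routine Markov-type calculation.
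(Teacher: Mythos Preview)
Your proof is correct and rests on the same two ingredients as the paper's --- a coupling with the simple random walk on $B$ alone, plus the mixing bound of Lemma~\ref{lem:mixing} to control how often the walk sits at $w_B$ --- but the packaging differs. The paper first peels off the very first step (probability $d/(d+1)$ of stepping into $B$), and from the resulting neighbor $v_1$ it bounds the probability that the walk \emph{ever returns to $w_B$} within $k/10$ steps by the union bound $\sum_{i\le k/10}\Pr_{\mathcal W_2}[X_i=w_B]$; this gives survival $\ge 2/5$ and hence overall probability $\ge \tfrac{d}{d+1}\cdot\tfrac{2}{5}>1/3$. You instead start directly at $w_B$, allow arbitrarily many returns, and view each visit as an independent Bernoulli$(1/(d+1))$ escape trial; the very same sum becomes the expected number $\mathbb E[N]$ of visits, and a Wald/union-bound step turns it into $\Pr[\tau\le k/10]\le \mathbb E[N]/(d+1)<1/60$. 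Your decomposition yields a far stronger constant and obviates the first-step case split; the paper's decomposition avoids the small independence argument for the exit indicators by simply terminating the analysis at the first return to $w_B$.
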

\begin{proof}
    Note that the degree of $w_B$ is $d+1$ and $d$ neighbors of $w_B$ are in $B$.
	Hence, with probability $\frac{d}{d+1}$, the first step walks into $B$. Suppose the first step walks into a vertex $v_1 \in B$. 
    Note that if the random walk escapes $B$, it must visit vertex $w_B$ again because $w_B$ is the cut vertex connecting $G'$ and $B$.
    We only need to prove that for any vertex $v_1 \in B$ such that $v_1 \neq w_B$, with probability at least $2/5$, the simple random walk in $G' \cup B$ starting from $v_1$ first hits $w_B$ in at least $k/10$ steps.
    Denote the simple random walk on $G' \cup B$ starting from $v_1$ as $\mathcal{W}_1$.
    Denote the random walk sequence on $G' \cup B$ starting from $v_1$ as $X_0=v_1,X_1,\cdots,X_\ell$, where $\ell=k/10$. We have 
    \begin{align*}
        \Pr_{\mathcal{W}_1}[\forall 1\leq i\leq \ell, X_i\neq w_B] = 1 - \sum_{i=1}^\ell \Pr_{\mathcal{W}_1}[X_i = w_B \land \forall 1\leq j<i, X_j\neq w_B].
    \end{align*}
    Let $B_i$ denote the event that $X_i = w_B$ and $\forall 1\leq j<i, X_j\neq w_B$. In the above equation, we need to analyze the probability of $B_i$ in the simple random walk $\mathcal{W}_1$ on $G' \cup B$ starting from $v_1$. Consider the simple random walk $\mathcal{W}_2$ on $G[B]$ starting from $v_1$, where $G[B]$ is the induced graph of $B$ in $G$.
    We show that for any $i\geq 1$,
    \begin{align}\label{eq:same-prob}
        \Pr_{\mathcal{W}_1}[B_i] = \Pr_{\mathcal{W}_2}[B_i].
    \end{align}
    To verify this, consider a coupling between $\mathcal{W}_1$ and $\mathcal{W}_2$ such that we can couple two walks perfectly up to the moment that the random walk $\mathcal{W}_1$ first hits the vertex $w_B$.
    Note that conditional on the current vertex is $w_B$, two walks follow different transition rules because $\mathcal{W}_1$ can move to $G'$ and $\mathcal{W}_2$ can only stay in $B$. Hence, after two walks meet at $w_B$, we couple two walks independently.
    Suppose the event $B_i$ occurs in $\mathcal{W}_1$. By the definition of $B_i$, up to the time $i$, $\mathcal{W}_1$ only can do the random walk inside $B$. For every transition in time $j$ with $j < i$, two walks follow the same transition rule. 
    By the coupling, two walks are coupled perfectly up to time $i$ and the event $B_i$ also occurs in $\mathcal{W}_2$. We have $\Pr_{\mathcal{W}_1}[B_i] \leq \Pr_{\mathcal{W}_2}[B_i]$. Similarly, if $B_i$ occurs in $\mathcal{W}_2$, then it also occurs in $\mathcal{W}_1$. We have $\Pr_{\mathcal{W}_2}[B_i] \leq \Pr_{\mathcal{W}_1}[B_i]$. Hence, we have $\Pr_{\mathcal{W}_1}[B_i] = \Pr_{\mathcal{W}_2}[B_i]$. 
    In the random walk $\mathcal{W}_2$, since the induced graph $G[B]$ is the $d$-regular graph in \Cref{prop:expander}, the stationary distribution of the simple random walk is the uniform distribution over $k$ vertices. Hence, 
    by \Cref{lem:mixing} and the expansion property of $B$,
	\begin{align*}
		\Pr_{\mathcal{W}_1}[\forall 1\leq i\leq \ell, X_i\neq w_B] 
		=& 1 - \sum_{i=1}^\ell \Pr_{\mathcal{W}_1}[B_i] = 1 - \sum_{i=1}^\ell \Pr_{\mathcal{W}_2}[B_i]\\
        = & 1 - \sum_{i=1}^\ell \Pr_{\mathcal{W}_2}[X_i = w_B \land \forall 1\leq j<i, X_j\neq w_B]\\
        \geq&  1 - \sum_{i=1}^\ell \Pr_{\mathcal{W}_2}[X_i = w_B]\\
	\text{(by \Cref{lem:mixing})}\quad	\geq& 1-\sum_{i=1}^\ell\left(\frac{1}{k}+(1-\gamma_{G_N^{\textnormal{ex}}})^i\right)\\
    \text{(by $\ell = \frac{k}{10}$)} \quad   \geq& \frac{9}{10}-\frac{1-\gamma_{G_N^{\textnormal{ex}}}}{\gamma_{G_N^{\textnormal{ex}}}}.
	\end{align*}
    Since $\gamma_{G_N^{\textnormal{ex}}}>2/3$, $\Pr_{\mathcal{W}_1}[\forall 1\leq i\leq \ell, X_i\neq w_B]\geq2/5$. 
    The lower bound holds for $X_0 = v_1$ for any $v_1 \in B$ with $v_1 \neq w_B$.
    Recall that, in the first step, the random walk enter into a vertex $v_1 \in B$ with probability $\frac{d}{d+1} \geq \frac{10}{11}$ as $d \geq 10$.
    So the final probability is at least $\frac{10}{11} \cdot \frac{2}{5}>\frac{1}{3}$.
\end{proof}

With Lemma \ref{lem:hitting}, Lemma \ref{lem:mixing} and Lemma \ref{lem:rw}, we have the proof of Lemma \ref{lem:diff}.

\begin{proof}[\textbf{Proof of Lemma~\ref{lem:diff}}]
    It is easy to get the first part that $z_u=0$ for $S^{G}z=0$, since $b_v^{G\text{-}0}=0$ for all $v\in[n]$.  To prove the second part, let $z^*$ denote the unique solution of $S^{G}z=b^{G\text{-}1}$.
    We use \Cref{alg:recursive} as a tool to analyze the solution $z^*$.

    Let $u$ be the query vertex sampled uniformly at random from $G'$.
    Image we run the algorithm \Cref{alg:recursive} on the query vertex $u$ with the query access to $(S,b)$-oracle, where $S = S^{G}$ and $b = b^{G\text{-}1}$.
    Let $z_u$ denote the random variable returned by RecursiveSolver($u$). By \Cref{lem:recursive},
    \begin{align*}
        \E[z_u] = z^*_u.
    \end{align*}
    By the definition of our hard instance,
    for any vertex $v$ in the subgraph $G' \cup B$, 
    \begin{itemize}
        \item If $v \notin \{w_{G'},w_B\}$, then the out degree $d_{\text{out}}(v) = d k$ and $S_{vv} = d_{\text{out}}(v) + 1$;
        \item If $v = w_{G'}$ or $v = w_B$, then the out degree $d_{\text{out}}(v) = (d+1) k$ and $S_{vv} = d_{\text{out}}(v)+ 1$;
    \end{itemize}
    For any edge $\{v,w\}$ in graph $G$, it holds that 
    \begin{align*}
        S_{vw} = - k.
    \end{align*}

    \Cref{alg:recursive} returns the value of $\text{RecursiveSolver}(u)$ for the query vertex $u$. Initially, it starts from $u$.
    In every step, suppose the current vertex is $v$, with probability $\frac{|S_{vv}| - \dout_v}{|S_{vv}|}$, it returns the value $\text{RecursiveSolver}(v) = \frac{\text{sgn}(S_{vv}) b_v}{|S_{vv}| - \dout_v}$; with rest probability, it perform a simple random walk to a random neighbor $w$ and returns \[\text{sgn}(-S_{vv}S_{vw})\cdot \text{RecursiveSolver}(v) = \text{RecursiveSolver}(v),\]
    where the equation holds because $S_{vv} > 0$ and $S_{vw} < 0$.
    The algorithm performs the simple random walk because every edge in $G$ has the same weight and the graph $G$ is simple.
    
    Hence, the whole algorithm can be abstracted as follows. It starts from $u$. Suppose the current vertex is $v$, with probability $\frac{|S_{vv}| - \dout_v}{|S_{vv}|} = \frac{1}{|S_{vv}|}$, the whole algorithm terminates at vertex $v$ and returns the value $\frac{\text{sgn}(S_{vv}) b_v}{|S_{vv}| - \dout_v} = \mathbf{1}[v \in B]$; with rest probability, it performs a simple random walk step and recurses.
    If we can show that \Cref{alg:recursive} terminates in $B$ with a constant probability $\Omega(1)$, then we have $z_u=\Omega(1)$ because $\frac{\text{sgn}(S_{vv}) b_v}{|S_{vv}| - \dout_v} = 1$ for all $v\in B$.

    First, we bound the probability $p_1$ that the algorithm touches the vertex $w_{G'}$.
    Let $\mathcal{W}_1$ denote the simple random walk on $G'\cup B$ starting from $u$.
    Let $\mathcal{W}_2$ denote the simple random walk on $G'$ starting from $u$.
    Let $A$ denote the event that the simple random walk hits $w_{G'}$ within $4k/\gamma_{G_N^{\textnormal{ex}}}$ steps. Similar to the proof of \eqref{eq:same-prob} in Lemma~\ref{lem:rw}, we have $\Pr_{\mathcal{W}_1}[A] = \Pr_{\mathcal{W}_2}[A]$.
    By Lemma \ref{lem:hitting}, the hitting time $T_{G'}(v,w_{G'})\leq 2k/\gamma_{G_N^{\textnormal{ex}}}$. Using the Markov inequality,
    \[
        \Pr_{\mathcal{W}_1}[A] = \Pr_{\mathcal{W}_2}[A] = 1 - \Pr_{\mathcal{W}_2} \left[ \bar{A} \right]\geq 1-\frac{T_{G'}(v,w_{G'})}{4k/\gamma_{G_N^{\textnormal{ex}}}}\geq \frac12.
    \]
    Since in each step, the algorithm terminates independently with probability $\frac{|S_{vv}| - \dout_v}{|S_{vv}|} = \frac{1}{|S_{vv}|} \leq \frac{1}{dk+1}$.
    The algorithm touches vertex $w_G'$ and does not terminate with probability at least
    \[
        p_1\geq \Pr_{\mathcal{W}_1}[A]\times\left(1-\frac{1}{dk + 1}\right)^{\frac{2k}{\gamma_{G_N^{\textnormal{ex}}}}}\geq\frac{1}{4},
    \]
    where the last inequality holds because $\gamma_{G_N^{\textnormal{ex}}} \geq \frac{2}{3}$ and $d \geq 10$.
    With constant probability $p_2=\left(1-\frac{1}{k(d+1)+1}\right)\frac{1}{d+1}\geq\frac{1}{d+2}$ algorithm walks into $w_B \in B$ from $w_{G'}$. Then we bound the probability $p_3$ that algorithm terminates in $B$ conditional on the simple random walk hitting $w_B$.  By Lemma \ref{lem:mixing}, with probability at least $1/3$, the simple random walk stays in $B$ for at least $k/10$ steps, in each step, the algorithm terminates and outputs 1 independently with probability $\frac{|S_{vv}| - \dout_v}{|S_{vv}|} = \frac{1}{|S_{vv}|} \geq \frac{1}{(d+1)k+1}$. We have
    \[
    p_3\geq \frac{1}{3}\times\left(1-\left(1-\frac{1}{k(d+1)+1}\right)^{k/10}\right)\geq \frac{1}{3}\left(1-e^{-\frac{1}{20d}}\right).
    \]
   By the chain rule, we have $p=p_1\cdot p_2\cdot p_3\geq\frac{1-e^{-\frac{1}{20d}}}{12(d+2)}$. So, we have $z_u^*\geq p\cdot 1\geq c_0$.
\end{proof}

\subsection{The total variation distance between two query-answer histories}\label{sec:proof-tv}

\begin{proof}[\textbf{Proof of Lemma~\ref{lem:tv}}]
    Fix an algorithm $\mathcal{A}$ that asks $m \leq \alpha k$ queries. 
    Our goal is to bound the total variation distance between two query-answer histories $\mathcal{D}_0^\mathcal{A}$ and $\mathcal{D}_1^\mathcal{A}$, where for $i \in \{0,1\}$, the randomness of  $\mathcal{D}_i^\mathcal{A}$ comes from the randomness of $(S^G,b^{G\text{-}i})$ for $G \sim \mu_n$, the uniform random input vertex $u$ from $G'$, and the independent randomness inside the algorithm $\mathcal{A}$. For $i \in \{0,1\}$, let \[\mathcal{D}^\mathcal{A}_i = \{(q^i_0,a^i_0),(q^i_1,a^i_1),\ldots,(q^i_m,a^i_m)\},\] 
    where $q^i_0 = \emptyset$, $a^i_0 = u$ is the input, $q^i_t$ is the $t_{\text{th}}$ query and $a^i_t$ is the $t_{\text{th}}$ answer by the oracle $P_i$. For $i \in \{0,1\}$, for each time $t\in[m]$, after interaction with the oracle $P_i$ for $t$ times, the algorithm $\mathcal{A}$ knows a knowledge graph $G_t^i$, which contains all vertices and edges that appeared in the query-answer history $\mathcal{D}_i^\mathcal{A}$ up to time $t$. Note that for some vertices $v$ and all edges $vw$ in $G_t^i$, the algorithm knows $ \Dout_v $, $ \dout_v $, $ S_{vv} $, and $ b_v $ at vertex $v$ and the weight $S_{vw}$ of edge $vw$.
    
    We construct a coupling $\mathcal{C}$ between $\mathcal{D}_0^\mathcal{A}$ and $\mathcal{D}_1^\mathcal{A}$ and show that 
    $\Pr_{(X,Y)\sim\mathcal{C}}[X \neq Y] \leq 6\alpha$. Using the coupling inequality, the total variation distance is at most 
    \begin{align*}
        d_{TV}(\mathcal{D}_0^\mathcal{A},\mathcal{D}_1^\mathcal{A}) \leq \Pr_{(X,Y)\sim\mathcal{C}}[X \neq Y] \leq 6\alpha.
    \end{align*}
    The coupling $\mathcal{C}$ is defined as follows. We first sample $G \sim \mu_n$ and sample $u$ from $G'$ uniformly at random. We use the same graph $G$ to construct $(S^G,b^{G\text{-}0})$ and $(S^G,b^{G\text{-}1})$. Let $P_i$ be the oracle that answers the query of the algorithm $\mathcal{A}$ with $(S^G,b^{G\text{-}i})$.
    In every step $t \in [m]$, we couple $(a^0_t,q^0_t)$ and $(a^1_t,q^1_t)$ to maximize the probability that $(a^0_t,q^0_t) = (a^1_t,q^1_t)$.
    Let $u$ be the input of the algorithm $\mathcal{A}$. For each $t \in [m]$, each $i \in \{0,1\}$, define the bad event $\mathcal{E}_t^i$
    \begin{itemize}
        \item $\mathcal{E}_t^i$: $\mathcal{A}$ knows the vertex $w_{G'}$ or some vertex $w \in B$ from the knowledge graph $G_t^i$.
    \end{itemize}
    For any $0\leq t \leq m$, define the prefix of query-answer history generated by the coupling $\mathcal{C}$ up to time $t$ as follows
    \begin{align*}
    X_t = \{(q^0_0,a^0_0),(q^0_1,a^0_1),\ldots,(q^0_t,a^0_t)\},\quad
    Y_t = \{(q^1_0,a^1_0),(q^1_1,a^1_1),\ldots,(q^1_t,a^1_t)\}.
    \end{align*}
    Note that $X = X_m$ and $Y = Y_m$ for $(X,Y)\sim\mathcal{C}$.
    Initially, $q^0_0 = q^1_0 = \emptyset$, $a^0_0 = a^1_0 = u$ so that $X_0 = Y_0$. The event $\mathcal{E}_0^0\lor\mathcal{E}_0^1$ occurs only if $u = w_{G'}$, which happens with probability $\frac{1}{k}$.
    Define 
    \begin{align*}
        q_0 = \Pr\left[X_0 \neq Y_0 \lor \mathcal{E}_0^0\lor\mathcal{E}_0^1\right] = \frac{1}{k}.
    \end{align*}
    For any $t \in [m]$, we define the following quantity
    \begin{align*}
     q_t = \Pr\left[X_t \neq Y_t \lor \mathcal{E}_t^0 \lor \mathcal{E}_t^1 \mid X_{t-1} = Y_{t-1} \land \overline{\mathcal{E}_{t-1}^0} \land \overline{\mathcal{E}_{t-1}^1}\right].
    \end{align*}
    We now upper bound the value of $q_t$ for $t \in [m]$.
    Let $\mathcal{E}$ denote the event  $X_t \neq Y_t \lor \mathcal{E}_t^0 \lor \mathcal{E}_t^1$.
    Note that two instances differ only at the value of $b^{G\text{-}0}_v$ and $b^{G\text{-}1}_v$ for $v \in B$.
    Given the condition $X_{t-1} = Y_{t-1} \land \overline{\mathcal{E}_{t-1}^0} \land \overline{\mathcal{E}_{t-1}^1}$, two knowledge graphs $G_{t-1}^0$ and $G_{t-1}^1$ up to time $t-1$ are the same. We use $G_{t-1}$ to denote knowledge graphs $G_{t-1}^0 = G_{t-1}^1$. 
     Given the condition, two queries $q^0_t$ and $q^1_t$ can be coupled perfectly because they are a (possibly probabilistic) function of the query-answer history up to time $t-1$.
    Consider three different kind of queries that the algorithm $\mathcal{A}$ can ask.
    \begin{itemize}
        \item Vertex query. If $\mathcal{A}$ performs a vertex query on a vertex $v$ in $G_{t-1}$, then since the event $\mathcal{E}_{t-1}^0 \lor \mathcal{E}_{t-1}^1$ does not occur, the answer $a^0_t$ and $a^1_t$ are the same and the event $\mathcal{E}$ cannot occur. Suppose $\mathcal{A}$ performs a vertex query on a vertex $v$ not in $G_{t-1}$. Note that all the labels of vertices are assigned as a uniform random permutation of $[n]$, conditional on the knowledge graph $G_{t-1}$, the probability that the sampled vertex falls into $B \cup \{w_{G'}\}$ is at most $\frac{k+1}{n-t}$ because $G_{t-1}$ contains at most $t$ vertices. The answers $a^0_t$ and $a^1_t$ are the same if $v \notin B$. Hence, the probability that the event $\mathcal{E}$ occurs is at most $q_t \leq \frac{k+1}{n-t}$. We remark that the queried vertex $v$ is computed from the current query-answer history. The vertex may be deterministic or randomized. The analysis above only uses the randomness of the labels of the vertices.
        \item Neighbor and random walk queries. In the neighbor query, the algorithm queries the oracle with $(v,i)$, where $v$ is a vertex and $i \in [\Dout_v]$. 
        In the random walk query, the algorithm queries the oracle with $v$, the oracle returns a uniform random neighbor of $v$ if the out-degree of $v$ is not 0, otherwise, the oracle returns $\perp$. 
        Again, the query can be either deterministic or randomized, which is computed from the current query-answer history. We analyze two types of queries in a unified way.

        Since the answer depends only on graph structure, two answers $a^0_t$ and $a^1_t$ can also be coupled perfectly. We only need to bound the probability of $\mathcal{E}^0_t \lor \mathcal{E}^1_t$. Here are two cases.
        \begin{enumerate}
            \item  If the vertex $v$ is in $G_{t-1}$, then the algorithm may learn a new neighbor of the current knowledge graph. Since $w_{G'}$ is sampled uniformly at random from $G'$ where $|G'|=k$, the probability that the algorithm learns $w_{G'}$ is at most $\frac{1}{k-t}$. Conditional on $\overline{\mathcal{E}^0_{t-1}} \land \overline{\mathcal{E}^1_{t-1}}$, the algorithm cannot learn some $w \in B$ because $w_{G'}$ and $B$ are not in the current knowledge graph $G_{t-1}$.
            \item  If $v$ is not in $G_{t-1}$, then the algorithm learns at most two adjacent vertices $w$ and $v$, where $w$ is a random neighbor of $v$. 
            Note that $w \in B$ only if $v \in \{w_{G'}\}\cup B$.
            Let $H = G' \setminus (\{w_{G'}\} \cup G_{t-1})$ denote all vertices in $G'$ except $w_{G'}$ and vertices in knowledge graph $G_{t-1}$.
            Since we assume $v$ not in $G_{t-1}$, $w = w_{G'}$ occurs only if $v \in H$.
            Hence, the event $\mathcal{E}^0_t \lor \mathcal{E}^1_t$ occurs only if $v \in B \cup \{w_{G'}\}$ or $v \in H \land w = w_{G'}$.
            Since all the labels of vertices are assigned as a uniform random permutation of $[n]$, the probability that $v \in B \cup \{w_{G'}\}$ is at most $\frac{k+1}{n-t}$.
            Suppose $G_{t-1}$ contains $x$ vertices where $1\leq x \leq t$.
            The probability that $v \in H$ is $\frac{k-x-1}{n-x}$. Let us further condition on $v \in H$, which means the algorithm learns one more vertex in $G'$ and the algorithm still does not know $w_{G'}$.
            The vertex $w$ is another new vertex in $G'$ that is adjacent to $v$.
            Since $w_{G'}$ is sampled uniformly at random from $G'$, $w_{G'}$ is uniformly distributed among remaining vertices in $G'$. Hence, the probability that $w = w_{G'}$ is $\frac{1}{k-x-1}$.
            Hence, if $v$ is not a vertex in $G_{t-1}$, the probability of $\mathcal{E}_t^0 \lor \mathcal{E}_t^1$ is at most $\frac{k+1}{n-t}+\frac{k-x-1}{n-x}\cdot \frac{1}{k-x-1} = \frac{k+1}{n-t}+\frac{1}{n-x} \leq \frac{k+2}{n-t}$.
        \end{enumerate}
    \end{itemize}
    Combining all cases together, we have
    \begin{align*}
        q_t \leq \max\left\{ \frac{k+1}{n-t}, \frac{1}{k-t}, \frac{k+2}{n-t} \right\}\leq \frac{k+2}{n-t}+\frac{1}{k-t}.
    \end{align*}

    Finally, note that $X_0 = Y_0$, if $X_m \neq Y_m$, then we can find the first $0\leq t \leq m$ such that the event $X_t \neq Y_t \lor \mathcal{E}_t^0 \lor \mathcal{E}_t^1$ occurs. 
    Note that the value of $t$ depends on whether $X_t \neq Y_t$ or $\mathcal{E}_t^0 \lor \mathcal{E}_t^1$ first occurs. Such $t$ must exist because $X_t \neq Y_t$ must occur for some $t$.
    Taking a union bound over all possible $0\leq t \leq m$, we have
    \begin{align*}
        \Pr[X_m \neq Y_m] \leq \sum_{t=0}^m q_t \leq \frac{1}{k} + \sum_{t=1}^m \left( \frac{k+2}{n-t} + \frac{1}{k-t} \right) \leq \frac{1}{k} + \frac{(k+2)m}{n-m} + \frac{m}{k-m}.
    \end{align*}
    Note that $m = \alpha k$, $k < \sqrt{n}$, and $\alpha \leq \frac{1}{2}$ is a constant. For sufficiently large  $n$, the above bound is
    \begin{align*}
        \Pr[X_m \neq Y_m] \leq \frac{1}{k} +\frac{km}{n-m} + \frac{2m}{n-m} + \frac{\alpha}{1-\alpha} \leq (1+o(1))\alpha + o(1) + \frac{\alpha}{1-\alpha} \leq 6\alpha.
    \end{align*}
    This finishes the proof.
\end{proof}

\end{document}